\newenvironment{proof}{{\indent \it Proof:\quad}}{\hfill $\blacksquare$\par}
\newtheorem{theorem}{Theorem}
\newtheorem{assumption}{\textbf{Assumption}}
\newtheorem{lemma}{\textbf{Lemma}}
\newtheorem{remark}{\textbf{Remark}}
\begin{document}
\begin{sloppypar}%防止行溢出页边距

\bibliographystyle{IEEEtran}

\title{Lightweight Federated Learning over Wireless Edge Networks}
\author{Xiangwang Hou, \emph{Graduate Student Member, IEEE,}
        Jingjing Wang, \emph{Senior Member, IEEE},  \\ Jun Du, \emph{Senior Member, IEEE}, Chunxiao Jiang, \emph{Fellow, IEEE},   \\  Yong Ren, \emph{Senior Member, IEEE}, Dusit Niyato, \emph{Fellow, IEEE}
\thanks{X. Hou, J. Du, Y. Ren are with the Department of Electronic Engineering, Tsinghua University, Beijing, 100084, China. (E-mail: hxw21@mails.tsinghua.edu.cn, \{jundu, reny\}@tsinghua.edu.cn.)}
\thanks{J. Wang is with the School of Cyber Science and Technology, Beihang University, Beijing 100191, China. (Email: drwangjj@buaa.edu.cn)}
\thanks{C. Jiang is with the Tsinghua Space Center, Tsinghua University, Beijing, 100084, China. (E-mail: jchx@tsinghua.edu.cn).}
\thanks{D. Niyato is with the College of Computing and Data Science, Nanyang Technological University, Singapore 639798. (E-mail: dniyato@ntu.edu.sg)}
%\thanks{L. Hanzo is with the School of Electronics and Computer Science, University of Southampton, Southampton, SO17 1BJ, UK. (E-mail: lh@ecs.soton.ac.uk.)}
}

\markboth{}
{Hou \MakeLowercase{\textit{et al.}}: Lightweight Federated Learning over Wireless Edge Networks}

\IEEEtitleabstractindextext{

\begin{abstract}
With the exponential growth of smart devices connected to wireless networks, data production is increasing rapidly, requiring machine learning (ML) techniques to unlock its value. However, the centralized ML paradigm raises concerns over communication overhead and privacy. Federated learning (FL) offers an alternative at the network edge, but practical deployment in wireless networks remains challenging. This paper proposes a lightweight FL (LTFL) framework integrating wireless transmission power control, model pruning, and gradient quantization. We derive a closed-form expression of the FL convergence gap, considering transmission error, model pruning error, and gradient quantization error. Based on these insights, we formulate an optimization problem to minimize the convergence gap while meeting delay and energy constraints. To solve the non-convex problem efficiently, we derive closed-form solutions for the optimal model pruning ratio and gradient quantization level, and employ Bayesian optimization for transmission power control. Extensive experiments on real-world datasets show that LTFL outperforms state-of-the-art schemes.
\end{abstract}

\begin{IEEEkeywords}
Wireless edge networks, federated learning, gradient compression, model pruning, resource allocation.
\end{IEEEkeywords}}
\maketitle
\IEEEdisplaynontitleabstractindextext
\IEEEpeerreviewmaketitle

\section{Introduction}
Over the past few years, the cloud-based artificial intelligence (AI)  has achieved remarkable success, with devices continuously transmitting their collected data to cloud servers for centralized machine learning (ML) training. Recently, the proliferation of intelligent devices (e.g.,  smart phones, smart cars), results in a the phenomenal growth of the data generated at the edge of the networks, which led to a surging demand for AI services. However, the conventional cloud-based AI approach's substantial consumption of communication resources and potential privacy risks have increasingly made it unable to meet the growing service demands \cite{9060868}.

Fortunately, as the computational capabilities of the access points (APs) and devices continue to advance, on-demand computing services in the network edge can be performed \cite{8736011,10305704}.  Benefiting from this, implementing federated learning (FL) over the network edge and devices for training ML models becomes a more optimal choice compared with the cloud-based AI.  FL \cite{mcmahan2017communication} is a novel distributed ML paradigm that enables an edge server to coordinate massive devices to collaboratively train a shared ML model through interacting few parameters instead of the huge raw data, which endows it with inherent advantages in saving communication resources and protecting privacy.

{Despite the promising future of performing FL in the network edge, there still exist significant challenges when deploy FL in realistic wireless networks \cite{10038617}, \cite{9809924},\cite{10597905}.
\begin{itemize}
  \item  \emph{Unreliable wireless links}. Wireless channels are unreliable, resulting in occasional failures in parameter exchange between the edge server and devices.
  \item  \emph{Heavy communication overhead}. The communication overhead remains significant due to the increasing complexity of ML models, even though only parameter exchanges are required.
  \item \emph{Limited computational resources}.  Due to limited resources, training ML models with even millions of parameters on these resource-constrained devices can lead to unbearable delay and even breakdown.
\end{itemize}

Hence, there is an imminent necessity to devise a lightweight FL (LTFL) framework that adapts to the wireless edge networks.
{In this context, "lightweight" refers to an FL framework that efficiently reduces both computational and communication overhead while maintaining model performance. }}

\subsection{Related Works}
Some prior studies have endeavored to address the issues mentioned above separately.  Some early efforts have explored how wireless transmission affects the FL's convergence process.  Chen \emph{et al.}\cite{9210812} presented the theoretical analysis of the impact of both wireless transmission error and device selection on the convergence error and proposed a joint learning, wireless resource allocation, and device selection strategy for minimizing the training loss with time and energy consumption constraints.  Aiming to support FL in cell-free massive multiple-input multiple-output (MIMO) networks, Vu \emph{et al.} \cite{9124715} formulated an optimization problem to jointly schedule the local accuracy, transmission power, data rate, and user processing frequency.  Moreover,  considering the dynamic-changing network conditions, Xu \emph{et al.} \cite{9237168} presented a long-term device selection and resource allocation strategy for wireless FL.

Some works investigated communication compression to reduce the communication overhead. Li \emph{et al.} \cite{9488839} introduced top-$k$ sparsification to shorten the transmitted bits and proposed an energy-efficiency compression control algorithm to  flexibly decide the gradient sparsity and global update frequency. Moreover, Reisizadeh \emph{ et al.} \cite{8786146} designed a decentralized gradient descent (GD)  algorithm with quantization, where the participants execute local updates by considering the quantized information from their nearby devices. Our previous work in \cite{10368103} introduced a soft actor-critic-based approach for dynamic device selection, gradient quantization, and resource allocation to facilitate FL in dynamically changing network environments. Fan \emph{et al.} \cite{9912341} exploited the waveform superposition property of wireless channels and presented a 1-bit compression sensing scheme to reduce communication overhead. Furthermore, Sattler \emph{et al.} \cite{STC} proposed sparse ternary compression (STC), which extends top-$k$ gradient sparsification with downstream compression, ternarization, and Golomb encoding to further enhance communication efficiency in federated learning.

Recently, some studies exploited model pruning techniques to reduce the computational complexity of local training, thereby facilitating the deployment of FL on resource-constrained devices. Lin \emph{et al.} \cite{lin2020dynamic} introduced model pruning into FL and incorporated error feedback to improve the convergence performance. Taking the wireless transmission into consideration, Liu \emph{et al.} \cite{9598845} proposed to jointly schedule device selection, resource allocation, and model pruning to maximize the convergence rate with learning latency budget, while Jiang \emph{et al.} \cite{10050151} presented a multi-armed bandit based online algorithm to dynamically adjust the model pruning ratio to minimize the convergence time with accuracy guarantee. Furthermore, Prakash \emph{et al.} \cite{prakash2022iot} combined model pruning and gradient quantization to conceive an efficient FL, but ignored the impact of unreliable wireless transmission on FL's convergence.

\begin{table*}[t]
\centering
\caption{Comparison between the state-of-the-art literature and this paper.}\label{LiteratureComparison}
{
\begin{tabular}{cccccccccccccc}
\toprule
\textbf{Literature  }                  & \cite{9210812} & \cite{9124715} &   \cite{9237168} &\cite{9488839} & \cite{8786146} & \cite{10368103} &\cite{9912341} & \cite{STC} & \cite{lin2020dynamic} &  \cite{9598845} & \cite{10050151} &\cite{prakash2022iot} & \textbf{This paper} \\ \midrule
\midrule
\textbf{Wireless Resource Allocation} & \checkmark                                                                 & \checkmark                                                                                                                                & \checkmark                                                                & \checkmark                                                               &             &\checkmark                                                                      & \checkmark                                                                 &                         &                                                                           & \checkmark                                                                 & \checkmark                                                                  &                                                    & \checkmark          \\ \midrule
\textbf{Communication Compression }    &                                                                    &                                                                                                                                      &                                                                   & \checkmark                                                                 & \checkmark             &\checkmark                                                                 & \checkmark                                                                 & \checkmark    &                                                                       &                                                                    &                                                                       &\checkmark                                                                                & \checkmark \\ \midrule
\textbf{Model Pruning }                &                                                                    &                                                                                                                                   &                                                                   &                                                                   &                                &                                               &                                                                    &                        & \checkmark                                                                       & \checkmark                                                                 &\checkmark                                                                 & \checkmark                                                      & \checkmark         \\ \bottomrule
\end{tabular}
}
\end{table*}

%Some of them pay attention to performing optimal wireless resource allocation to improve the FL's convergence performance, such as \cite{9210812,9124715,9252924,9237168}, while some of them focus their attention on involving diversified communication compression techniques to relieve the communication bottleneck, such as sparsification \cite{9488839}, quantization \cite{8786146}, and compression sensing \cite{9912341}. Moreover, some works proposed to utilize model pruning to reduce the model complexity to make it suitable for resource-constrained devices, referring to \cite{lin2020dynamic,9598845,10050151,prakash2022iot}.
\subsection{Motivations and Contributions}
{Essentially, FL entails multiple stages of computation and communication, making it inadequate to separately optimize the learning process from communication or computing perspectives. Hence, a lightweight design is required, where computational and communication efficiency are jointly improved. For instance, if a device has limited computational capability but favorable communication conditions, allocating more precise quantization bit widths and communication resources may yield marginal improvements in convergence speed. In contrast, improving the pruning  level of the ML model deployed on the aforementioned device would bring substantial enhancement of FL's convergence performance. Therefore,  a joint optimization from the aspects of computation and transmission is essential.

However, the existing literature lacks comprehensive investigation jointly considering unreliable wireless transmission, communication compression, and model pruning. To fulfill this gap, this paper proposes a LTFL framework encompassing wireless resource allocation, model pruning, and gradient quantization. TABLE \ref{LiteratureComparison} summarizes the comparison of the aforementioned literature and this paper.}

Specifically, the contributions are summarized as follows:

\begin{itemize}
  \item We propose a lightweight FL framework, LTFL, to accommodate resource-constrained devices and unreliable and scarce wireless communication resources. As far as we know, this is the first attempt to incorporate model pruning, quantization, and wireless resource allocation for wireless FL.
  \item We derive a closed-form expression of the upper bound of the impact of model pruning error, gradient quantization error and transmission error on the convergence process of FL. Guided by this findings, we jointly optimize the model pruning ratio, gradient quantization level, and transmission power control to reduce the training loss while satisfying the delay and energy consumption constraints.
   \item  Due to the formulated problem is non-convex, we decompose it into subproblems and design a two-stage algorithm to solve them. Firstly, we derive the closed-form optimal solutions of the pruning ratio and gradient quantization level. Then, we design a Bayesian optimization-based algorithm to obtain the optimal transmission power. Experimental results demonstrate that the proposed scheme performs better in reducing delay and saving energy than state-of-the-art schemes.
\end{itemize}

The rest of the paper is organized as follows. Section II reviews the state-of-the-art literature. Section III and Section IV present the system model and convergence analysis of FL, respectively. Section V demonstrates the performance evaluation and problem formulation, while Section VI introduces the algorithm. Section VII shows the experimental results, and Section VIII concludes the paper.

\section{System Model}
In the LTFL system, as depicted in Fig. \ref{LTFLarchitecture}, there exists an AP paired with an edge server and a set of $U$  devices denoted as $\mathcal{U}=\{1,2,\dots,u,\dots, U\}$ to perform FL. Each device $u$ has collected $N_u$ data samples represented by $\mathcal{D}_u =\{(\boldsymbol{x}_{u, 1}, \boldsymbol{y}_{u, 1}), (\boldsymbol{x}_{u, 2}, \boldsymbol{y}_{u, 2}),\dots, (\boldsymbol{x}_{u, N_u},\boldsymbol{y}_{u, N_u})\}$. The pair $\left(\boldsymbol{x}_{u, i}, \boldsymbol{y}_{u, i}\right)$ comprises a data sample $\boldsymbol{x}_{u, i} \in\mathbb{R}^{N_{in}\times 1}$ and its corresponding label $\boldsymbol{y}_{u, i}$, where $\mathbb{R}^{N_{in}\times1}$ indicates that the data sample consists of $N_{in}$ features. To reduce the computation cost, each device employs model pruning to simplify the model by eliminating unimportant computations. The devices then train their models in parallel and obtain their local gradients. Additionally,  aiming to reduce communication overhead, the devices use lower-precision representations to quantize their local gradients and  subsequently transmit them to the edge server. Noted that potential transmission errors may result in some updated gradients not being successfully received by the edge server. Once the edge server receives the quantized local gradients, it executes gradient aggregation and updates the global model accordingly. In the final, the edge server sends the newly updated model to the devices for subsequent iterations.

\begin{figure}
  \centering
  \includegraphics[width=9cm]{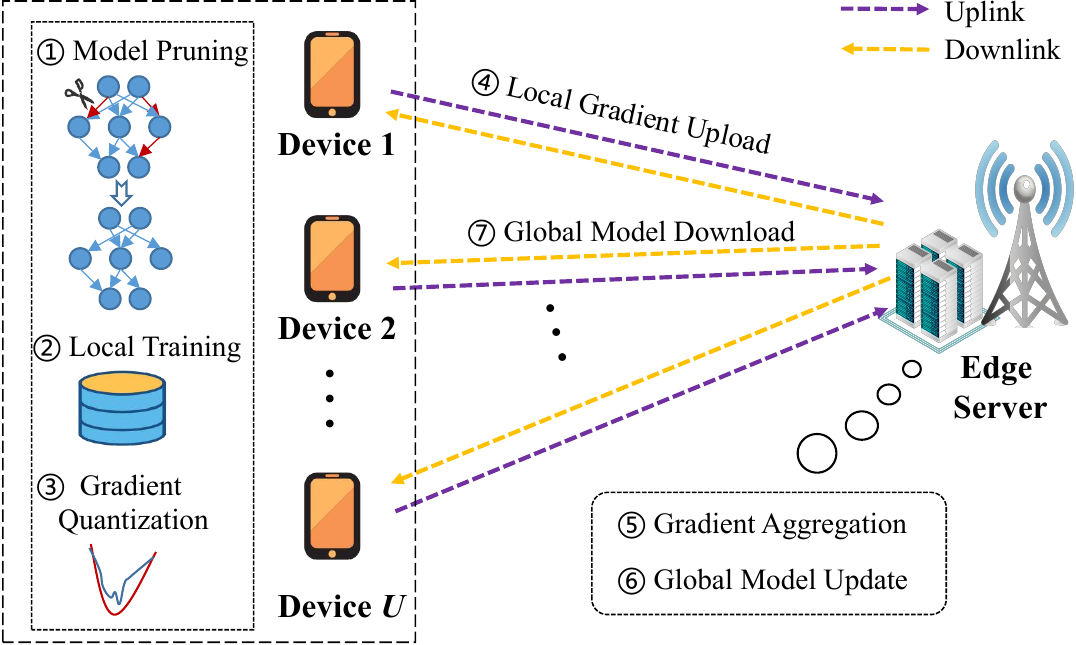}
  \caption{The architecture of LTFL over wireless networks.}  \label{LTFLarchitecture}
\end{figure}
\subsection{Transmission Model}
In the uplink transmission, the classical orthogonal frequency division multiple access (OFDM) is employed \cite{wang2024generative}. During the $n$-th iteration, the data rate from device $u$ to the edge server is given by  \cite{9292468}:
\begin{equation}\label{datarate}
R_u^{\mathrm{n}}\left( p^{n}_u\right)= B_u^{\mathrm{UL}} \mathbb{E}_{h^{n}_u}\left(\log _2\left(1+\frac{p^{n}_u h^{n}_u}{I^{n}_u+B_u^{\mathrm{UL}} N_0}\right)\right),
\end{equation}
where $\mathbb{E}_{h^{n}_u}(\cdot)$ denotes the mathematical expectation in relation to $h^{n}_u$. $p^{n}_u$ characterizes the transmission power of device $u$, and $B_u^{\mathrm{UL}}$ is the allocated bandwidth for device $u$. $I^n_u$ signifies the interference, while $N_0$ represents the power spectral density of the noise. The channel gain $h^{n}_u$ is determined by
\begin{equation}\label{ChannelFading}
h^{n}_u = \varpi^{n}_u (d^{n}_u)^{-2},
\end{equation}
where  $d^{n}_u$ represents the distance between device $u$ and the edge server \cite{9210812, 10415249}, and $\varpi^{n}_u$ corresponds to the Rayleigh fading coefficient.

%\subsection{Impact of Wireless Transmission on Local Gradient Uploading}
{In wireless communication, the presence of communication errors is an inherent reality that deviates from the hypothetical ideal conditions. Each device's local gradient is transmitted as a single packet, and a cyclic redundancy check (CRC) mechanism is employed to examine its content for potential errors. The packet error rate from device $u$ to the edge server is expressed as
\begin{equation}
q_u^{n}\left( p_u^{n}\right)= \mathbb{E}_{h_u^{n}}\left(1-\exp \left(-\frac{\Upsilon \left(I_u^{n}+B_u^{\mathrm{UL}} N_0\right)}{p_u^{n} h_u^{n}}\right)\right),
\end{equation}
where $\Upsilon$ is the waterfall threshold \cite{9658206}.
Let $\alpha_u^{n}$ indicate whether the local gradient $\boldsymbol{g}^{n}_u$ from device $u$ is received by the edge server at the $n$-th iteration, which is given by
\begin{equation}
\label{TransmittingSuccessfulyPro}
\alpha_u^{n}= \begin{cases}1, & \text { w.p. } 1-q_u^{n}\left(p_u^{n}\right), \\ 0, &\text { w.p. } q_u^{n}\left( p_u^{n}\right), \end{cases}
\end{equation}
where “w.p.” is the abbreviation of “with probability”.}

\subsection{Wireless Federated Learning Model}
The loss function to evaluate the training error  is given by
\begin{equation}
\label{LossFuction_Local}
F_{u}\left(\boldsymbol{w}_{u}\right)=\frac{1}{N_{u}} \sum_{i=1}^{N_{u}} f\left(\boldsymbol{w}_{u} ; \boldsymbol{x}_{u, i}, \boldsymbol{y}_{u, i}\right), \quad \forall u \in \mathcal{U},
\end{equation}
where  $\boldsymbol{w}_{u}$ denotes the local ML model's parameters of device $u$, and $f\left(\boldsymbol{w}_{u}; \boldsymbol{x}_{u, i}, \boldsymbol{y}_{u, i}\right)$ is the defined loss function determined by a specific ML task.

{
Moreover, let $N=\sum_{u=1}^{U}N_{u}$ and $\boldsymbol{w}$ represent the parameters of the global ML model. Then the loss function of the global ML model can be given by
% \begin{equation}
% \label{GlobalLossFuction}
% F\left(\boldsymbol{w}\right) \triangleq \sum_{u=1}^{U}  \frac{{N_{u}F_{u}\left(\boldsymbol{w}\right)}}{N}=\frac{1}{N}\sum_{u=1}^{U} \sum_{i=1}^{N_{u}} f\left(\boldsymbol{w}_u ; \boldsymbol{x}_{u, i}, \boldsymbol{y}_{u, i}\right).
% \end{equation}
\begin{equation}
\label{GlobalLossFuction}
\begin{aligned}
% F\left(\boldsymbol{w}\right) & \triangleq  \frac{\sum_{u=1}^{U}{N_{u}\alpha_u F_{u}\left(\boldsymbol{w}\right)}}{\sum_{u=1}^{U}N_{u}\alpha_u} \\
% &=\frac{\sum_{u=1}^{U}\alpha_u \sum_{i=1}^{N_{u}} f\left(\boldsymbol{w}_u ; \boldsymbol{x}_{u, i}, \boldsymbol{y}_{u, i}\right)}{\sum_{u=1}^{U}N_{u}\alpha_u}.
F\left(\boldsymbol{w}\right) & \triangleq  \frac{\sum_{u=1}^{U}{N_{u} F_{u}\left(\boldsymbol{w}_u\right)}}{\sum_{u=1}^{U}N_{u}} \\
&=\frac{\sum_{u=1}^{U}\sum_{i=1}^{N_{u}} f\left(\boldsymbol{w}_u ; \boldsymbol{x}_{u, i}, \boldsymbol{y}_{u, i}\right)}{\sum_{u=1}^{U}N_{u}}.
\end{aligned}
\end{equation}}
Actually, the whole process of FL is to address the following optimization problem
\begin{equation}
\label{MinTheGlobalLossFunction}
\boldsymbol{P}{1}: \boldsymbol{w}^{*}=\arg \min F\left(\boldsymbol{w}\right).
\end{equation}
%where $\boldsymbol{w}^{*}=\boldsymbol{w} =\boldsymbol{w}_{1}= \dots = \boldsymbol{w}_{U}$.

%For obtaining the optimal model parameters $\boldsymbol{w}$, conventional centralized training approaches rely on aggregating the raw data from each device, which has great risk of information leakage of each device. To deal with it, federated learning is adopted.
Aim for  solving $\boldsymbol{P}{1}$,  each device performs GD algorithm in parallel. Since the local model is synchronized with the global model at the $n$-th round, it holds that $\boldsymbol{w}^n_u = \boldsymbol{w}^n, \forall u \in \mathcal{U}$. Each device then computes its local gradient by
\begin{equation}
\label{LocalGradientUpdating}
\boldsymbol{g}_u^n (\boldsymbol{w}^{n})=\frac{1}{N_u} \sum_{i=1}^{N_u} \nabla f\left(\boldsymbol{w}^{n} ; \boldsymbol{x}_{u, i}, \boldsymbol{y}_{u, i}\right), \quad u=1, \ldots, U.
\end{equation}
After that, each device transmits the newly obtained gradients to the edge server. Then, the edge server aggregates the received gradients as
 \begin{equation}
 \label{GlobalGradientAggregation}
\boldsymbol{g}^{n}(\boldsymbol{w}^{n}; \{p_u^{n}\})=\frac{\sum\limits_{u=1}^U N_u \alpha_u^{n} \boldsymbol{g}_u^{n} (\boldsymbol{w}^{n})}{\sum\limits_{u=1}^U N_u \alpha_u^{n}},
\end{equation}
and performs the global ML model update by
\begin{equation}
\label{LocalModelUpdating}
\boldsymbol{w}^{n+1}=\boldsymbol{w}^n-\eta \boldsymbol{g}^{n}(\boldsymbol{w}^{n}; \{p_u^{n}\}),
\end{equation}
where $\eta$ denotes the given learning rate. By solving Eq. (\ref{LocalGradientUpdating}) - Eq. (\ref{LocalModelUpdating}) iteratively,  the optimal FL model can be obtained finally.

\subsection{Model Pruning}
{For the purpose of avoiding weak devices from blocking the learning process, model pruning is introduced. Model pruning involves the removal of insignificant connections or filters to compress the model, thereby reducing computations with minimal loss in accuracy. According to \cite{9598845}\cite{molchanov2019importance}, at the $n$-th iteration, we can evaluate the importance of the $v$-th parameter  by
\begin{equation}
\label{ImportanceEvaluation}
I^n_{u,v}=\left(F^n_{u}\left(\mathcal{D}_{u}, \boldsymbol{w}^n_{u}\right)-F^n_{u}\left(\mathcal{D}_{u},\left.\boldsymbol{w}^n_{u}\right|_{w^n_{u,v}=0}\right)\right)^2,
\end{equation}
which is essentially the mean squared error with and without $w^n_{u,v}$. The smaller the value of $I^n_{u,v}$ is, the less important the  $w^n_{u,v}$.
Nevertheless, the calculation for obtaining the mean squared error $I^n_{u,v}$ is computation-intensive, especially when the model is large.

Therefore,  to alleviate the computational complexity, the approximate to Eq. (\ref{ImportanceEvaluation}) is introduced, which is denoted by
\begin{equation}
I^n_{u,v}=\left\| w^n_{u,v}\right\|.
\end{equation}
}

After the importance evaluation,  model pruning is performed by zeroing out the parameters having relatively low importance values. We define the pruning ratio $\rho^{n}_u$ of  device $u$ as
\begin{equation}
\rho^{n}_u=\frac{V^{n}_{u}}{V},
\end{equation}
where $V^{n}_u$ represents the number of the pruned parameters of the deployed model on device $u$. Let $\boldsymbol{\hat{w}}$ denote the pruned model, and ${\boldsymbol{g}}^{n}_u (\boldsymbol{\hat{w}}_u^{n})$ represent the local gradient computed based on the pruned model $\boldsymbol{\hat{w}}^n_{u}$. Hence, the processes to aggregate the local gradients and updating the global model, as shown in Eq. (\ref{GlobalGradientAggregation}) and Eq. (\ref{LocalModelUpdating}), can be reformulated as follows:
 \begin{equation}
\label{GlobalGradientAggregationwithPruning}
\boldsymbol{{g}}^{n}(\{\hat{\boldsymbol{w}}_u^{n}\}; \{\rho^{n}_u\}, \{p_u^{n}\})=\frac{\sum\limits_{u=1}^U N_u \alpha_u^{n} \boldsymbol{{g}}_u^{n} (\boldsymbol{\hat{w}}_u^{n})}{\sum\limits_{u=1}^U N_u \alpha_u^{n}},
\end{equation}
and
\begin{equation}
\label{LocalModelUpdatingwithPruning}
\boldsymbol{w}^{n+1}=\boldsymbol{w}^n-\eta \boldsymbol{{g}}^{n}(\{\hat{\boldsymbol{w}}_u^{n}\}; \{\rho^{n}_u\}, \{p_u^{n}\}),
\end{equation}
respectively. In addition to the local gradients, it is essential to transmit the pruning indices to the edge server for accurate aggregation.

\subsection{Gradient Quantization}
{
Considering the constraints on communication resources, reducing the amount of transmitted data is beneficial. To achieve this, we employ stochastic quantization \cite{9611373} for compressing the local gradients before sending them to the edge server.

Consider \( g^n_{u,v} \) as the \( v \)-th component of \( \boldsymbol{{g}}_u^{n}(\hat{\boldsymbol{w}}_u^n),  v \in \left\{1,2,\dots, V\right\} \), constrained by \( \left|g_{u,v}^{n}\right| \in \left[\underline{g}_{u,v}^n, \bar{g}_{u,v}^n\right] \). Here, \( \underline{g}_{u,v}^n \) and \( \bar{g}_{u,v}^n \) represent the minimum and maximum values of the \( v \)-th component of the gradient, respectively. Let \( \mathcal{Q} \) denote the quantization function and \( \delta_{u}^{n} \) the number of quantization bits for device \( u \). To perform quantization, the interval \( \left[\underline{g}_{u,v}^n, \bar{g}_{u,v}^n\right] \) is uniformly divided into \( 2^{\delta_{u}^{n}} \) segments, defined by the set \( \left\{b_0, b_1, \cdots, b_{2^{\delta_{u}^{n}}-1}\right\} \), where \( b_t \) is calculated as

\begin{equation}
b_t=\underline{g}_{u,v}^n+t \cdot \frac{\bar{g}_{u,v}^n-\underline{g}_{u,v}^n}{2^{\delta_{u}^{n}}-1}, \quad t=0, \cdots, 2^{\delta_{u}^{n}}-1.
\end{equation}
Furthermore, $g_{u,v}^{n}$  falling within the interval   $\left[b_t, b_{t+1}\right)$ can be quantized as
\begin{equation}
\mathcal{Q}\left(g_{u,v}^{n}\right)=\left\{\begin{array}{l}
\operatorname{sign}\left(g_{u,v}^{n}\right)\cdot b_t, \quad ~ \text { w.p. } \frac{b_{t+1}-\left|g_{u,v}^{n}\right|}{b_{t+1}-b_t}, \\
\operatorname{sign}\left(g_{u,v}^{n}\right) \cdot b_{t+1}, \quad  \text {w.p.} \frac{\left|g_{u,v}^{n}\right|-b_t}{b_{t+1}-b_t},
\end{array}\right.
\end{equation}
where \(\operatorname{sign}\left(\cdot\right)\) denotes the sign function, and “w.p.” is the abbreviation of “with probability”. Let \(\xi\) indicate the total number of bits required to encode \(\underline{g}_{u,v}^n\), \(\bar{g}_{u,v}^n\), and \(\operatorname{sign}\left(g_{u,v}^{n}\right)\). Therefore, the total bit count for the quantized local gradient is determined as
\begin{equation}\label{QuantizationRelationship}
\tilde{\delta_{u}^{n}} = V\delta_{u}^{n}+\xi.
\end{equation}

 Therefore,  Eq. (\ref{GlobalGradientAggregationwithPruning})  and Eq. (\ref{LocalModelUpdatingwithPruning})  can be further rewritten as
 \begin{equation}
 \label{GlobalGradientAggregationReality}
\boldsymbol{\overline{g}}^{n}(\{\hat{\boldsymbol{w}}_u^{n}\}; \{\rho_u^{n}\},\{\delta_u^{n}\},\{p_u^{n}\})=\frac{\sum\limits_{u=1}^U N_u \alpha_u^{n} \mathcal{Q}\left({\boldsymbol{g}}^{n}_u(\boldsymbol{\hat{w}}^n_u)\right)}{\sum\limits_{u=1}^U N_u \alpha_u^{n} },
\end{equation}
and
\begin{equation}
\label{ModelUpdateReality}
\boldsymbol{{w}}^{n+1}=\boldsymbol{w}^n-\eta \boldsymbol{\overline{g}}^{n}(\{\hat{\boldsymbol{w}}_u^{n}\}; \{\rho_u^{n}\},\{\delta_u^{n}\},\{p_u^{n}\}),
\end{equation}
respectively.}
\begin{remark}
The gradient aggregation in Eq. (\ref{GlobalGradientAggregationReality}) exhibits a loss of accuracy due to pruning error, quantization error, and transmission error, which contrasts with the ideal gradient aggregation. As a result, this degrades the FL's convergence performance.
\end{remark}

\section{Convergence Analysis of FL}
In this section, we present a comprehensive analysis to assess the influence of model pruning error, transmission error, and quantization error on the FL's convergence.
\subsection{Basic Assumptions}
{Prior to delving into the convergence analysis, we first introduce widely accepted assumptions, as outlined below:
\begin{itemize}
  \item   \begin{assumption}\label{CompleteAssumption1}
        $\nabla F(\boldsymbol{w})$ is uniformly $L$-Lipschitz continuous in reference to $\boldsymbol{w}$, which is represented as
\begin{equation}
\left \|\nabla F\left(\boldsymbol{w}^{n+1}\right)-\nabla F\left(\boldsymbol{w}^{n}\right)\right \| \leq L\left \|\boldsymbol{w}^{n+1}-\boldsymbol{w}^{n}\right \|,
\end{equation}
where $L$ is the Lipschitz constant associated with $F(\cdot)$.
  \end{assumption}
% \item
%   \begin{assumption}\label{CompleteAssumption2}
% $F\left(\boldsymbol{w}\right)$ is $\gamma$-strongly convex, satisfying
%  \begin{equation}
%  \begin{aligned}
% F\left(\boldsymbol{w}^{n+1}\right)  \geq  & F\left(\boldsymbol{w}^{n}\right)+ \left(\boldsymbol{w}^{n+1}-\boldsymbol{w}^{n}\right)^{\rm T} \nabla F\left(\boldsymbol{w}^{n}\right)\\ & +\frac{\gamma}{2}\left\|\boldsymbol{w}^{n+1}-\boldsymbol{w}^{n}\right\|^{2},
% \end{aligned}
% \end{equation}
% where $\gamma$ is determined by $F(\cdot)$.
%   \end{assumption}
\item
\begin{assumption}\label{CompleteAssumption3}
 $\nabla F(\boldsymbol{w})$ is twice-continuously differentiable. Given Assumption \ref{CompleteAssumption1} , we can obtain
\begin{equation}
\label{Assumtion3}
\nabla^{2} F(\boldsymbol{w}) \preceq L \boldsymbol{I},
\end{equation}
where $\boldsymbol{I}$ denotes an identity matrix.
  \end{assumption}
 \item

   \begin{assumption}\label{CompleteAssumption4}
  The second moments of parameters are constrained by
%     \begin{equation}
% \mathbb{E}\left\{\left\|\nabla f\left(\boldsymbol{w}, \boldsymbol{x}_{u,i}, y_{u,i}\right)\right\|^2\right\} \leq A^2,
% \end{equation}
% and
 \begin{equation}
\mathbb{E}\left\{\|\boldsymbol{w}\|^2\right\} \leq D^2.
\end{equation}
  \end{assumption}
\item
  \begin{assumption}\label{CompleteAssumption5}
  It is assumed that
\begin{equation}\label{EqAssumption5}
\left\|\nabla f\left(\boldsymbol{w}^{n}; \boldsymbol{x}_{u,i}, \boldsymbol{y}_{u,i}\right)\right\|^2 \leq {\upsilon_{1}+\upsilon_{2}\left\|\nabla F\left(\boldsymbol{w}^{n}\right)\right\|^{2}},
\end{equation}
where $\upsilon_{1},\upsilon_{2}\geq0$.

    \end{assumption}
\end{itemize}

It is worth noting that many previous works adopt the aforementioned assumptions for theoretical analysis \cite{9912341},\cite{9292468},\cite{10229183}. Nevertheless, in fact, these strategies remain effective even when the assumptions do not hold completely. }

\subsection{Convergence Analysis}
{First, we present the following Lemmas \ref{QuantizationLemma} to \ref{ModelPruningLemma} to characterize the impact of quantization and model pruning.}
{
\begin{lemma}
\label{QuantizationLemma}
Through stochastic quantization, each quantized gradient satisfies\cite{quantize}\cite{amiri2020federated}
\begin{equation}\label{UnbiasedQuantization}
\mathbb{E}\left[\mathcal{Q}\left(\boldsymbol{g}_u^{n}(\hat{\boldsymbol{w}}_u^n)\right)\right]=\boldsymbol{g}_u^{n}(\hat{\boldsymbol{w}}_u^n),
\end{equation}
 and consequently the quantization error satisfies 
\begin{equation}
\label{QuantizationErrorBound}
\mathbb{E}\left[\left\|\mathcal{Q}\left(\boldsymbol{g}_u^{n}(\hat{\boldsymbol{w}}_u^n)\right)-\boldsymbol{g}_u^{n}(\hat{\boldsymbol{w}}_u^n)\right\|^2\right] \leq \frac{\sum\limits_{v=1}^V\left(\bar{g}_{u,v}^n-\underline{g}_{u,v}^n\right)^2}{4\left(2^{\delta_u^n}-1\right)^2}.
\end{equation}
\end{lemma}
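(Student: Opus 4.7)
The plan is to establish both claims of Lemma~\ref{QuantizationLemma} by working one coordinate at a time and then combining the results, since the stochastic rounding described above is performed independently on each of the $V$ components of $\boldsymbol{g}_u^n(\hat{\boldsymbol{w}}_u^n)$.

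For the unbiasedness assertion in Eq.~(\ref{UnbiasedQuantization}), I would fix a coordinate $v$ and suppose $|g_{u,v}^n| \in [b_t, b_{t+1})$ for some $t$. Direct substitution into the definition of $\mathcal{Q}$ yields
\[
\mathbb{E}\!\left[\mathcal{Q}(g_{u,v}^n)\right] = \operatorname{sign}(g_{u,v}^n)\left(b_t \cdot \frac{b_{t+1}-|g_{u,v}^n|}{b_{t+1}-b_t} + b_{t+1}\cdot \frac{|g_{u,v}^n|-b_t}{b_{t+1}-b_t}\right).
\]
Expanding the numerator collapses the cross-terms, leaving $|g_{u,v}^n|(b_{t+1}-b_t)$, so the bracketed quantity equals $|g_{u,v}^n|$; multiplying by $\operatorname{sign}(g_{u,v}^n)$ then recovers $g_{u,v}^n$ itself. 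Since this holds for every coordinate, the vector identity Eq.~(\ref{UnbiasedQuantization}) follows directly.

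For the variance bound in Eq.~(\ref{QuantizationErrorBound}), I would invoke the elementary fact that a random variable supported on two points $b_t$ and $b_{t+1}$ with mean $|g_{u,v}^n|$ has variance $(|g_{u,v}^n|-b_t)(b_{t+1}-|g_{u,v}^n|)$. Applying $4xy \leq (x+y)^2$ with $x=|g_{u,v}^n|-b_t$ and $y=b_{t+1}-|g_{u,v}^n|$ upper-bounds this variance by $(b_{t+1}-b_t)^2/4$, and substituting the uniform grid spacing $b_{t+1}-b_t = (\bar{g}_{u,v}^n-\underline{g}_{u,v}^n)/(2^{\delta_u^n}-1)$ gives a per-coordinate bound of $(\bar{g}_{u,v}^n-\underline{g}_{u,v}^n)^2 / [4(2^{\delta_u^n}-1)^2]$.

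Finally I would assemble the full claim. Because each per-coordinate quantization noise $\mathcal{Q}(g_{u,v}^n)-g_{u,v}^n$ is zero-mean (by the first part) and the noises are independent across $v$, the expected squared Euclidean norm of the error decomposes as a sum of coordinate-wise second moments. Summing the per-coordinate bound over $v=1,\dots,V$ reproduces exactly the right-hand side of Eq.~(\ref{QuantizationErrorBound}). No step is genuinely delicate; the only item worth flagging is the AM--GM-type inequality used to pass from the exact two-point variance to the $(b_{t+1}-b_t)^2/4$ bound, and this is standard, so I anticipate no real obstacle beyond careful bookkeeping of the signs and the grid endpoints.
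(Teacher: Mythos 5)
Your proof is correct. The paper itself does not prove this lemma---it simply defers to an external reference---so your argument is a complete, self-contained derivation of what the paper outsources, and it is the standard one: per-coordinate unbiasedness of two-point stochastic rounding by direct expectation, the exact two-point variance $(|g_{u,v}^n|-b_t)(b_{t+1}-|g_{u,v}^n|)$ bounded by $(b_{t+1}-b_t)^2/4$ via AM--GM, substitution of the uniform grid spacing $(\bar{g}_{u,v}^n-\underline{g}_{u,v}^n)/(2^{\delta_u^n}-1)$, and summation over coordinates. One small remark: in the final assembly step you invoke zero-mean noise and independence across coordinates to decompose $\mathbb{E}\bigl[\|\mathcal{Q}(\boldsymbol{g}_u^n)-\boldsymbol{g}_u^n\|^2\bigr]$ into a sum of per-coordinate second moments, but neither property is needed---the squared Euclidean norm is by definition the sum of squared components, so linearity of expectation alone gives the decomposition. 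This is an over-justification rather than an error, and the conclusion stands.
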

\begin{proof}
See the detail proof in \cite{9277666}.
\end{proof}}
%\begin{remark}
%Lemma \ref{QuantizationLemma} implies that increasing the number of quantization bits used to approximate local gradients results in a reduction of quantization error. This motivates us to allocate larger numbers of quantization bits to devices with superior channel conditions and more abundant energy resources.
%\end{remark}

\begin{lemma}
\label{ModelPruningLemma}
The model pruning error with pruning ratio $\rho_u^n $ is represented as
\begin{equation}
\label{ModelPruningErrorBound}
\mathbb{E}\left\{\left\|\boldsymbol{w}_u^n-\hat{\boldsymbol{w}}_u^n\right\|^2\right\} \leq \rho_u^n \mathbb{E}\left\{\left\|\boldsymbol{w}_u^{n}\right\|^2\right\} \leq \rho_u^n D^2.
\end{equation}
\end{lemma}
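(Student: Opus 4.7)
The plan is to unpack the pruning operation in terms of the importance rule $I^n_{u,v} = \|w^n_{u,v}\|$ and then reduce the claim to a simple order-statistics inequality before finally invoking Assumption~\ref{CompleteAssumption4}. First I would let $\mathcal{S}^n_u$ denote the index set of the $V^n_u = \rho_u^n V$ coordinates that get zeroed out on device $u$ at round $n$. By the definition of $\hat{\boldsymbol{w}}_u^n$, the difference $\boldsymbol{w}_u^n-\hat{\boldsymbol{w}}_u^n$ is the vector that agrees with $\boldsymbol{w}_u^n$ on $\mathcal{S}^n_u$ and is zero elsewhere, so
\begin{equation}
\bigl\|\boldsymbol{w}_u^n-\hat{\boldsymbol{w}}_u^n\bigr\|^2 \;=\; \sum_{v\in\mathcal{S}^n_u}\bigl(w^n_{u,v}\bigr)^2.
\end{equation}

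Next, I would exploit the fact that $\mathcal{S}^n_u$ collects precisely the coordinates with the $V^n_u$ smallest importance scores, i.e.\ with the $V^n_u$ smallest values of $(w^n_{u,v})^2$. The key combinatorial observation is that for any nonnegative numbers $a_1,\ldots,a_V$, the average of the $k$ smallest is no greater than the overall average; in particular,
\begin{equation}
\sum_{v\in\mathcal{S}^n_u}\bigl(w^n_{u,v}\bigr)^2 \;\leq\; \frac{|\mathcal{S}^n_u|}{V}\sum_{v=1}^{V}\bigl(w^n_{u,v}\bigr)^2 \;=\; \rho_u^n\,\bigl\|\boldsymbol{w}_u^n\bigr\|^2,
\end{equation}
where I would justify the inequality pointwise by noting that each $(w^n_{u,v})^2$ with $v\in\mathcal{S}^n_u$ is dominated by the average of the retained coordinates.

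Finally I would take expectations on both sides, which preserves the inequality, and then apply Assumption~\ref{CompleteAssumption4} (with $\boldsymbol{w}=\boldsymbol{w}_u^n$) to conclude $\mathbb{E}\{\|\boldsymbol{w}_u^n\|^2\}\leq D^2$, yielding the two-sided chain claimed in the lemma. The argument is almost entirely structural, so I do not anticipate a serious obstacle; the only point that deserves a careful sentence is the order-statistics step, since it is what actually converts the importance criterion $I^n_{u,v}=\|w^n_{u,v}\|$ into a bound on the squared pruning perturbation, and one should verify that the approximate importance rule (rather than the exact mean-squared-error definition) is what is being used here so that the pruned coordinates indeed correspond to the smallest $|w^n_{u,v}|$.
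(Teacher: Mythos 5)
Your proof is correct. The paper itself does not prove this lemma---it simply cites Stich \emph{et al.} (sparsified SGD)---and your argument is precisely the standard one underlying that citation: with the magnitude-based importance rule, the pruned set consists of the $V_u^n=\rho_u^n V$ coordinates of smallest $(w^n_{u,v})^2$, the sum of the $k$ smallest of $V$ nonnegative numbers is at most $\tfrac{k}{V}$ of the total, and the second-moment bound $\mathbb{E}\{\|\boldsymbol{w}\|^2\}\le D^2$ finishes the chain. The only cosmetic quibble is your justification of the order-statistics step: each pruned coordinate is dominated by each \emph{retained} coordinate (hence by their minimum, hence by the overall average), which is the cleaner way to phrase what you wrote; the conclusion is unaffected.
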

\begin{proof}
See the detail proof  in  \cite{stich2018sparsified}.
\end{proof}

Based on Assumptions \ref{CompleteAssumption1} to \ref{CompleteAssumption5} and  Lemmas \ref{QuantizationLemma} and \ref{ModelPruningLemma},  we can obtain the convergence rate of FL by Theorem \ref{Theorem1}.
{\begin{theorem}
\label{Theorem1}
When the model pruning strategy $\{\rho_u^n\}$,  quantization level strategy $\{\delta_u^{n}\}$,  power control strategy $\{p_u^{n}\}$ are given, the upper bound on the average $\ell_2$-norm of the gradients after $\Omega$ iterations with the learning rate $\eta =\frac{1}{L}$ is represented as
\begin{equation}\label{TheoremReConvergenceRate}
\begin{aligned}
 & \frac{1}{\Omega+1}\sum_{n=0}^{\Omega}\mathbb{E}\left\{\left\|\nabla F\left({\boldsymbol{w}}^n\right)\right\|^2\right\} \\ & \leq  \frac{2L}{(1-12\upsilon_2)(\Omega+1)}\mathbb{E}\left\{F\left(\boldsymbol{w}^0\right)-F\left(\boldsymbol{w}^{*}\right)\right\}\\& + \frac{1}{(\Omega+1)} \sum_{n=0}^{\Omega} \Gamma^n,
\end{aligned}
\end{equation}
where $\Gamma^n$ is given by
\begin{equation}
\begin{aligned}
\Gamma^n = & \frac {1}{1-12\upsilon_2} \left(3 \sum _ { u = 1 }^{U}\! \frac{\sum_{v=1}^V\left(\bar{g}_{u, v}^n-\underline{g}_{u, v}^n\right)^2}{4\left(2^{\delta_{u}^n}-1\right)^2} \right. \\ &+ \left. 3L^2 D^2 \sum_{u=1}^{U} \rho_u^n + \frac{12\upsilon_1}{N} \sum_{u=1}^U N_u q_u^n \right).
% & +\frac{\left(\sum_{u=1}^{ U}\left\|N_u  \right\|^2\right)  \left(\sum_{u=1}^U \rho_u^n \right)}{\left\|\sum_{u=1}^U N_u  \right\|^2}  L^2 D^2.
\end{aligned}
\end{equation}
\end{theorem}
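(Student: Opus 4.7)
The plan is to combine the standard $L$-smooth descent inequality with a three-way decomposition of the update noise $\boldsymbol{\overline{g}}^n-\nabla F(\boldsymbol{w}^n)$ into (i) stochastic-quantization error (controlled by Lemma \ref{QuantizationLemma}), (ii) model-pruning error (controlled by Lemma \ref{ModelPruningLemma} and the Lipschitz continuity in Assumption \ref{CompleteAssumption1}), and (iii) transmission-erasure error (controlled by Assumption \ref{CompleteAssumption5} through the Bernoulli variables $\alpha_u^n$). The learning rate $\eta=1/L$ is chosen precisely so that the linear cross term in the descent inequality cancels, leaving a clean recursion on $\|\nabla F(\boldsymbol{w}^n)\|^2$.

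First, applying Assumption~\ref{CompleteAssumption1} to the iterates generated by Eq.~(21) yields
\[
F(\boldsymbol{w}^{n+1}) \le F(\boldsymbol{w}^n) - \eta\langle\nabla F(\boldsymbol{w}^n),\boldsymbol{\overline{g}}^n\rangle + \frac{L\eta^2}{2}\|\boldsymbol{\overline{g}}^n\|^2.
\]
Expanding $\|\boldsymbol{\overline{g}}^n\|^2 = \|\nabla F\|^2 + 2\langle\nabla F,\boldsymbol{\overline{g}}^n-\nabla F\rangle + \|\boldsymbol{\overline{g}}^n-\nabla F\|^2$ and plugging in $\eta=1/L$, the coefficient in front of $\|\nabla F\|^2$ becomes $-1/(2L)$ while the cross-inner-product coefficient collapses algebraically to zero. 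Taking expectation and rearranging gives the pathwise descent recursion
\[
\mathbb{E}\|\nabla F(\boldsymbol{w}^n)\|^2 \le 2L\,\mathbb{E}\!\left[F(\boldsymbol{w}^n)-F(\boldsymbol{w}^{n+1})\right] + \mathbb{E}\|\boldsymbol{\overline{g}}^n-\nabla F(\boldsymbol{w}^n)\|^2.
\]

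The remaining work is to bound $\mathbb{E}\|\boldsymbol{\overline{g}}^n-\nabla F\|^2$ by $\Gamma^n$ (up to a residue that will be absorbed into the LHS). I would split
$\boldsymbol{\overline{g}}^n-\nabla F = A+B+C$,
where $C$ is the stochastic-quantization noise, $B=\sum_u (N_u/N)(\boldsymbol{g}_u^n(\hat{\boldsymbol{w}}_u^n)-\boldsymbol{g}_u^n(\boldsymbol{w}^n))$ is the pruning gap, and $A$ is the erasure-induced deviation. The bound $\|A+B+C\|^2\le 3(\|A\|^2+\|B\|^2+\|C\|^2)$ accounts for the factor $3$ that multiplies the quantization and pruning contributions inside $\Gamma^n$. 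Lemma~\ref{QuantizationLemma} directly converts $\|C\|^2$ into the per-device sum $\sum_{v}(\bar g_{u,v}^n-\underline g_{u,v}^n)^2/[4(2^{\delta_u^n}-1)^2]$; Lemma~\ref{ModelPruningLemma} combined with the $L$-Lipschitz continuity of the per-device gradients converts $\|B\|^2$ into $L^2 D^2\sum_u\rho_u^n$. For $A$, the erasure indicator $(1-\alpha_u^n)$ couples with $\|\boldsymbol{g}_u^n\|^2$, and Assumption~\ref{CompleteAssumption5} upgrades this to $\upsilon_1+\upsilon_2\|\nabla F\|^2$ in expectation, producing both the $(12\upsilon_1/N)\sum_u N_u q_u^n$ contribution and a residue $12\upsilon_2\|\nabla F\|^2$ that is transposed to the LHS to form the $(1-12\upsilon_2)^{-1}$ prefactor. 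Telescoping from $n=0$ to $\Omega$, using $F(\boldsymbol{w}^{\Omega+1})\ge F(\boldsymbol{w}^{*})$ and dividing by $\Omega+1$ then yields Eq.~(27).

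The main obstacle is term $A$: the random denominator $\sum_u N_u\alpha_u^n$ in Eq.~(19) makes direct expectation of $\boldsymbol{\overline{g}}^n$ intractable. I expect to handle this either by working conditionally on the erasure pattern and summing over the $2^U$ realizations, or (more cleanly) by comparing to the unnormalized aggregator $\sum_u(N_u/N)\alpha_u^n\mathcal{Q}(\boldsymbol{g}_u^n)$, for which the expectation factors over the independent $\alpha_u^n$ and the quantization randomness. The delicate part will then be lining up the numerical constants so that the factor $12$ in $\Gamma^n$ and the denominator $(1-12\upsilon_2)$ emerge with the stated weights $N_u/N$; this bookkeeping, together with an application of Cauchy--Schwarz (or Jensen) to the double sum over $u$ and $v$, is the most error-prone step but is otherwise standard.
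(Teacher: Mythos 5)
Your proposal follows essentially the same route as the paper's proof: the $L$-smooth descent inequality with $\eta=1/L$ reducing to $\mathbb{E}\|\nabla F(\boldsymbol{w}^n)\|^2 \le 2L\,\mathbb{E}[F(\boldsymbol{w}^n)-F(\boldsymbol{w}^{n+1})]+\mathbb{E}\|\overline{\boldsymbol{g}}^n-\nabla F(\boldsymbol{w}^n)\|^2$, the same three-way split into transmission, pruning, and quantization errors with the factor-of-3 Cauchy--Schwarz bound, the same use of Lemmas \ref{QuantizationLemma} and \ref{ModelPruningLemma} and Assumption \ref{CompleteAssumption5}, and the same transposition of the $12\upsilon_2\|\nabla F\|^2$ residue to the left-hand side before telescoping. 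The bookkeeping difficulties you flag (the random denominator $\sum_u N_u\alpha_u^n$ and the emergence of the constant $12$) are exactly the points the paper resolves via its inequalities (d)--(e) in Appendix A, so your plan is sound and matches the published argument.
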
}
{
\begin{proof}
See  Appendix \ref{ProofTheorem}.
\end{proof}}
{
\begin{remark}
The $\ell_2$-norm of the average gradients, as presented in Theorem \ref{Theorem1}, is commonly used to indicate convergence rate. Specifically, if $\frac{1}{\Omega+1}\mathbb{E}\left\{\left\|\nabla F\left({\boldsymbol{w}}^n\right)\right\|^2\right\} \leq \zeta$, it implies that FL has achieved a $\zeta$-optimal solution.
\end{remark}}
{
As ${\Omega \rightarrow \infty}$, the result of Eq. (\ref{TheoremReConvergenceRate}) converges to
\begin{equation}\label{ErrorGap}
\begin{aligned}
&  \frac{1}{\Omega+1}\sum_{n=0}^{\Omega}\mathbb{E}\left\{\left\|\nabla F\left({\boldsymbol{w}}^n\right)\right\|^2\right\} \leq  \frac{1}{\Omega+1}\sum_{n=0}^{\Omega}  \Gamma^n.
\end{aligned}
\end{equation}
We can observe a convergence gap attributed to errors in model pruning, gradient quantization, and wireless transmission.}
{
\begin{remark}
To mitigate the deterioration of convergence, it is imperative to minimize the convergence gap as depicted in Eq. (\ref{ErrorGap}). It is obvious that minimizing the convergence gap can be achieved by adjusting model pruning strategy, gradient quantization strategy, and transmission power control strategy to minimize $\Gamma^n$.
\end{remark}}
\section{Performance Evaluation And Problem Formulation}
As previously highlighted, the methodologies of model pruning, gradient quantization, and transmission power control play a critical role in FL's convergence process. Besides, the aforementioned methodologies also exert a substantial impact on energy consumption and time overhead during the pursuit of convergence in FL. Hence, it becomes imperative to meticulously calibrate these strategies to ensure an equilibrium between convergence efficiency and associated training overheads, namely, energy consumption and delay.
\subsection{Delay Model}\label{DelayModelSubsection}
The delay predominantly stems from three distinct components. Specifically,
\subsubsection{Delay in Local Training}
The delay for training local model on device $u$  during the $n$-th iteration can be represented as
\begin{equation}
\label{LocalComputingLatency}
T_{u,lt}^{n}=\frac{{N}_{u} c_{0}\left(1-\rho_u^n\right)}{f_{u}^{n}},
\end{equation}
where $c_{0}$ (cycles/sample) denotes the requisite CPU cycles for training a single data sample  via backpropagation algorithm on device $u$, and $f^{n}_u $  signifies the available computational resources of device $u$. 
% The variable ${N}_{u}$ represents the amount of data used in training. If batch size is not employed, it equals $N_{u}$. However, if a batch size is set, then it equals the size of the batch.
\subsubsection{Delay in Local Gradient Uploading}
The delay for uploading the updated local gradient to edge server is given by\footnote{Given the relatively smaller data size, the delay incurred by uploading pruning indexes to the edge server can be deemed negligible \cite{9598845}.}
\begin{equation}
\label{LocalUploadingLatency}
T_{u,lu}^{n} = \frac{\tilde{\delta_{u}^{n}}\left(1-\rho_u^n\right)}{R_u^{n}\left(p^{n}_u\right)}.
\end{equation}
\subsubsection{Delay in Gradient Aggregation,  Model Updating and Broadcast}
Upon receipt of all local gradients from devices, the edge server undertakes gradient aggregation and model updating. Subsequently, it broadcasts the refreshed global model to each device. The cumulative delay encompassing the aforementioned operations on the edge server is represented as
\begin{equation}
\label{GlobalBroadcastLatency}
T^{n}_{gb}=s,
\end{equation}
where $s$ is a small constant, owing to the powerful computing capability and high transmission power of the edge server.

Consequently, the aggregate time overhead for a single iteration can be delineated as
\begin{equation}
T^{n}=\max _{u \in \mathcal{U}}\left\{T_{u,lt}^{n}+T_{u,lu}^{n} \right\}+T^{n}_{gb} .
\end{equation}
\subsection{Energy Consumption Model}\label{EnergyConsumptionModelSubsection}
In our analysis, we confine our focus to the energy consumption associated with the devices, excluding the edge server's consumption given its consistent and uninterrupted energy source. The predominant factors contributing to this device-centric energy expenditure are local training and the process of uploading local gradients. The energy consumed for local training of device $u$ can be expressed as
\begin{equation}
\label{LocalComputingEngeryConsumption}
E_{u,lt}^{n} = k(f_{u}^{n})^{\sigma}T_{u,lt}^n,
\end{equation}
where $k $ and $\sigma$ are constant parameters \cite{10051719}. Concurrently, the energy consumption attributed to the local gradient uploading can be calculated as
\begin{equation}
\label{LocalCommunicationEngeryConsumption}
E_{u, l u}^{n}= p_u^{n}T_{u, l u}^n.
\end{equation}

Thus, the energy consumption of device $u$ for a single iteration can be given by
\begin{equation}
\label{LocalEngeryConsumption}
E_{u}^{n}= E_{u,lt}^{n}+E_{u, l u}^{n} .
\end{equation}
\subsection{Problem Formulation}
The optimization problem is conceived as
\begin{subequations}\label{OptimizationProblem}
\begin{equation}
\label{OptimizationFunctionP1}
\begin{aligned}
\mathcal{P}1: \min _{\boldsymbol{\delta}^{n}, \boldsymbol{p}^{n},\boldsymbol{\rho}^n }\Gamma^n
\end{aligned}
\end{equation}
\begin{equation}
\label{Tmax}
{s.t.}\quad \quad T^n \leq T^{\rm max}, \forall n,
\end{equation}
\begin{equation}
\label{Emax}
 \quad \quad \quad E_u^n \leq E^{\rm max},\forall u, \forall n,
\end{equation}
\begin{equation}
\label{QuantizationLevel}
 \quad \quad \quad \delta_u^{n} \in \mathbb{Z}^{+} , \forall u, \forall n,
\end{equation}
\begin{equation}
\label{QuantizationLevel_upperbound}
 \quad \quad \quad \delta_u^{n} \leq \delta^{\rm max}, \forall u, \forall n,
\end{equation}
\begin{equation}
\label{TransmittingPowerConfiguration}
  \quad \quad \quad p^{\rm min} \leq p_u^n \leq p^{\rm max},\forall u, \forall n,
\end{equation}
\begin{equation}
\label{ModelPruningRatioCons}
  \quad \quad \quad 0 \le \rho_u^n \leq \rho^{\rm max},\forall u, \forall n,
\end{equation}
\end{subequations}
where  $\boldsymbol{\delta}^{n}=\left[\delta^{n}_1, \cdots, \delta^{n}_U\right]$, $\boldsymbol{p}^{n}=\left[p^{n}_1, \cdots, p^{n}_U\right]$, and $\boldsymbol{\rho}^{n}=\left[\rho^{n}_1, \cdots, \rho^{n}_U\right]$   represents quantization strategy,  power control strategy, and model pruning strategy, respectively.  Eq. (\ref{Tmax}) and Eq. (\ref{Emax}) are the time and energy consumption constraints for each iteration, respectively.  $\mathbb{Z}^{+}$ in Eq. (\ref{QuantizationLevel}) is the positive integer set. $p^{\rm min}$ and $p^{\rm max}$ represent the lower and upper limits of transmission power, respectively. $ \delta^{\rm max}$ and $\rho^{\rm max}$  are the upper limits of the quantization level and model pruning ratio, respectively.

\section{Algorithm Design}
Evidently,  $\mathcal{P}1$ exhibits non-convexity due to the presence of coupled integer and continuous variables. Therefore, we decompose it into several subproblems.
\subsection{Optimal Model Pruning Strategy}\label{SectionModelPruning}
When the quantization strategy $\boldsymbol{\delta}^{n}$ and power control strategy $\boldsymbol{p}^{n}$ are given, problem $\mathcal{P}1 $ becomes
\begin{subequations}\label{SubProblem1}
\begin{equation}
\begin{aligned}
\mathcal{P}2: \min _{\boldsymbol{\rho}^n } ~~ \Gamma^n (\boldsymbol{\rho}^n )
\end{aligned}
\end{equation}
\begin{equation}
\label{ConsSubP1}
{s.t.}~~\textrm{(\ref{Tmax}), (\ref{Emax}),  (\ref{ModelPruningRatioCons})}.
\end{equation}
\end{subequations}
$\mathcal{P}2 $ is a linear programming problem with respect to $\rho_u^{n}, \forall u, \forall n$.  Therefore, we can obtain the optimal model pruning strategy by Theorem \ref{theorem2}.
\begin{theorem}\label{theorem2}
 Given quantization strategy $\boldsymbol{\delta}^{n}$, transmission power control strategy $\boldsymbol{p}^{n}$, the optimal model pruning strategy $(\rho_u^{n})^*, \forall u, \forall n $ can be given by
\begin{equation}\label{rho}
(\rho_u^{n})^*= \min\left\{\rho^{\rm max}, \left(1- \min\left\{ \Phi_1,\Phi_2     \right\} \right)^{+}\right\},
\end{equation}
where $(x)^{+}=\max\left\{0,x\right\}$, and  $\Phi_1$ and $\Phi_2$ are represented as
\begin{equation}\label{phi1}
\Phi_1= \frac{T^{\rm max}-s}{\frac{N_uc_0}{f_u^n}+\frac{\tilde{\delta}^n_u}{R_u^{\mathrm{n}}\left( p^{n}_u\right)}} ,
\end{equation}
and
\begin{equation}\label{phi2}
\Phi_2= \frac{E^{\rm max}}{k(f_u^n)^{\sigma-1}N_uc_0+\frac{p_u^n\tilde{\delta}_u^n}{R_u^{\mathrm{n}}\left( p^{n}_u\right)}},
\end{equation}
respectively.
\end{theorem}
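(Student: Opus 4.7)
The plan is to exploit three structural observations about $\mathcal{P}_2$. First, the objective $\Gamma^n(\boldsymbol{\rho}^n)$ is affine in $\boldsymbol{\rho}^n$ and separable across devices: only the term $\frac{3L^2D^2}{1-12\upsilon_2}\sum_{u=1}^{U}\rho_u^n$ depends on $\boldsymbol{\rho}^n$, with a strictly positive coefficient (under the implicit assumption $\upsilon_2<1/12$). Thus minimizing $\Gamma^n$ is equivalent to minimizing every $\rho_u^n$ individually. Second, from the formulas (\ref{LocalComputingLatency}) and (\ref{LocalUploadingLatency}), $T_{u,lt}^n+T_{u,lu}^n$ is a linear function of $1-\rho_u^n$ with positive coefficient; the same holds for $E_u^n$ by (\ref{LocalComputingEngeryConsumption})--(\ref{LocalEngeryConsumption}). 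Both quantities are therefore monotonically decreasing in $\rho_u^n$. Third, the constraint $T^n\le T^{\max}$ involves a max over $u$, but since $T^n=\max_u\{T_{u,lt}^n+T_{u,lu}^n\}+s$, it is equivalent to the $U$ per-device inequalities $T_{u,lt}^n+T_{u,lu}^n\le T^{\max}-s$.

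Given these observations, I would first decouple $\mathcal{P}_2$ into $U$ independent scalar problems of the form ``minimize $\rho_u^n$ subject to (\ref{Tmax}), (\ref{Emax}), (\ref{ModelPruningRatioCons})''. For each of them, I would substitute (\ref{LocalComputingLatency})--(\ref{LocalUploadingLatency}) into (\ref{Tmax}) and rearrange to obtain
\begin{equation*}
1-\rho_u^n\;\le\;\frac{T^{\max}-s}{\dfrac{N_u c_0}{f_u^n}+\dfrac{\tilde{\delta}_u^n}{R_u^n(p_u^n)}}=\Phi_1,
\end{equation*}
i.e.\ $\rho_u^n\ge 1-\Phi_1$. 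An analogous substitution of (\ref{LocalComputingEngeryConsumption})--(\ref{LocalEngeryConsumption}) into (\ref{Emax}), after factoring out $1-\rho_u^n$, yields $\rho_u^n\ge 1-\Phi_2$. Combining these with (\ref{ModelPruningRatioCons}) gives the feasible set $\max\{0,\,1-\Phi_1,\,1-\Phi_2\}\le\rho_u^n\le\rho^{\max}$.

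Because the scalar objective is strictly increasing in $\rho_u^n$, the optimum is attained at the lower endpoint of the feasible interval. Using $\max\{1-\Phi_1,1-\Phi_2\}=1-\min\{\Phi_1,\Phi_2\}$ and the shorthand $(x)^+=\max\{0,x\}$, this lower endpoint is $(1-\min\{\Phi_1,\Phi_2\})^+$. When this endpoint exceeds $\rho^{\max}$ the problem is strictly infeasible, and the natural projection onto the admissible range yields the outer $\min\{\rho^{\max},\cdot\}$ in (\ref{rho}); this reproduces exactly the formula claimed in Theorem~\ref{theorem2}.

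I expect the main subtlety to be bookkeeping rather than any real difficulty. The decoupling of the $\max_u$ delay constraint has to be stated explicitly, the sign of the objective's $\rho_u^n$-coefficient must be justified (which is where the assumption $\upsilon_2<1/12$ is implicitly used), and the corner case in which $1-\min\{\Phi_1,\Phi_2\}>\rho^{\max}$ must be interpreted consistently with the statement of the theorem. Once these points are dispatched, the rest of the argument is a direct rearrangement of linear inequalities.
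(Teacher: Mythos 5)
Your proposal is correct and follows essentially the same route as the paper's own proof: observe that $\Gamma^n$ is increasing in each $\rho_u^n$, rearrange the delay and energy constraints into the lower bounds $\rho_u^n \ge 1-\Phi_1$ and $\rho_u^n \ge 1-\Phi_2$, and take the smallest feasible value subject to $\rho_u^n \le \rho^{\rm max}$. Your added remarks on the per-device decoupling of the $\max_u$ constraint, the positivity of the coefficient (requiring $\upsilon_2<1/12$), and the infeasible corner case are more explicit than the paper but do not change the argument.
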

\begin{proof}
See Appendix \ref{ProofTheorem2}.
\end{proof}
\subsection{Optimal Quantization Strategy}\label{SectionGradientQuantization}
When given power control strategy $\boldsymbol{p}^{n}$ and the optimal model pruning strategy $\boldsymbol{\rho}^{n}$ obtained by Theorem \ref{theorem2},  the optimization problem can be reformulated as
\begin{subequations}\label{SubProblem2}
\begin{equation}
\begin{aligned}
\mathcal{P}3: \min _{\boldsymbol{\delta}^n } ~~ ~~ \Gamma^n (\boldsymbol{\delta}^n ; \{\rho_u^n\}^*)
\end{aligned}
\end{equation}
$${s.t.}~~~\textrm{(\ref{QuantizationLevel}), (\ref{QuantizationLevel_upperbound})},$$
\begin{equation}
\begin{aligned}
\label{TmaxSub2}
T^{n}\left(\boldsymbol{\delta}^n ; \{\rho_u^n\}^* \right)\leq T^{\rm max}  ,\forall u, \forall n,
\end{aligned}
\end{equation}
\begin{equation}
\label{EmaxSub2}
 E_u^{n}\left(\boldsymbol{\delta}^n ; \{\rho_u^n\}^* \right)\leq E^{\rm max}  ,\forall u, \forall n,
\end{equation}
\end{subequations}
where $\Gamma^n (\boldsymbol{\delta}^n ;\{\rho_u^n\}^*)$, $T^{n}\left(\boldsymbol{\delta}^n ; \{\rho_u^n\}^* \right)$, and $E_u^{n}\left( \boldsymbol{\delta}^n ;\{\rho_u^n\}^* \right)$ are obtained by applying the optimal  model pruning strategy $\{\rho_u^n\}^* $ into  Eq. (\ref{OptimizationFunctionP1}),  Eq. (\ref{Tmax}), and Eq. (\ref{Emax}),
respectively. For further discussion, we give following lemma to discuss the monotonicity of problem $\mathcal{P}3$.
\begin{lemma}\label{Lemma3}
 Problem $\mathcal{P}3$ is monotonically decreasing with respect to variable  ${\boldsymbol{\delta}^n }$ in the domain.
\end{lemma}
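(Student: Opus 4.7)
The plan is to establish the monotonicity by pinpointing which components of $\Gamma^n$ actually vary with $\boldsymbol{\delta}^n$ once the model pruning strategy $\{\rho_u^n\}^*$ from Theorem \ref{theorem2} and the power control strategy $\boldsymbol{p}^n$ are held fixed, which is precisely the setup of $\mathcal{P}3$. Inspecting $\Gamma^n$ from Theorem \ref{Theorem1}, the transmission-error summand $\frac{12\upsilon_1}{N}\sum_{u=1}^{U} N_u q_u^n$ depends only on $\boldsymbol{p}^n$ through the packet error rates, and the pruning-error summand $3L^2 D^2 \sum_{u=1}^{U} \rho_u^n$ depends only on the fixed $\{\rho_u^n\}^*$. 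Consequently, the sole term of $\Gamma^n$ in which $\boldsymbol{\delta}^n$ appears is the quantization contribution $3\sum_{u=1}^{U}\frac{\sum_{v=1}^{V}(\bar{g}_{u,v}^n-\underline{g}_{u,v}^n)^2}{4(2^{\delta_u^n}-1)^2}$, which already has an additive per-device structure.

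Given this decomposition, the core step is a one-variable monotonicity check. I would fix an arbitrary device $u$ and vary $\delta_u^n$ on the admissible integer range $\{1,\dots,\delta^{\rm max}\}$ while freezing the other coordinates $\delta_{u'}^n$, $\rho_{u'}^n$, and $p_{u'}^n$. The numerator $\sum_{v=1}^{V}(\bar{g}_{u,v}^n-\underline{g}_{u,v}^n)^2$ is a nonnegative constant in this slice, while the denominator $4(2^{\delta_u^n}-1)^2$ is strictly increasing because $x\mapsto 2^x$ is strictly increasing and $2^{\delta_u^n}-1>0$ for $\delta_u^n\geq 1$. Hence the device-$u$ summand is strictly decreasing in $\delta_u^n$. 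Multiplying by the prefactor $\frac{1}{1-12\upsilon_2}$, which is positive whenever the bound in Theorem \ref{Theorem1} is meaningful, preserves the sign, so $\Gamma^n$ itself is strictly decreasing in $\delta_u^n$. Repeating the argument for every $u\in\mathcal{U}$ yields componentwise monotonicity, which is exactly the claim of Lemma \ref{Lemma3}.

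The hard part is conceptual rather than computational: because $\{\rho_u^n\}^*$ in Theorem \ref{theorem2} is itself expressed through $\tilde{\delta}_u^n=V\delta_u^n+\xi$, a reader might worry that perturbing $\delta_u^n$ inside $\mathcal{P}3$ implicitly drags the optimal pruning ratio along and invalidates the argument via an envelope term. I would address this explicitly by noting that the block-decomposition scheme of Section VI treats $\{\rho_u^n\}^*$ as an input parameter to $\mathcal{P}3$---computed from the current $\boldsymbol{\delta}^n$ iterate and then held fixed while $\boldsymbol{\delta}^n$ is re-optimized---so no envelope effect arises in this subproblem. Once this interpretation is made explicit, what remains is the elementary monotonicity observation above.
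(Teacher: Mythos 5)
Your proposal is correct and follows essentially the same route as the paper: isolate the quantization term as the only part of $\Gamma^n$ depending on $\boldsymbol{\delta}^n$, and show each per-device summand decreases because the denominator $4(2^{\delta_u^n}-1)^2$ grows with $\delta_u^n$ (the paper does this via a sign check on the first derivative, you do it by direct monotonicity of $2^x$, which is if anything cleaner for an integer variable). Your added remark that $\{\rho_u^n\}^*$ is held fixed as an input to $\mathcal{P}3$, so no envelope effect contaminates the argument, is a point the paper leaves implicit but does no harm to make explicit.
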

\begin{proof}
See Appendix \ref{Lemma3Proof}.
\end{proof}

Consequently, we can obtain the optimal solution of quantization strategy $(\delta_u^n)^*, \forall u, \forall n$ from Theorem \ref{theorem3}.

\begin{theorem}\label{theorem3}
 Given the optimal model pruning strategy $(\rho_u^{n})^*$, and  transmission power configuration strategy $\boldsymbol{p}^{n}$,  we can obtain the optimal quantization strategy $(\delta_u^n)^*, \forall u, \forall n$ as
\begin{equation}
\label{OptimalQuantizationStrategy}
(\delta_u^n)^* =\left\lceil \min\left\{ \frac{\Phi_3-\xi}{V},\frac{\Phi_4-\xi}{V} , \delta^{\rm max} \right\}\right\rceil,
\end{equation}
where  $\lceil x \rceil$  represents the minimum positive integer that is less than or equal to $x$.   $\Phi_3$ and $\Phi_4$ are represented as
\begin{equation}
\begin{aligned}
\label{TmaxSub2ExpanisionRePhiSystemModel}
\Phi_3 = \frac{\left(T^{\rm max}-s-\frac{{N}_{u} c_0\left(1-  (\rho_u^n)^*\right)}{f_u^n}\right)R_u^{\mathrm{n}}\left( p^{n}_u\right) }{1-  (\rho_u^n)^*},
\end{aligned}
\end{equation}
and
\begin{equation}
\label{EmaxSub2ExpanisionRePhiSystemModel}
\Phi_4 = \frac{ \left(E^{\rm max}-k(f_u^n)^{\sigma-1}{N}_{u}c_0\left(1-(\rho_u^n)^*\right)\right)R_u^{\mathrm{n}}\left( p^{n}_u\right)}{p_u^n\left(1-(\rho_u^n)^*\right)},
\end{equation}
respectively.
\end{theorem}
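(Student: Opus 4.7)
The plan is to exploit Lemma~\ref{Lemma3}, which establishes that $\Gamma^n$ is monotonically decreasing in each $\delta_u^n$. Therefore, for each device $u$ and iteration $n$, the optimal quantization level is the largest feasible integer value of $\delta_u^n$ compatible with the constraints (\ref{QuantizationLevel_upperbound}), (\ref{TmaxSub2}), and (\ref{EmaxSub2}). My task thus reduces to extracting an explicit upper bound on $\delta_u^n$ from each of the three constraints and taking the minimum.

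First I would treat the time constraint. Plugging the optimal pruning ratio $(\rho_u^n)^*$ into Eq. (\ref{LocalComputingLatency}) and Eq. (\ref{LocalUploadingLatency}), the per-iteration latency for device $u$ becomes $\frac{N_u c_0(1-(\rho_u^n)^*)}{f_u^n}+\frac{\tilde{\delta}_u^n(1-(\rho_u^n)^*)}{R_u^n(p_u^n)}+s$. Requiring this to be at most $T^{\max}$ and isolating $\tilde{\delta}_u^n$ yields $\tilde{\delta}_u^n \leq \Phi_3$, which by the linear encoding relation (\ref{QuantizationRelationship}) translates to $\delta_u^n \leq (\Phi_3-\xi)/V$. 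An analogous manipulation on the energy constraint (\ref{EmaxSub2}) — substituting (\ref{LocalComputingEngeryConsumption}) and (\ref{LocalCommunicationEngeryConsumption}), factoring out $1-(\rho_u^n)^*$, and isolating $\tilde{\delta}_u^n$ — produces $\delta_u^n \leq (\Phi_4-\xi)/V$. Combining these with the explicit upper bound $\delta^{\max}$ from (\ref{QuantizationLevel_upperbound}), and then enforcing the positive-integer constraint (\ref{QuantizationLevel}) by rounding down (the operator $\lceil\cdot\rceil$ as defined in the theorem), yields exactly the closed-form expression (\ref{OptimalQuantizationStrategy}).

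I do not anticipate a serious obstacle here: the derivation is essentially two affine inversions of the constraints in the decision variable $\tilde{\delta}_u^n$, followed by the integer projection. The only place to be careful is confirming that the monotonic direction from Lemma~\ref{Lemma3} really does push the optimum to the constraint boundary, so that the tightest of the three bounds is achieved with equality (up to integer rounding) at the optimum; this follows because $\Gamma^n$ is strictly decreasing in each $\delta_u^n$ via the quantization-error term $1/(2^{\delta_u^n}-1)^2$ in Lemma~\ref{QuantizationLemma}, so no interior optimum can exist and the feasible upper envelope must be saturated.
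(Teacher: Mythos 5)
Your proposal is correct and follows essentially the same route as the paper's own proof: invoke Lemma~\ref{Lemma3} to push the optimum to the largest feasible $\delta_u^n$, invert the delay and energy constraints affinely in $\tilde{\delta}_u^n$ to obtain $\Phi_3$ and $\Phi_4$, convert via Eq.~(\ref{QuantizationRelationship}), cap at $\delta^{\rm max}$, and project to the integers. No gaps; your closing remark about the monotonicity forcing saturation of the tightest constraint is exactly the justification the paper relies on.
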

\begin{proof}
See Appendix \ref{Theorem3Proof}.
\end{proof}

\subsection{Optimal Transmission Power Configuration Strategy}\label{SectionPowerControl}
When given optimal model pruning strategy $\boldsymbol{\rho}^{n}$ and  gradient quantization strategy $\boldsymbol{\delta}^{n}$, the problem becomes
\begin{subequations}\label{SubProblem3}
\begin{equation}
\begin{aligned}
\mathcal{P}4: \min _{\boldsymbol{p}^{n}} ~~   \Gamma^n (\boldsymbol{p}^{n}; \{\rho_u^n\}^*, \{\delta_u^n\}^*)
\end{aligned}
\end{equation}
$${s.t.}~~~\textrm{(\ref{TransmittingPowerConfiguration})},$$
\begin{equation}
\begin{aligned}
\label{TmaxSub3}
T^{n}\left(\boldsymbol{p}^{n}; \{\rho_u^n\}^*, \{\delta_u^n\}^* \right)\leq T^{\rm max}  ,\forall u, \forall n,
\end{aligned}
\end{equation}
\begin{equation}
\label{EmaxSub3}
 E_u^{n}\left(\boldsymbol{p}^{n}; \{\rho_u^n\}^*, \{\delta_u^n\}^* \right)\leq E^{\rm max}  ,\forall u, \forall n,
\end{equation}
\end{subequations}
where $\Gamma^n (\boldsymbol{p}^{n};\{\rho_u^n\}^*,  \{\delta_u^n\}^*)$, $T^{n}\left(\boldsymbol{p}^{n}; \{\rho_u^n\}^* , \{\delta_u^n\}^*\right)$, and $E_u^{n}\left( \boldsymbol{p}^{n}; \{\rho_u^n\}^* , \{\delta_u^n\}^*\right)$ are obtained by applying the optimal  model pruning strategy $\{\rho_u^n\}^* $ and the optimal gradient quantization strategy  $\{\delta_u^n\}^* $ into  Eq. (\ref{OptimizationFunctionP1}),  Eq. (\ref{Tmax}), and Eq. (\ref{Emax}), respectively.
Before solving problem $\mathcal{P}4$, we investigate the convexity of  $\mathcal{P}4$ in Lemma \ref{Lemma4}.
\begin{lemma}\label{Lemma4}
Problem $\mathcal{P}4$ is not always convex, which is determined by the optimal model pruning strategy $\{\rho_u^n\}^*, \forall u $.
\end{lemma}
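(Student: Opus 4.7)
The plan is to substitute the closed-form optima $\{\rho_u^n\}^*$ and $\{\delta_u^n\}^*$ from Theorems \ref{theorem2} and \ref{theorem3} into $\Gamma^n$ and examine the resulting scalar-valued function of $\boldsymbol{p}^n$. After substitution, the objective decomposes into per-device summands, so convexity of $\mathcal{P}4$ in $\boldsymbol{p}^n$ reduces to convexity of each summand in $p_u^n$ separately. Since only the packet-error term $q_u^n(p_u^n)$ is intrinsically $p_u^n$-dependent, while the pruning and quantization terms acquire $p_u^n$-dependence only through $R_u^n(p_u^n)$ when the corresponding optima lie in their interior branches, the functional form, and hence the curvature, of each summand is dictated by which branch of Eq. (\ref{rho}) and Eq. (\ref{OptimalQuantizationStrategy}) is active.

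The main step is a case split on the branches activated by $(\rho_u^n)^*$. In the boundary cases $(\rho_u^n)^* \in \{0, \rho^{\max}\}$ and $(\delta_u^n)^* = \delta^{\max}$, only the packet-error term survives, and a direct computation of $q_u^n(p_u^n) = \mathbb{E}_{h_u^n}\bigl[1 - \exp(-c/(p_u^n h_u^n))\bigr]$ with $c = \Upsilon(I_u^n + B_u^{\mathrm{UL}} N_0)$ yields a second derivative whose sign flips at $p_u^n = c/(2 h_u^n)$: positive for larger $p_u^n$ and negative for smaller $p_u^n$. In the interior case $(\rho_u^n)^* = 1 - \min\{\Phi_1, \Phi_2\}$, the pruning contribution $3 L^2 D^2 (\rho_u^n)^*$ becomes a rational function of $R_u^n(p_u^n)$ whose second derivative is of mixed sign, and the quantization term inherits a similar structure through $\Phi_3$ and $\Phi_4$. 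Summing these contributions shows that $\partial^2 \Gamma^n / \partial (p_u^n)^2$ depends on both the operating point $p_u^n$ and the branch of Eq. (\ref{rho}) in effect, so convexity cannot be guaranteed uniformly.

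The final step is to exhibit a concrete counterexample establishing that the Hessian is indefinite on the feasible set for at least one admissible parameter configuration, while contrasting it with a configuration where the Hessian remains positive semidefinite on the whole interval $[p^{\min}, p^{\max}]$. The simplest counterexample is a single-user, deterministic-channel instance operating in the boundary branch where $[p^{\min}, p^{\max}]$ straddles $c/(2 h_u^n)$, producing a sign change of $(q_u^n)''$. The main obstacle is handling the expectation over the Rayleigh fading $h_u^n$: although the pointwise sign flip is immediate, propagating it through $\mathbb{E}_{h_u^n}[\cdot]$ requires a dominated-convergence or monotone-swap argument, or alternatively a numerical verification. However, for the purpose of disproving global convexity it is enough to exhibit one parameter regime in which the Hessian fails to be positive semidefinite, so the argument terminates with a single explicit counterexample whose branch assignment is driven by $\{\rho_u^n\}^*$.
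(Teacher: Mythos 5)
Your proposal is correct in substance and, in spirit, follows the same route the paper gestures at --- examining the curvature (Hessian) of $\Gamma^n(\boldsymbol{p}^n)$ after substituting $\{\rho_u^n\}^*$ and $\{\delta_u^n\}^*$. The difference is one of depth: the paper's ``proof'' is a single sentence asserting the conclusion ``can be readily deduced through the analysis of the properties of its Hessian matrix,'' whereas you actually carry out the analysis. Your key computation is sound: for $f(p)=1-\exp(-c/(ph))$ one gets $f''(p)=e^{-c/(ph)}\frac{c}{p^{3}h}\bigl(2-\frac{c}{ph}\bigr)$, so the curvature of the packet-error term flips sign at $p=c/(2h)$, and the per-device decomposition plus the branch structure of Eq.~(\ref{rho}) and Eq.~(\ref{OptimalQuantizationStrategy}) correctly explains why the functional form of each summand --- hence global convexity --- hinges on which branch $(\rho_u^n)^*$ activates. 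Your plan to close with an explicit single-user, deterministic-channel counterexample whose feasible interval straddles $c/(2h)$ is exactly the right way to establish ``not always convex,'' and restricting to a deterministic channel legitimately sidesteps the expectation-over-fading issue you flag. Two minor remarks: first, your boundary-branch analysis actually shows non-convexity can arise from the $q_u^n$ term alone even when $(\rho_u^n)^*$ is constant, which slightly undercuts the paper's claim that convexity is ``determined by'' the pruning strategy --- but since the lemma only asserts non-convexity in general, your counterexample suffices. Second, a fully rigorous write-up should verify that the chosen counterexample parameters are consistent with the branch assignment you assume (i.e., that $(\rho_u^n)^*=\rho^{\max}$ and $(\delta_u^n)^*=\delta^{\max}$ indeed hold at those parameters), but this is a routine check. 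Overall your argument is more complete than the one in the paper.
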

\begin{proof}
%Please see Appendix. \ref{Lemma4Proof}
This conclusion can be readily deduced through the analysis of the properties of its Hessian matrix \cite{boyd2004convex}.
\end{proof}

% As an established Bayesian non-parametric approach, theD(Gaussian Process) can learn black-box functions with quantifiable uncertainty and sample efficiency, making it suitable for surrogate modeling in BO. Specifically, given data $\mathcal{D}_N$, the goal is to learn the function $\phi(\cdot)$ that links the input $\boldsymbol{p}^n$

To solve $\mathcal{P}4$, we design a Bayesian optimization-based algorithm \cite{7352306}  to obtain a near-optimal solution with low complexity. For convenience, in the following, we use $\Gamma^n(\boldsymbol{p}^n)$ to represent $\Gamma^n (\boldsymbol{p}^{n};\{\rho_u^n\}^*,  \{\delta_u^n\}^*)$.

The fundamental concept of Bayesian optimization lies in using a surrogate model to approximate the objective function. By employing an acquisition function,  new points are sampled and evaluated in a cost-effective approach. Consequently, the surrogate model updates its prior knowledge with these newly obtained samples and iteratively explores the optimal solution of the objective function.

\subsubsection{Surrogate Model}
We use Gaussian Process (GP) to establish the surrogate model $\hat \Gamma^n(\boldsymbol{p}^n)$ as $\hat\Gamma^n \sim \mathcal{GP}(0,\kappa(\boldsymbol{p}^n,(\boldsymbol{p}^n)'))$, where $\kappa(\boldsymbol{p}^n,(\boldsymbol{p}^n)')$ is a kernel (covariance) function measuring pairwise similarity of  $\boldsymbol{p}^n$ and $(\boldsymbol{p}^n)'$.
Then, the joint prior probability density funciton (PDF) of function evaluations $\boldsymbol{\hat\Gamma}^n_M=[\hat\Gamma^n(\boldsymbol{p}^n_1),...,\hat\Gamma^n(\boldsymbol{p}^n_M)]^\top $ at $\boldsymbol{P}^n_M=\{\boldsymbol{p}^n_1,...,\boldsymbol{p}^n_M\} $ follows a jointly Gaussian distribution as
\begin{equation}\label{GPPRIOR}
    p(\boldsymbol{\hat\Gamma}^n_M) = \mathcal{N}(\boldsymbol{\hat\Gamma}^n_M;0,\boldsymbol{K}_M),
\end{equation}
where $\boldsymbol{K}_M$ is an $M\times M$ covariance matrix with the $(i,j)$-th entry $[\boldsymbol{K}_M]_{i,j}=cov(\hat\Gamma^n_{i},\hat\Gamma^n_{j}):=\kappa (\boldsymbol{p}^n_i,\boldsymbol{p}^n_j)$.
For brevity, let $\mathcal{D}_M=\{(\boldsymbol{p}^n_{i},\iota^n_{i})\}_{i=1}^M$ represent all the sampled data at the $M$-th iteration, where $\iota^n_i=\Gamma^n(\boldsymbol{p}^n_i)$. Along with the GP prior in Eq. (\ref{GPPRIOR}), the posterior PDF $p(\hat\Gamma(\boldsymbol{p}^n)|\mathcal{D}_M)$ can be formulated via Bayes' rule as
\begin{equation}
    p(\hat\Gamma^n(\boldsymbol{p}^n)|\mathcal{D}_M) = \mathcal{N}(\hat\Gamma^n(\boldsymbol{p}^n);\mu_M(\boldsymbol{p}^n),\sigma^2_M(\boldsymbol{p}^n)),
\end{equation}
where the mean and variance can be given by
\begin{equation}
  \mu_M(\boldsymbol{p}^n) = \boldsymbol{k}_M^\top (\boldsymbol{p}^n)\boldsymbol{K}_M^{-1}\boldsymbol{\iota}^n_M
\end{equation}
and
\begin{equation}
  \sigma_M^2(\boldsymbol{p}^n) = \kappa(\boldsymbol{p}^n, \boldsymbol{p}^n)-\boldsymbol{k}_M^\top (\boldsymbol{p}^n)\boldsymbol{K}_M^{-1}\boldsymbol{k}_M,
\end{equation}
where $\boldsymbol{k}_M(\boldsymbol{p}^n)=[\kappa(\boldsymbol{p}^n_1,\boldsymbol{p}^n),...,\kappa(\boldsymbol{p}^n_M,\boldsymbol{p}^n)]^\top $, and $\boldsymbol{\iota}^n_M = [\iota^n_1,...,\iota^n_M]^\top $. The kernel function $\kappa(\boldsymbol{p}^n,(\boldsymbol{p}^n)')$ can be calculated as
\begin{equation}
    \kappa(\boldsymbol{p}^n,(\boldsymbol{p}^n)')= exp(-\frac {||\boldsymbol{p}^n-(\boldsymbol{p}^n)'||^2}{2}).
\end{equation}
\subsubsection{Acquisition Function}
The acquisition function $\nu(x)$ is
\begin{equation}
\begin{aligned}
    \nu(\boldsymbol{p}^n)&=P(F(x)\leq \Gamma^n((\boldsymbol{p}^n)^*_M)+\varsigma )\\&=1-\Phi(\frac{\mu(\boldsymbol{p}^n)-\Gamma^n((\boldsymbol{p}^n)^*_M)-\varsigma }{\sigma(\boldsymbol{p}^n)}),
\end{aligned}
\end{equation}
where $P(\cdot)$ represents probability. $\varsigma $ is a positive hyper-parameter, and $(\boldsymbol{p}^n)^*_M$ indicates the point corresponding to the minimum function value in the currently obtained data $\mathcal{D}_M$, which can be denoted by
\begin{equation}
    (\boldsymbol{p}^n)^*_M=\mathop{argmin}\limits_{\boldsymbol{p}^n_i\in\boldsymbol{P}^n_M}\Gamma^n(\boldsymbol{p}^n_i).
\end{equation}
$\Phi(x)$ is the cumulative distribution function  of normal distribution, which is defined as
\begin{equation}
    \Phi(x)=\int_{-\infty}^{x}\frac{1}{\sqrt{2\pi}}exp(-\frac{x^2}{2}).
\end{equation}
Using the acquisition function $\nu(x)$, the next sampling point $\boldsymbol{p}^n_{M+1}$ can be determined by
\begin{equation}\label{sample}
    \boldsymbol{p}^n_{M+1}=\mathop{argmax}\limits_{\boldsymbol{p}^n}\nu(\boldsymbol{p}^n).
\end{equation}
Based on this, we choose the point that is most probable to be less than the current minimum value $\Gamma^n((\boldsymbol{p}^n)^*_M)$ as the next sampling point. After acquiring the subsequent sample, we proceed to update $\mathcal{D}_M$ to $\mathcal{D}_{M+1}$ and advance to the next iteration. The iteration process continues until an optimal solution is achieved or the maximum number of iterations is reached.

\subsection{Joint Optimization on Model Pruning, Gradient Quantization, and Power control}
{In previous sections, we have discussed the optimal solutions of model pruning strategy, gradient quantization strategy, and power control strategy, respectively.
In this section, we combine them to realize a two-stage joint scheduling algorithm. Specifically, at the $k$-th iteration, the optimal model pruning strategy $\left(\rho_u^n\right)^*_k, \forall u$,  can be obtained according to Theorem \ref{theorem2},
based on the fixed gradient quantization strategy $(\delta_u^{n})^*_{k-1}, \forall u$,  and power control strategy $(p_u^{n})^*_{k-1}, \forall u$.  Furthermore,  we  obtain an optimal gradient quantization strategy $\left(\delta_u^n\right)^*_k, \forall u$ by Theorem \ref{theorem3},
relying on the newly updated optimal model pruning strategy $\left(\rho_u^n\right)^*_k$  and the fixed power control strategy $(p_u^{n})^*_{k-1}, \forall u$. Then, relying on the updated optimal model pruning strategy $\left(\rho_u^n\right)^*_k$ and the optimal gradient quantization strategy $\left(\delta_u^n\right)^*_k, \forall u$,
we can obtain the optimal power control strategy $\left(p_u^n\right)^*_k, \forall u$ by solving problem $\mathcal{P}4$. In the next iteration, we obtain the new model pruning strategy $\left(\rho_u^{n}\right)^*_{k+1}, \forall u$ based on the  latest $\left(p_u^n\right)^*_k, \forall u$ as well as $\left(\delta_u^n\right)^*_k, \forall u$.
Similarly, the update of the  gradient quantization and power control strategies will utilize the other new updated values of the fixed variables. The aforementioned procedure will be operated iteratively, until the following convergence criterion is satisfied
\begin{equation}
  \label{gap}
  \begin{aligned}
    \Gamma^{n}& \left(\left(\rho_u^{n}\right)^*_{k+1},   \left(\delta_u^{n}\right)^*_{k+1}, \left(p_u^{n}\right)^*_{k+1}\right)- \\& \Gamma^n\left(\left(\rho_u^n\right)^*_{k},  \left(\delta_u^n\right)^*_{k}, \left(p_u^n\right)^*_{k}\right) \leq \varrho.
  \end{aligned}
\end{equation}
Algorithm \ref{alg1} summarizes the above procedures.

The computational complexity of the proposed algorithm is $O(U)$ with a fixed number of iterations, where $U$ represents the number of devices. As the complexity increases linearly with the dimensionality of the input variables, the computational overhead remains low. Moreover, since channel conditions and device resources typically remain stable during training, the proposed two-stage algorithm is generally executed only once during initialization. Given that the algorithm runs on edge servers equipped with ample computational resources, its execution is efficient. In scenarios where channel conditions exhibit substantial variations, the algorithm can be re-executed multiple times. The execution frequency—whether once, periodically, or per iteration—can be flexibly adjusted by monitoring the degree of variation in channel conditions and node resources. This adaptability underscores the practical feasibility of the proposed algorithm in real-world deployments.}
{
    \begin{algorithm}[t]
        \caption{Control Algorithm for LTFL}
        \label{alg1}
        {
        \begin{algorithmic}
            \REQUIRE $T^{\rm max},E^{\rm max},p^{\rm min},p^{\rm max}, \rho^{\rm max},\delta^{\rm max}, U, N_u$
             \ENSURE  $\left(\rho_u^{n}\right)^*_k$, $\left(\delta_u^{n}\right)^*_k$, $\left(p_u^{n}\right)^*_k$
            \WHILE{$\text{gap} \geq \varrho$}
            \STATE Obtain  $(\rho_u^{n})^*_k$ by Eq. (\ref{rho}) using $(p_u^{n})^*_{k-1}$.
            \STATE Obtain $(\delta_u^n)^*_k$ by Eq. (\ref{OptimalQuantizationStrategy}) using $(p_u^{n})^*_{k-1}$ and $(\rho_u^{n})^*_k$.
             \STATE Obtain a  randomized sample pair $\left(\boldsymbol{p}^n_1, \iota^n_1 = \Gamma^n(\boldsymbol{p}^n_1)\right)$.
            \FOR{$M = 1$ to $M^{\rm max}$}
            \STATE Obtain the minimum point $((\boldsymbol{p}^n)^*_{k,M},(\iota^n)^*_{k,M}) \in \mathcal{D}_M$.
            \STATE Update the surrogate model $p(\hat{\Gamma}(\boldsymbol{p}^n)|\mathcal{D}_M)$.
            \STATE Calculate the acquisition function $\nu(\boldsymbol{p}^n)$.
            \STATE Determine next sampling point $\boldsymbol{p}^n_{M+1}$ by Eq. (\ref{sample}).
            \STATE Calculate $\iota^n_{M+1} = \Gamma^n(\boldsymbol{p}^n_{M+1})$.
            \STATE $M = M + 1$
            \ENDFOR
            \STATE $(\boldsymbol{p}^n)^*_k = [(p^n_1)^*_k,...,(p^n_U)^*_k]=(\boldsymbol{p}^n)^*_{k,M}$
            \STATE Calculate the gap defined in Eq. (\ref{gap}).
            \STATE $k = k + 1$
            \ENDWHILE

        \end{algorithmic}
        }
    \end{algorithm}
}

\section{Experimental Results}
\subsection{Experiment Settings}
{We have executed a series of experiments utilizing the CIFAR 10 dataset \cite{krizhevsky2009learning}. This dataset comprises 60,000 images, divided into 50,000 labeled samples for training purposes and 10,000 samples designated for testing. These images are organized into 10 distinct categories, each image measuring 32 by 32 pixels. We utilized this dataset to train a neural network for the precise classification of image recognition. Our designed Residual neural network begins with an initial convolutional layer that uses 64 3x3 convolutional kernels to perform convolution operations on the input 32x32x3 color images. This is followed by four groups of residual blocks, each consisting of multiple pre-activation residual blocks, with each block's convolutional layers having different convolutional kernels. After passing through these residual block groups, the network performs global average pooling, reducing the feature map size to 1x1x512.  \footnote{The detailed design specifics of the ML model are beyond the scope of this paper. However, it is important to note that the proposed LTFL framework demonstrates exceptional adaptability across various ML models.} For our experiments, we have deployed 30 devices, each containing a differing quantity of data samples, distributed uniformly in the range of 400 to 600. The main parameters used in our experiments are delineated in TABLE \ref{ParameterSetting}.}

%The architecture of this neural network model is characterized by multiple hidden layers and dual output layers. Each hidden layer features 64 neurons with the ReLU activation function. Notably, the fifth and eighth layers of the model incorporate dropout layers to prevent overfitting. The model's two distinct output layers are configured differently; one with three channels for building classification, and the other with five channels for floor identification. We have also integrated residual connections in the fifth and eighth layers, enhancing the model's performance by adding the output of the current hidden layer to its predecessor

\begin{table}[]
\caption{Parameter Settings}
\label{ParameterSetting}
\centering
{
\begin{tabular}{cccc}
\toprule
\textbf{Parameter}    & \textbf{Value}                &\textbf{ Parameter  }          & \textbf{Value   }                                     \\ \midrule
$p^{\rm max}$    & 0.1W                   & $I_u^{n}$            & $\mathrm{U}[10^{-8}, 2\times10^{-8}]$                 \\
$B_u^{\rm UL}$ & 10MHz                    & $\varpi_u^n$         & $0.015$                             \\
$\delta^{\rm max}$    & 8                    & $f_u^{n}$            & $\mathrm{U}[30, 110]\textrm{MHz}$                   \\
$p^{\rm min}$    & 0.01W                  & $\rho^{\rm max}$                   &0.5           \\
$N_0$        & -174 dbm/Hz             & $k$                  & $1.25 \times 10 ^{-26}$                      \\
$\Upsilon$          & 0.023dB                 & $\sigma$             & 3                                            \\
$d_u^n$    & $\mathrm{U}[100,300]m$ & $c_0$                & $2.7\times10^{8}$ cycles/sample                      \\ \bottomrule
\end{tabular}
}
\end{table}

{To demonstrate the efficacy of our proposed method, we evaluate its performance against the following advanced schemes:
\begin{itemize}
    % \item \textbf{FedAVG  \cite{mcmahan2017communication}}:  In each iteration, the devices perform multiple local iterations to update their local models firstly, and subsequently send the updated model parameters directly to the edge server. The transmission power of each device is set to half of the $ p^{\rm max}$.
\item FedSGD \cite{mcmahan2017communication}: In each iteration, the edge server transmits the current model to all devices. Subsequently, the devices conduct model training in parallel and ultimately transmit their newly updated gradients back to the edge server for aggregation and global model updating. 
\item SignSGD \cite{pmlr-v80-bernstein18a}: In each iteration, the devices calculate the average gradient based on all data samples but only sent the sign of each coordinate of the gradient to the edge server. 
\item FedMP \cite{10050151}: In each iteration, the edge server formulates the pruning rate selection as a multi-armed bandit (MAB) problem to determine the optimal pruning rate for each device and sends a pruned model to each device. Devices then train the pruned model in parallel  and send the updated model to the edge server for consequent aggregation and global model updating. 
\item STC \cite{STC}: Based on FedSGD, this scheme compresses both uplink and downlink communications using sparsification, ternarization, error accumulation, and optimal Golomb encoding. 
\end{itemize}
For fair comparison, the transmission power of each device in these baselines is set to half of the $ p^{\rm max}$.}

\subsection{Experimental Analysis}
\subsubsection{Ablation Experiments.}

\begin{figure*}[t]
  \centering
  \setlength{\subfigcapskip}{5pt}
  \setlength{\belowcaptionskip}{5pt}
  \subfigure[Convergence comparison.]{
  \begin{minipage}{0.31\linewidth}\label{ComparisonAblation_Convergence}
  \centering
  \includegraphics[height=4.3cm, width=5.8cm]{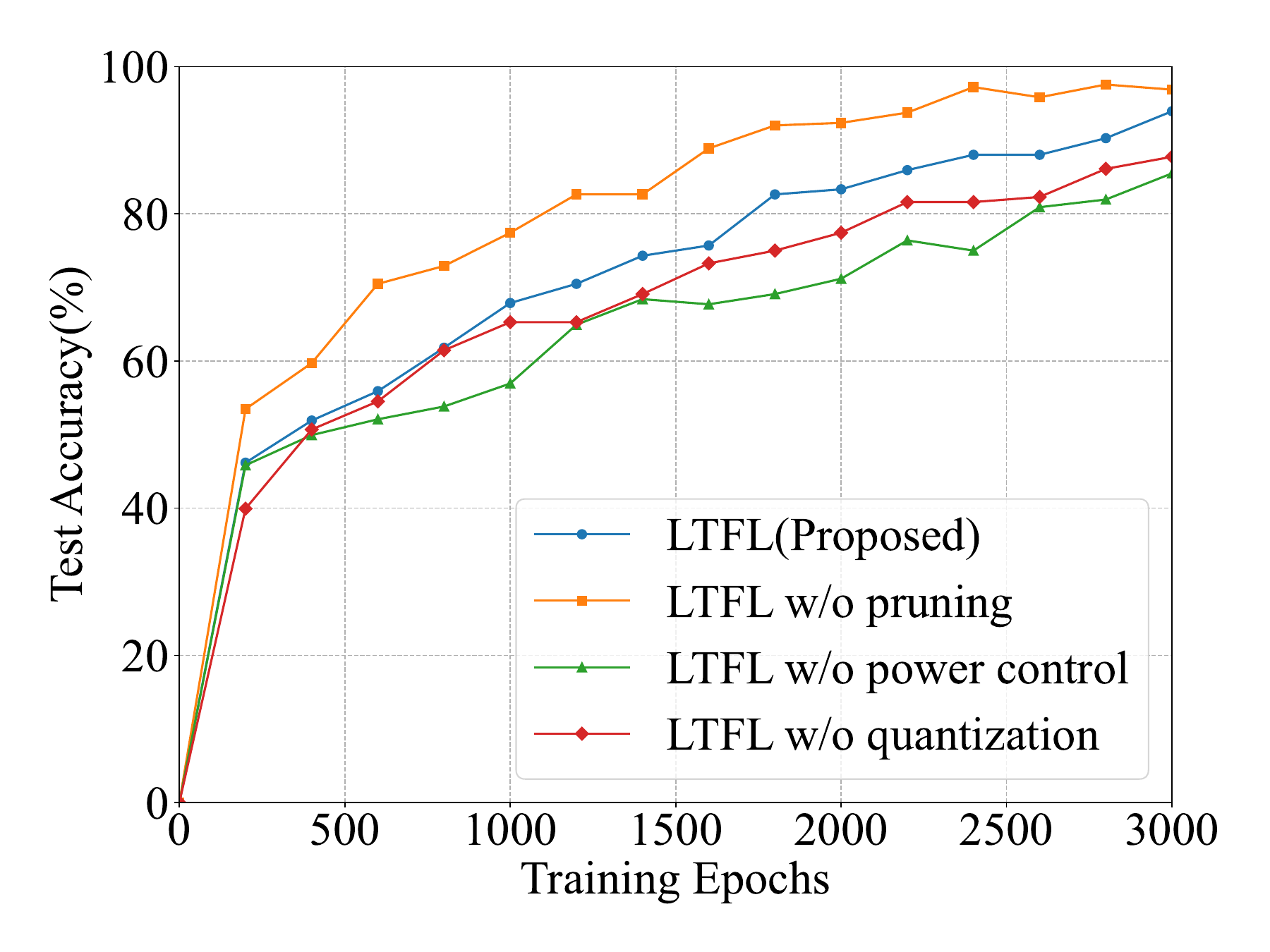}
  %\caption{fig2}
  \end{minipage}
  }
  \subfigure[Delay comparison.]{
  \begin{minipage}{0.31\linewidth}\label{ComparisonAblation_Delay}
  \centering
  \includegraphics[height=4.3cm, width=5.8cm]{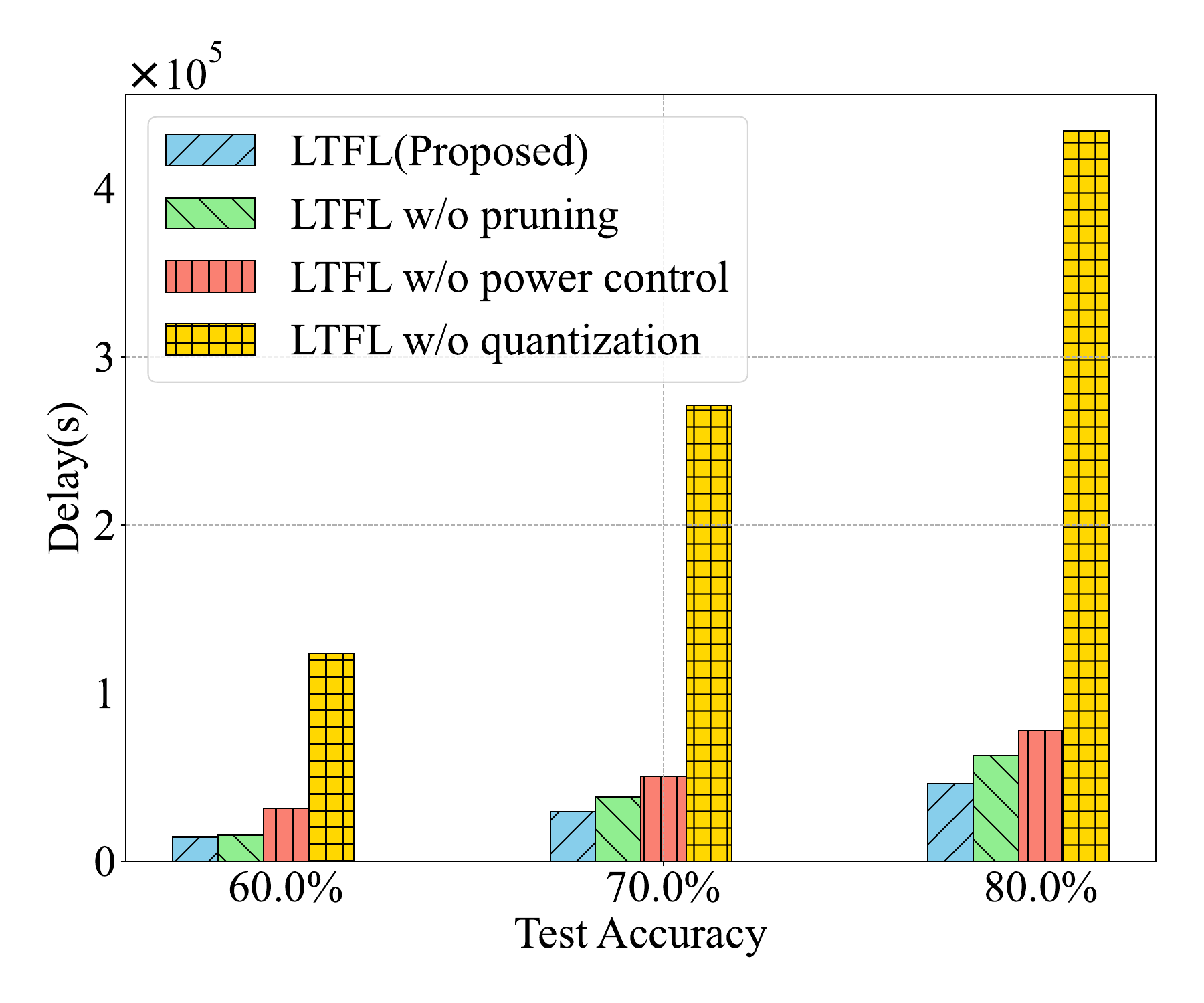}
  %\caption{fig2}
  \end{minipage}
  }
  \subfigure[Energy consumption comparison.]{
  \begin{minipage}{0.31\linewidth}\label{ComparisonAblation_EnergyConsumption}
  \centering
  \includegraphics[height=4.3cm, width=5.8cm]{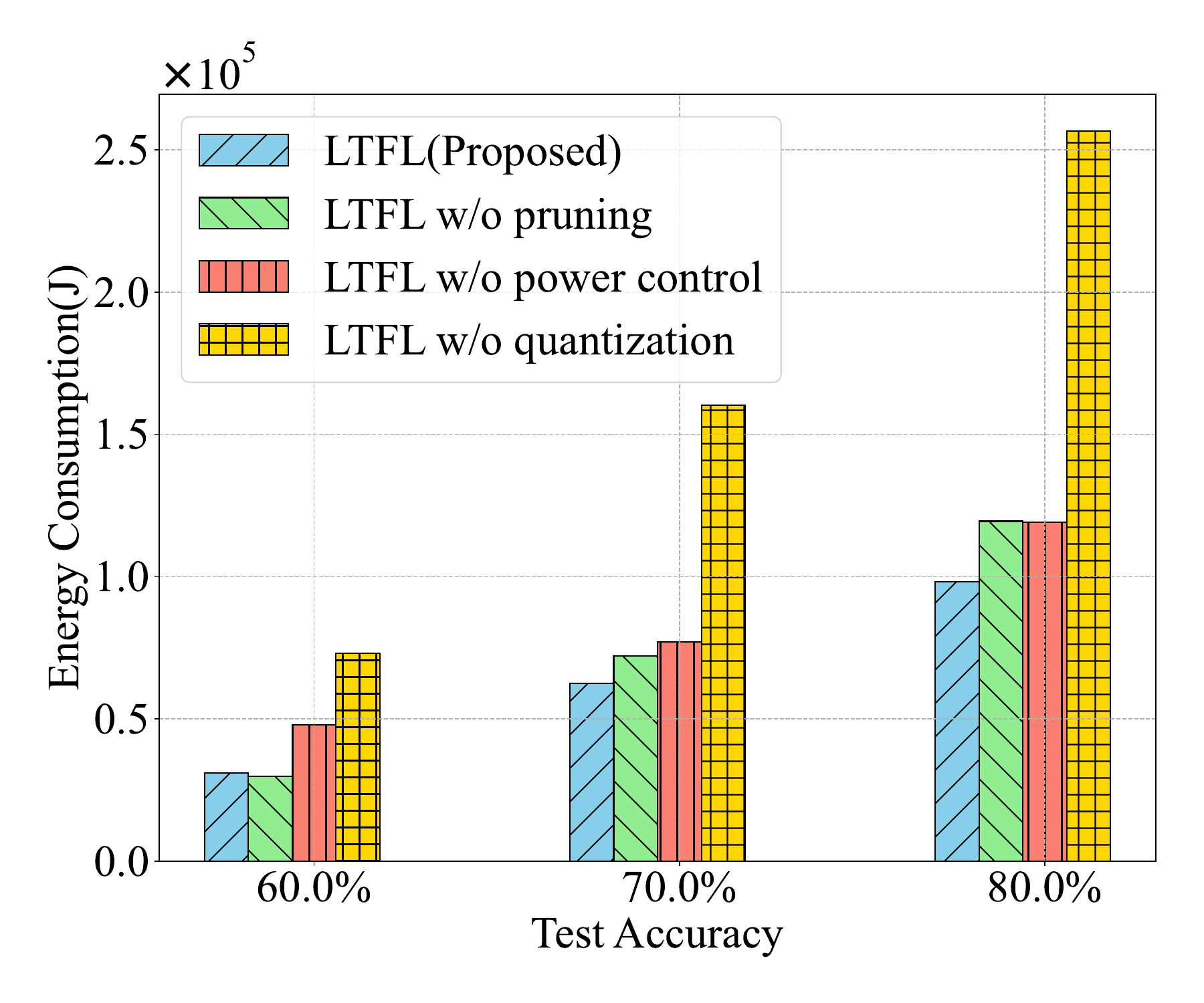}
  %\caption{}
  \end{minipage}
  }
  \centering
  \caption {Ablation experiments.}\label{ComparisonAblation}
\end{figure*}
% {To evaluate the effectiveness of LTFL, we compare its performance with three variants that exclude specific optimization techniques: (i) LTFL without pruning, (ii) LTFL without quantization, and (iii) LTFL without power control \footnote{In the figures, the abbreviation ``w/o” stands for ``without.”}.
% As shown in Fig. \ref{ComparisonAblation}, the method without pruning achieves the fastest convergence, as it does not involve model compression. However, the method without quantization converges similarly to LTFL but incurs the highest overhead due to significant communication costs. On the other hand, the method without power control exhibits the poorest convergence, as the lack of dynamic power adjustment leads to frequent device dropouts, severely hindering performance.
% In contrast, LTFL demonstrates a convergence rate slightly slower than the no-pruning variant but ultimately reaches a comparable accuracy. More importantly, LTFL significantly reduces both time overhead and energy consumption compared to the other methods, highlighting the effectiveness of combining pruning, quantization, and power control.}

{To evaluate the effectiveness of LTFL, we compare its performance with three ablated variants that exclude specific optimization techniques: (i) LTFL without pruning, (ii) LTFL without quantization, and (iii) LTFL without power control. As shown in Fig.  \ref{ComparisonAblation}, the variant without pruning achieves the fastest convergence, as it bypasses model compression. However, this comes at the cost of higher communication and energy overhead. The variant without quantization converges at a rate similar to LTFL but incurs the highest overall cost due to the large model size and resulting communication burden. In contrast, the variant without power control exhibits the worst convergence behavior, as the lack of adaptive power adjustment leads to frequent device dropouts, severely degrading system performance.
LTFL, while converging slightly more slowly than the no-pruning variant, ultimately achieves comparable accuracy. More importantly, it significantly reduces both time and energy overhead compared to all other variants. These results highlight the effectiveness of integrating pruning, quantization, and power control to balance convergence speed, accuracy, and resource efficiency.}

% \begin{figure}[t]
% \setlength{\subfigcapskip}{5pt}
% \setlength{\belowcaptionskip}{5pt}
%   \centering
%   % \includegraphics[height=4.5cm, width=6cm]{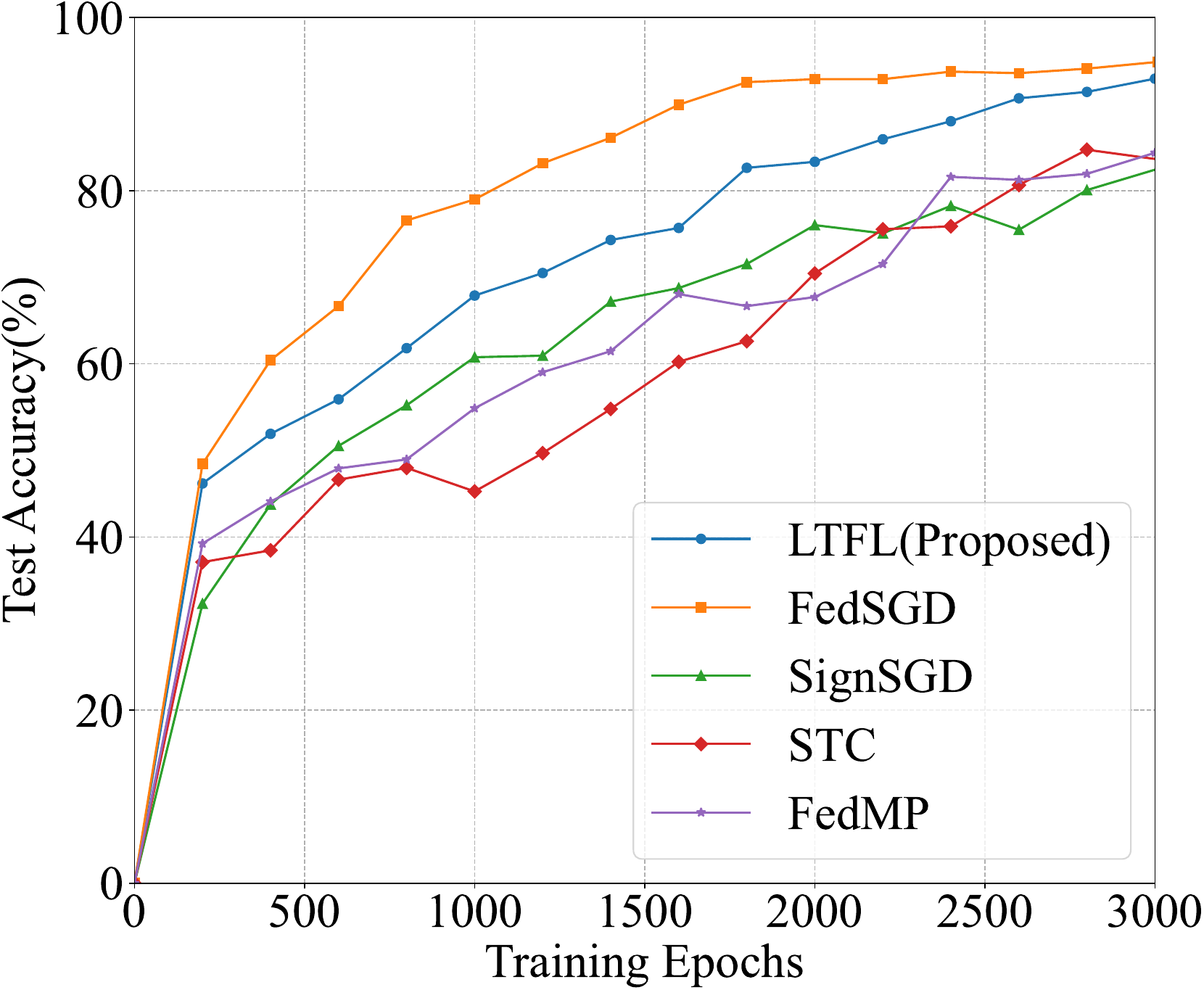}
%   \includegraphics[width=0.45\textwidth]{ComparisonDifferentScheme_Convergence.pdf}
%   \caption{Convergence comparison of different schemes.}\label{ConvergenceComparison}
% \end{figure}

\begin{figure*}[t]
  \centering
  \setlength{\subfigcapskip}{5pt}
  \setlength{\belowcaptionskip}{5pt}
  \subfigure[Convergence comparison.]{
  \begin{minipage}{0.31\linewidth}\label{ComparisonDifferentScheme_Convergence}
  \centering
  \includegraphics[height=4.3cm, width=5.8cm]{ComparisonDifferentScheme_Convergence.pdf}
  %\caption{fig2}
  \end{minipage}
  }
  \subfigure[Delay comparison.]{
  \begin{minipage}{0.31\linewidth}\label{ComparisonDifferentScheme_Delay}
  \centering
  \includegraphics[height=4.3cm, width=5.8cm]{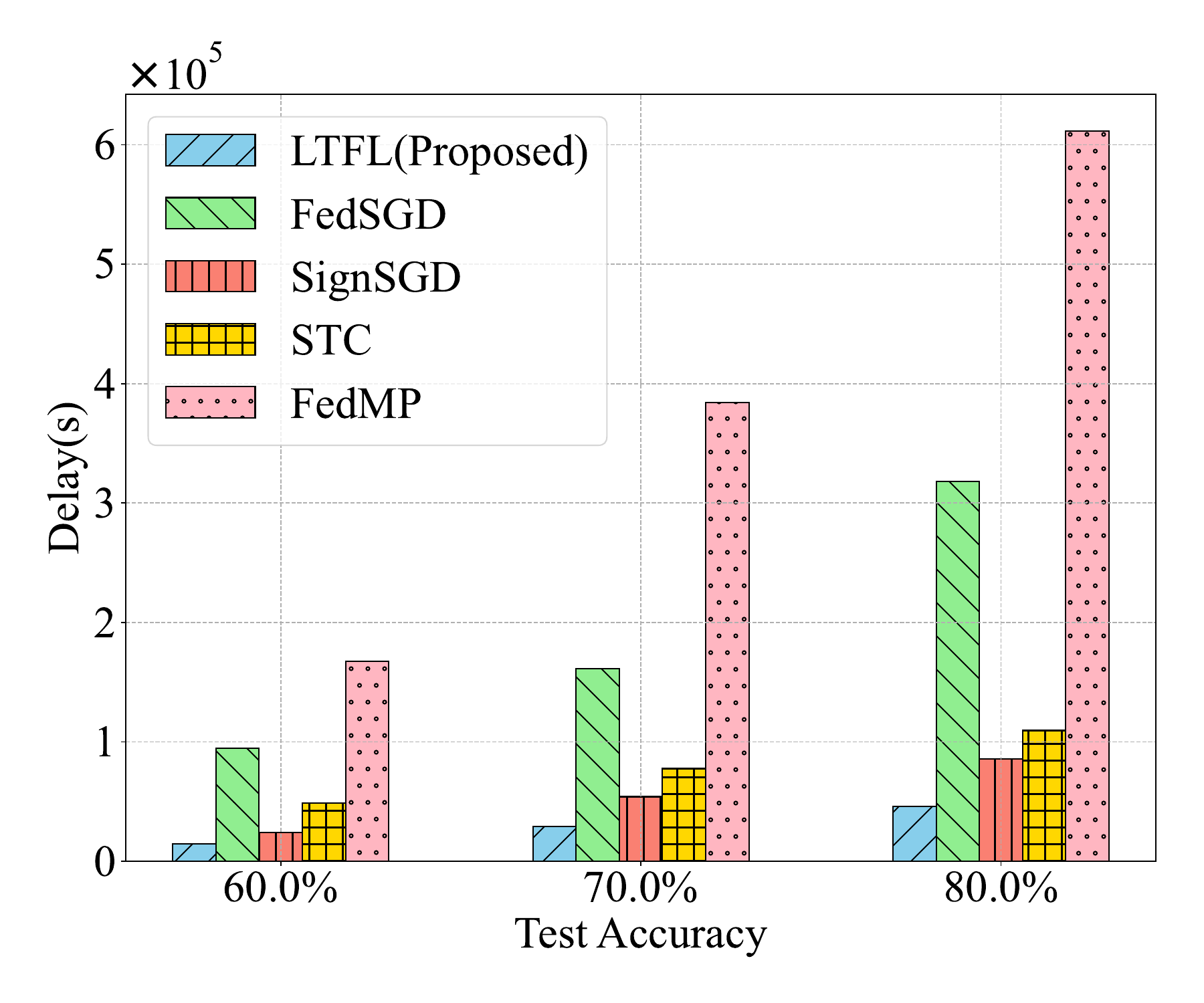}
  %\caption{fig2}
  \end{minipage}
  }
  \subfigure[Energy consumption comparison.]{
  \begin{minipage}{0.31\linewidth}\label{ComparisonDifferentScheme_EngergyConsumption}
  \centering
  \includegraphics[height=4.3cm, width=5.8cm]{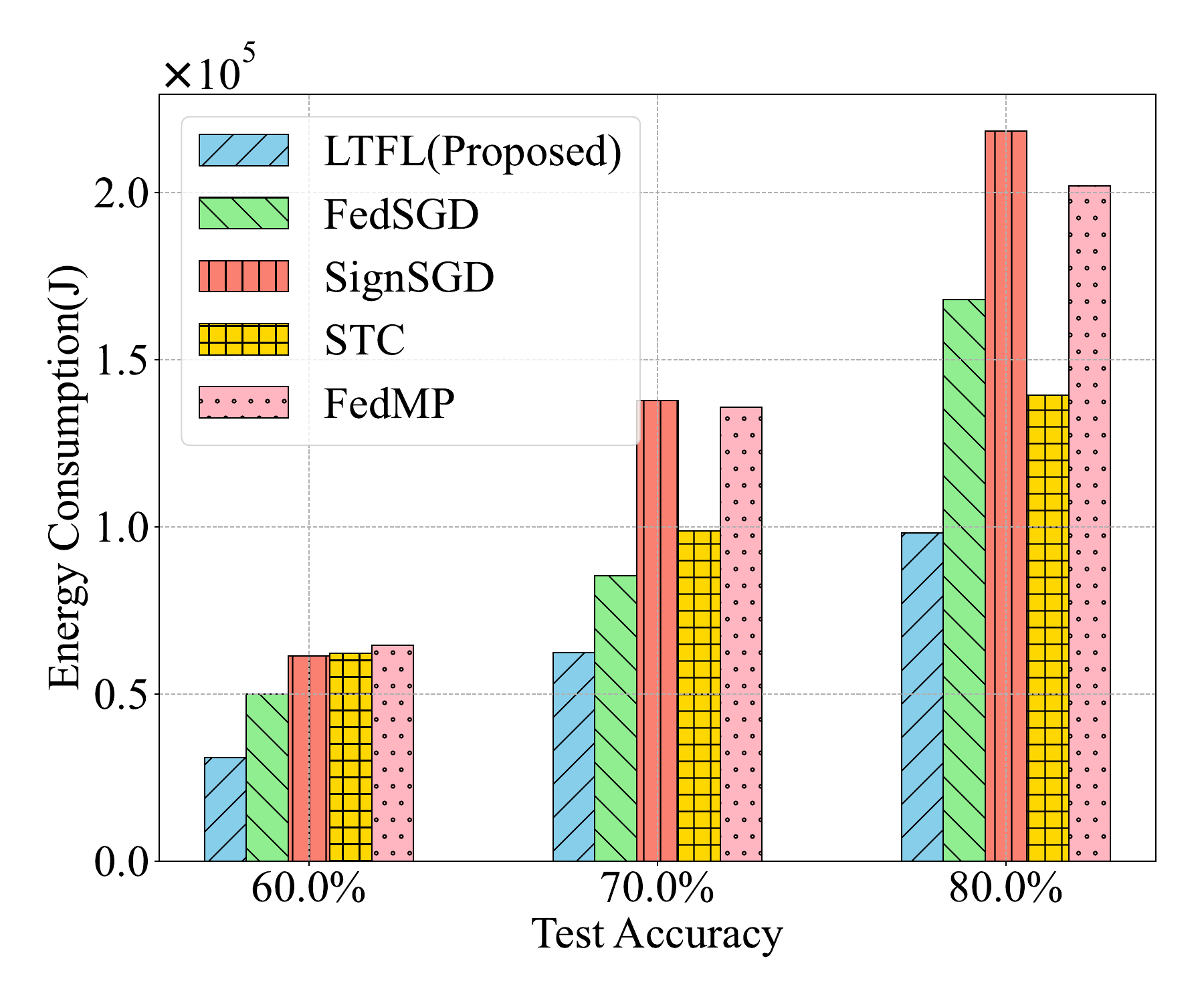}
  %\caption{}
  \end{minipage}
  }
  \centering
  \caption {Comparison of different schemes.}\label{ComparisonDifferentScheme}
\end{figure*}

\subsubsection{Comparison of Different FL Schemes}
% To demonstrate that the proposed LTFL can markedly diminish both time overhead and energy consumption while maintaining commendable accuracy. We benchmark the convergence performance, and training cost (including delay and energy consumption) of various FL schemes through a series of comparisons, as shown in Figs. \ref{ConvergenceComparison} and \ref{TrainingCostComparison}. Fig. \ref{ConvergenceComparison} illustrates that LTFL outstrips FedAVG, and SignSGD in terms of convergence, notwithstanding the slight performance drop that may stem from integrating model pruning and gradient quantization techniques. The superior convergence of LTFL is primarily due to its consideration of wireless transmission effects, an aspect that its counterparts neglect. This oversight can lead to unreliable transmissions and device dropouts, reducing the volume of data available for training and, consequently, hindering convergence. Further examination of Fig. \ref{TrainingCostComparison} reveals that LTFL excels in minimizing both time overhead and energy consumption to reach a given level of accuracy. Notably, while SignSGD requires transmission of only a minimal number of bits, its high communication loss results in increased computational demands and more iterations, culminating in suboptimal delay and energy performance.
{To demonstrate that the proposed LTFL can markedly reduce both time overhead and energy consumption while maintaining commendable accuracy, we benchmark the convergence performance and training cost (including delay and energy consumption) of various FL schemes through a series of comparisons, as shown in Fig. \ref{ComparisonDifferentScheme_Convergence}, Fig\ref{ComparisonDifferentScheme_Delay} and \ref{ComparisonDifferentScheme_EngergyConsumption}. Fig. \ref{ComparisonDifferentScheme_Convergence} illustrates that LTFL initially lags behind FedSGD; however, it ultimately converges to a level comparable to FedSGD while achieving significant reductions in energy consumption and delay, despite a slight performance drop that may stem from integrating model pruning and gradient quantization techniques. In contrast, although SignSGD, FedMP and STC conserve transmission resources, their inferior convergence performance results in higher cumulative energy consumption and delay. The superior convergence of LTFL is primarily due to its consideration of wireless transmission effects—an aspect neglected by the other schemes—which often leads to unreliable transmissions and device dropouts, thereby reducing the training data available for convergence. Further examination of Fig. \ref{ComparisonDifferentScheme_Delay} and Fig. \ref{ComparisonDifferentScheme_EngergyConsumption} confirms that LTFL excels in minimizing both time overhead and energy consumption to reach a given level of accuracy. Notably, while SignSGD requires transmission of only a minimal number of bits, its high communication loss incurs increased computational demands and more iterations, culminating in suboptimal delay and energy performance.}

% \begin{figure}[t]
% \centering
% \setlength{\subfigcapskip}{5pt}
% \setlength{\belowcaptionskip}{5pt}
% \subfigure[Delay comparison.]{
% \begin{minipage}{0.9\linewidth}\label{DelayComparison}
% \centering
% \includegraphics[width=\textwidth]{ComparisonDifferentScheme_Delay.pdf}
% %\caption{fig2}
% \end{minipage}
% }

% \subfigure[Energy consumption comparison.]{
% \begin{minipage}{0.9\linewidth}\label{EnergyConsumptionComparison}
% \centering
% \includegraphics[width=\textwidth]{ComparisonDifferentScheme_EngergyConsumption.pdf}
% %\caption{}
% \end{minipage}
% }
% \centering
% \caption {Training cost comparison of different schemes.}\label{TrainingCostComparison}
% \end{figure}
%\subsection{Impact of The Time and Energy Consumption Constraints}
%In Fig. X, we explore the impact of time constraint on FL's convergence, where $T^{\rm max}$  is set as XX, XX, XX, respectively. We can observe that all the FL schemes' convergence performance become better with the time constraint increase. The reason is that when the time constraint is tight, model pruning and gradient quantization tends to use to reduce the computation and communication time overhead, which reduces the amount of useful information of each iteration.  Fig. X shows the impact of energy consumption constraint on FL's convergence, where $T^{\rm max}$  is set as XX, XX, XX, respectively. Sharing with the same reason with the results demonstrated in Fig. X, the convergence performances are negatively correlated with energy consumption constraints.

\subsubsection{Impact of  The Channel Quality}

\begin{figure*}[t]
  \centering
  \setlength{\subfigcapskip}{5pt}
  \setlength{\belowcaptionskip}{5pt}
  \subfigure[Convergence comparison (Poor channel conditions).]{
  \begin{minipage}{0.32\linewidth}\label{ComparisonChannel_w=001_Convergence}
  \centering
  \includegraphics[height=4.3cm, width=5.8cm]{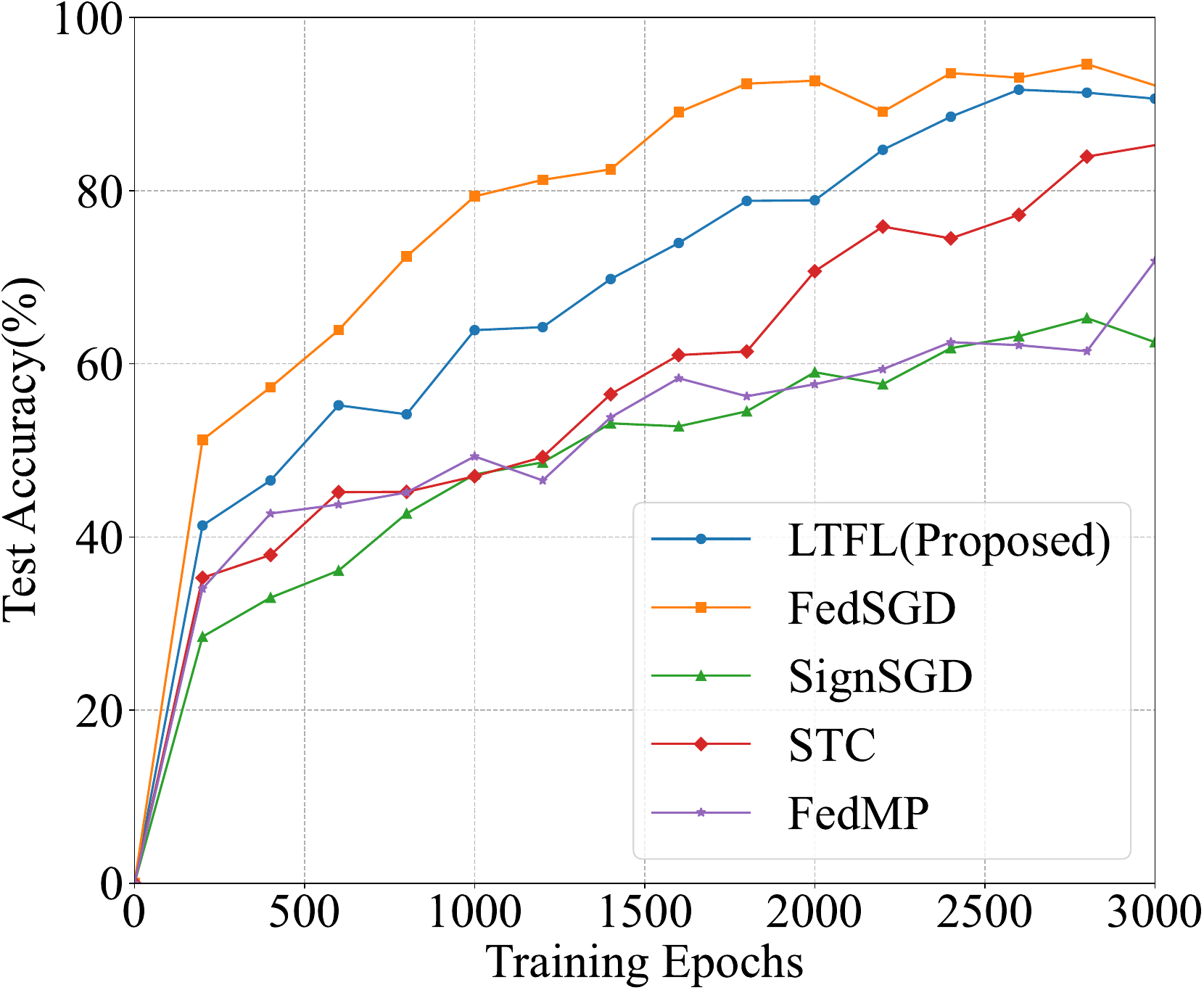}
  %\caption{fig2}
  \end{minipage}
  }
  \subfigure[Convergence comparison (Normal channel conditions).]{
  \begin{minipage}{0.31\linewidth}\label{ComparisonChannel_w=002_Convergence}
  \centering
  \includegraphics[height=4.3cm, width=5.8cm]{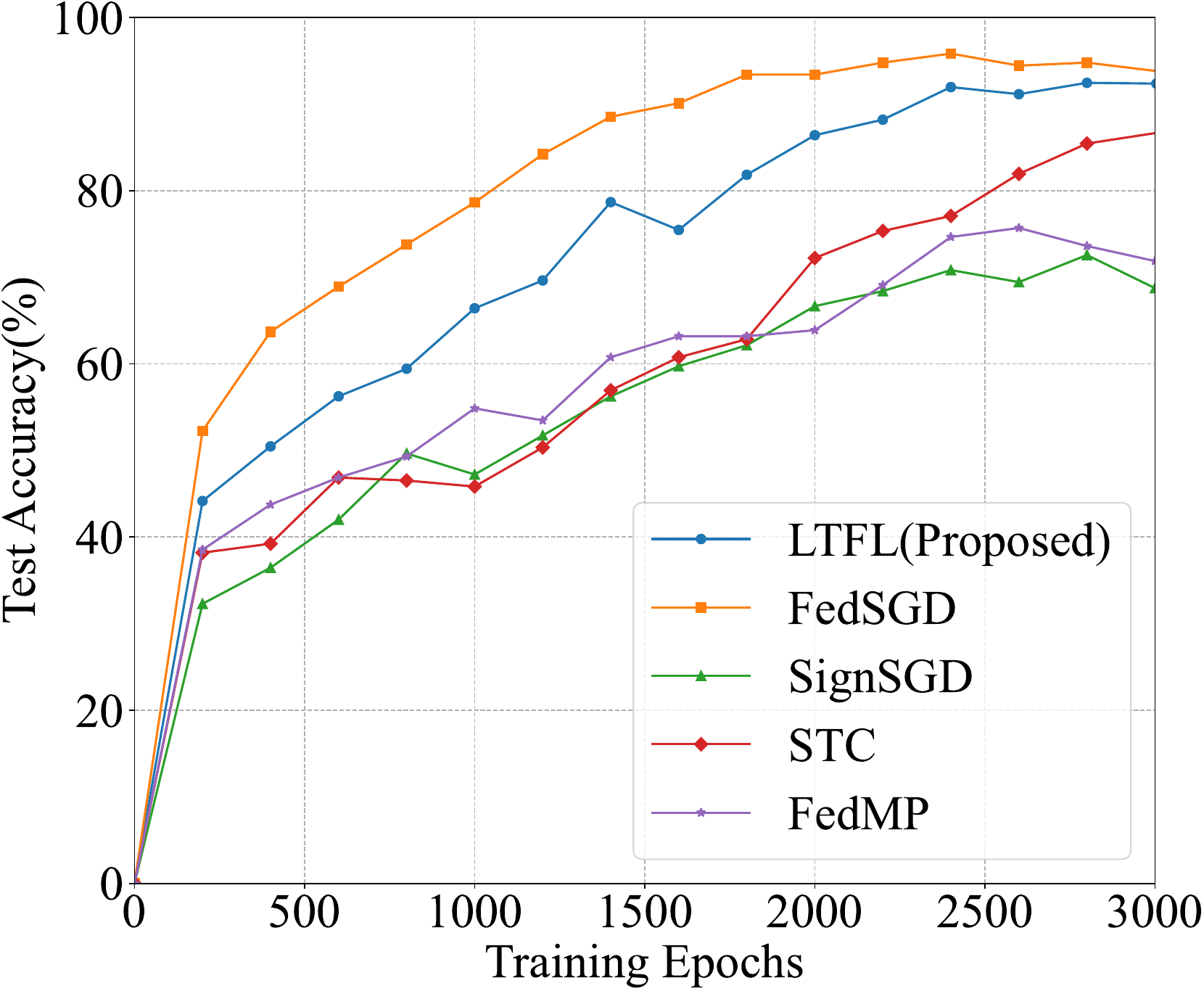}
  %\caption{fig2}
  \end{minipage}
  }
  \subfigure[Convergence comparison (Good channel conditions).]{
  \begin{minipage}{0.31\linewidth}\label{ComparisonChannel_w=003_Convergence}
  \centering
  \includegraphics[height=4.3cm, width=5.8cm]{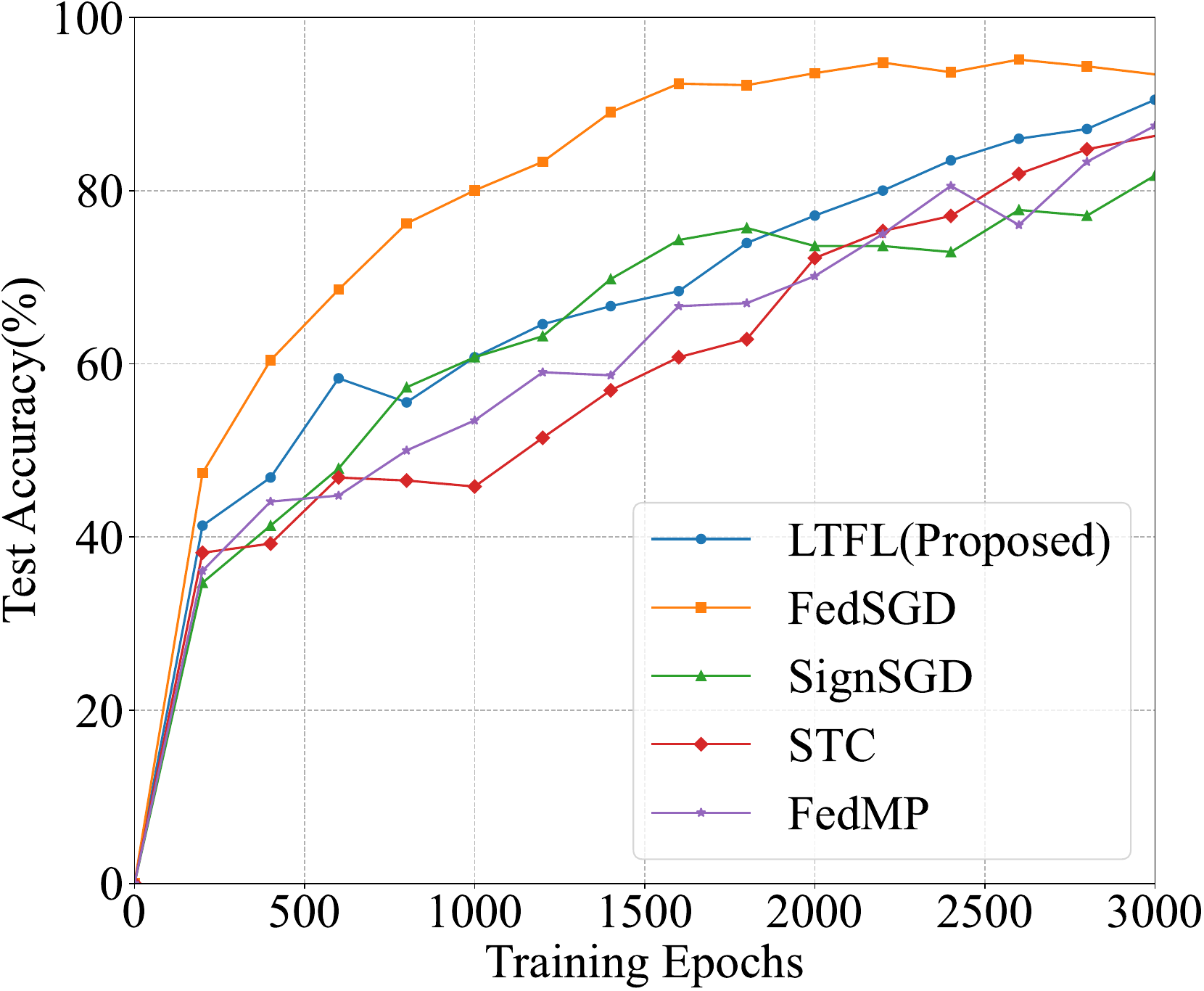}
  %\caption{}
  \end{minipage}
  }
  \centering
  \caption {Convergence comparison of different schemes under different channel conditions.}\label{ComparisonChannel_Convergence}
\end{figure*}

\begin{figure*}[t]
  \centering
  \setlength{\subfigcapskip}{5pt}
  \setlength{\belowcaptionskip}{5pt}
  \subfigure[Delay comparison (Poor channel conditions).]{
  \begin{minipage}{0.32\linewidth}\label{ComparisonChannel_w=001_Delay}
  \centering
  \includegraphics[height=4.3cm, width=5.8cm]{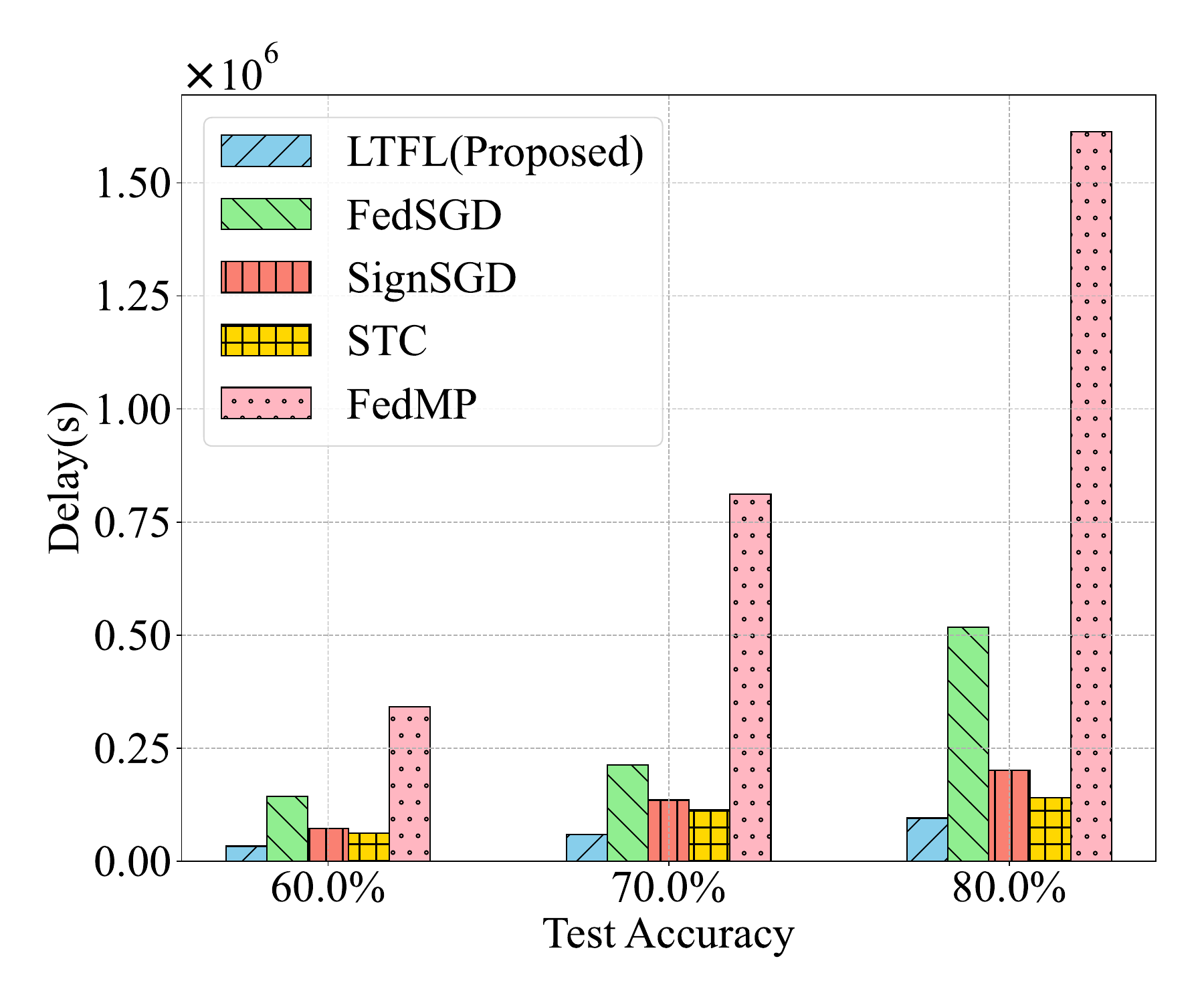}
  %\caption{fig2}
  \end{minipage}
  }
  \subfigure[Delay comparison (Normal channel conditions).]{
  \begin{minipage}{0.31\linewidth}\label{ComparisonChannel_w=002_Delay}
  \centering
  \includegraphics[height=4.3cm, width=5.8cm]{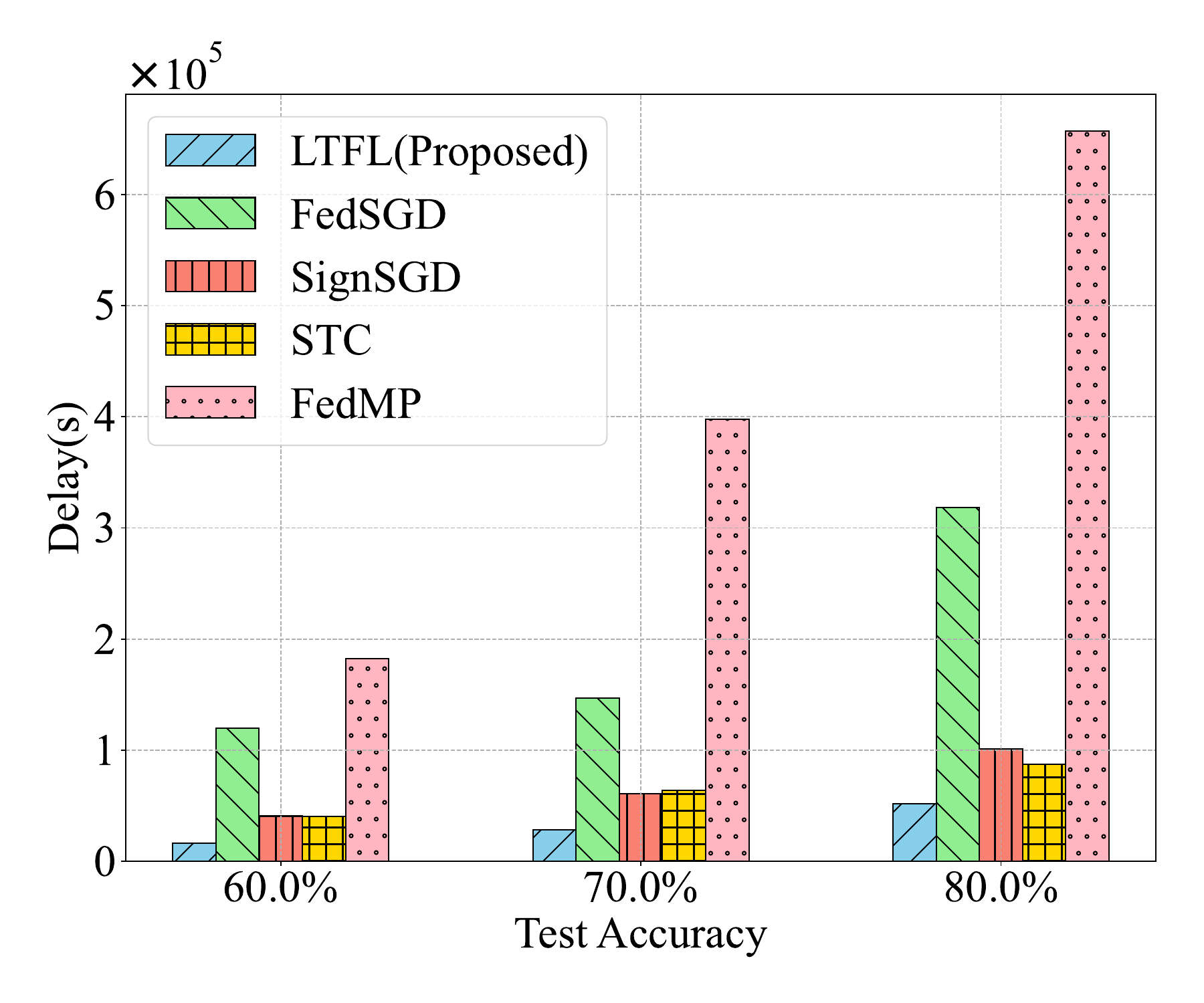}
  %\caption{fig2}
  \end{minipage}
  }
  \subfigure[Delay comparison (Good channel conditions).]{
  \begin{minipage}{0.31\linewidth}\label{ComparisonChannel_w=003_Delay}
  \centering
  \includegraphics[height=4.3cm, width=5.8cm]{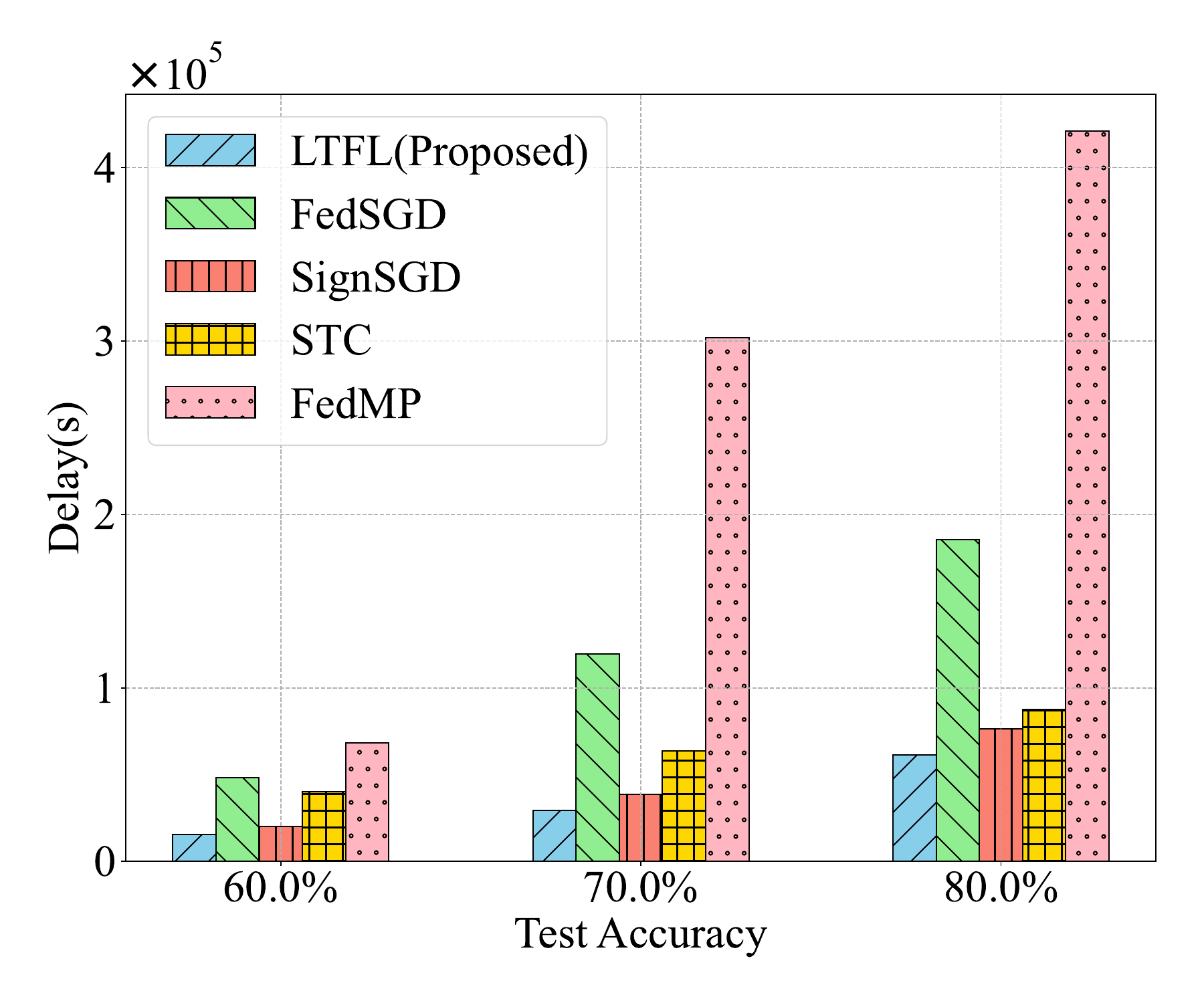}
  %\caption{}
  \end{minipage}
  }
  \centering
  \caption {Delay comparison of different schemes under different channel conditions.}\label{ComparisonChannel_Delay}
\end{figure*}

\begin{figure*}[t]
  \centering
  \setlength{\subfigcapskip}{5pt}
  \setlength{\belowcaptionskip}{5pt}
  \subfigure[Energy consumption comparison (Poor channel conditions).]{
  \begin{minipage}{0.31\linewidth}\label{ComparisonChannel_w=001_EnergyConsumption}
  \centering
  \includegraphics[height=4.3cm, width=5.8cm]{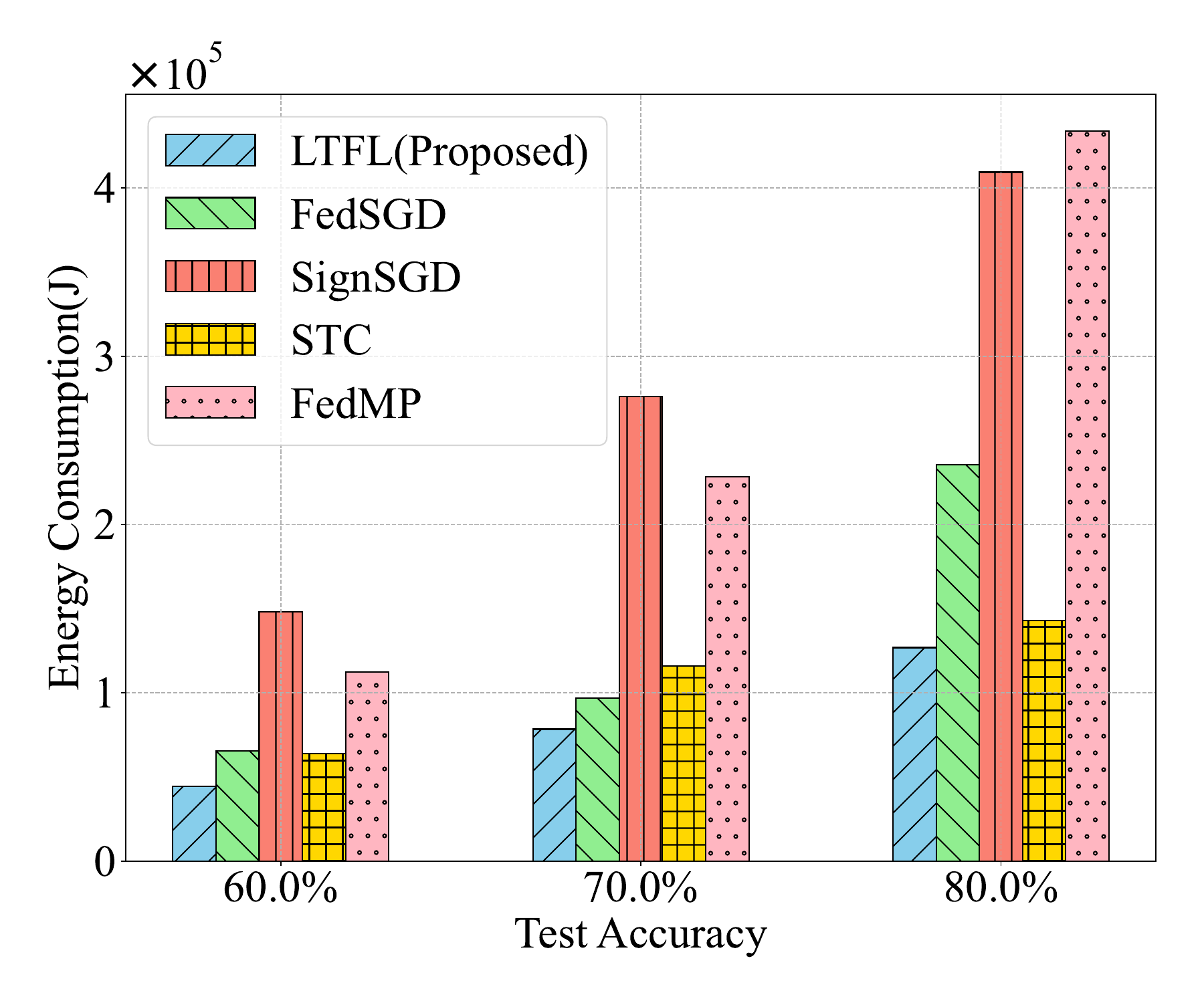}
  %\caption{fig2}
  \end{minipage}
  }
  \subfigure[Energy consumption comparison (Normal channel conditions).]{
  \begin{minipage}{0.31\linewidth}\label{ComparisonChannel_w=002_EnergyConsumption}
  \centering
  \includegraphics[height=4.3cm, width=5.8cm]{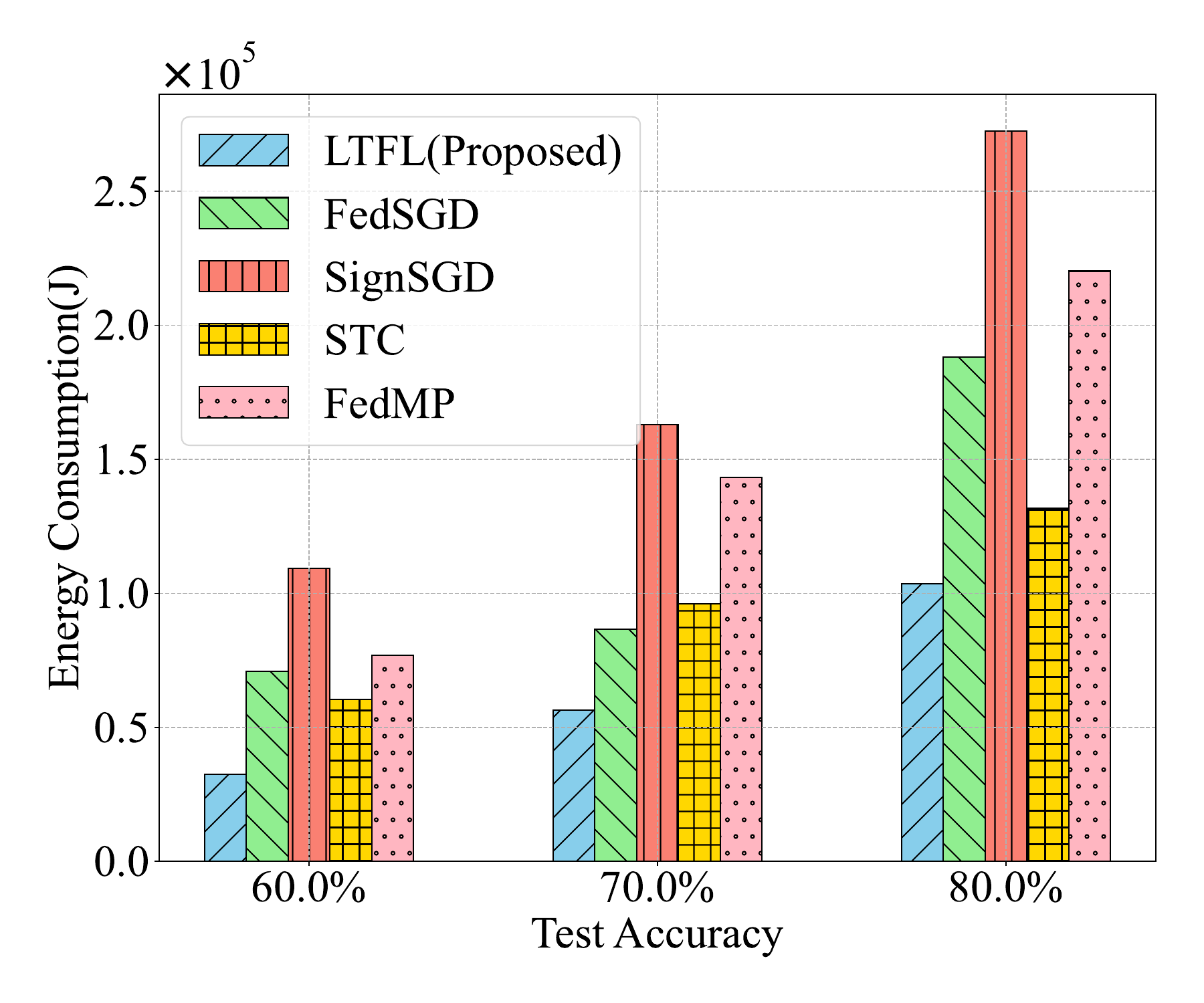}
  %\caption{fig2}
  \end{minipage}
  }
  \subfigure[Energy consumption comparison (Good channel conditions).]{
  \begin{minipage}{0.31\linewidth}\label{ComparisonChannel_w=003_EnergyConsumption}
  \centering
  \includegraphics[height=4.3cm, width=5.8cm]{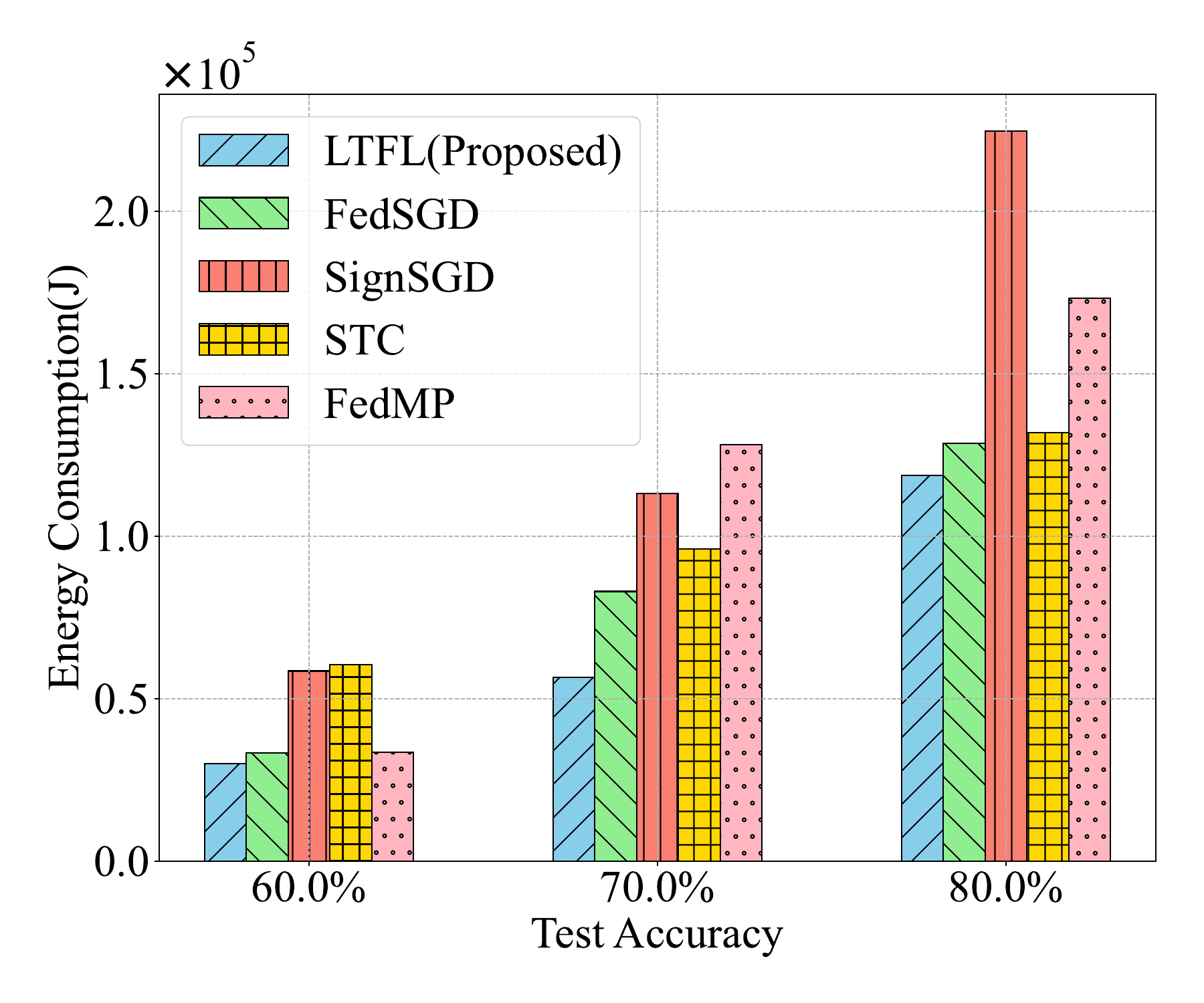}
  %\caption{}
  \end{minipage}
  }
  \centering
  \caption {Energy consumption comparison of different schemes under different channel conditions.}\label{ComparisonChannel_EnergyConsumption}
\end{figure*}

{Fig. \ref{ComparisonChannel_Convergence}, Fig. \ref{ComparisonChannel_Delay} and Fig. \ref{ComparisonChannel_EnergyConsumption} effectively illustrates the adaptability of the proposed LTFL framework in various wireless environments by setting distinct values of $\varpi^{n}_u$. Specifically, $\varpi^{n}_u = 0.01$ is indicative of poor channel quality, whereas $\varpi^{n}_u = 0.02$ signifies normal channel quality, and $\varpi^{n}_u = 0.03$ signifies good channel quality. Observations from Figs. \ref{ComparisonChannel_Convergence} reveal a direct correlation between channel quality and convergence performance. Poorer channel quality leads to a higher packet error rate, which in turn causes more device outages and a reduction in the volume of available training data. Despite these challenges, LTFL consistently outperforms other schemes due to its ability to flexibly adjust transmission power, model pruning ratio, and quantization level. Furthermore, Fig. \ref{ComparisonChannel_Delay} and Fig. \ref{ComparisonChannel_EnergyConsumption} demonstrate LTFL's efficiency in achieving the desired accuracy with minimal training costs in different wireless conditions. This underscores its robust adaptability to varying wireless environments.}

\subsubsection{Impact of  The Number of Devices}
\begin{figure}[t]
\centering
\setlength{\subfigcapskip}{5pt}
\setlength{\belowcaptionskip}{5pt}
\subfigure[Delay comparison.]{
\begin{minipage}{0.8\linewidth}\label{ComparisonDeviceNumber_Delay}
\centering
\includegraphics[width=1\textwidth]{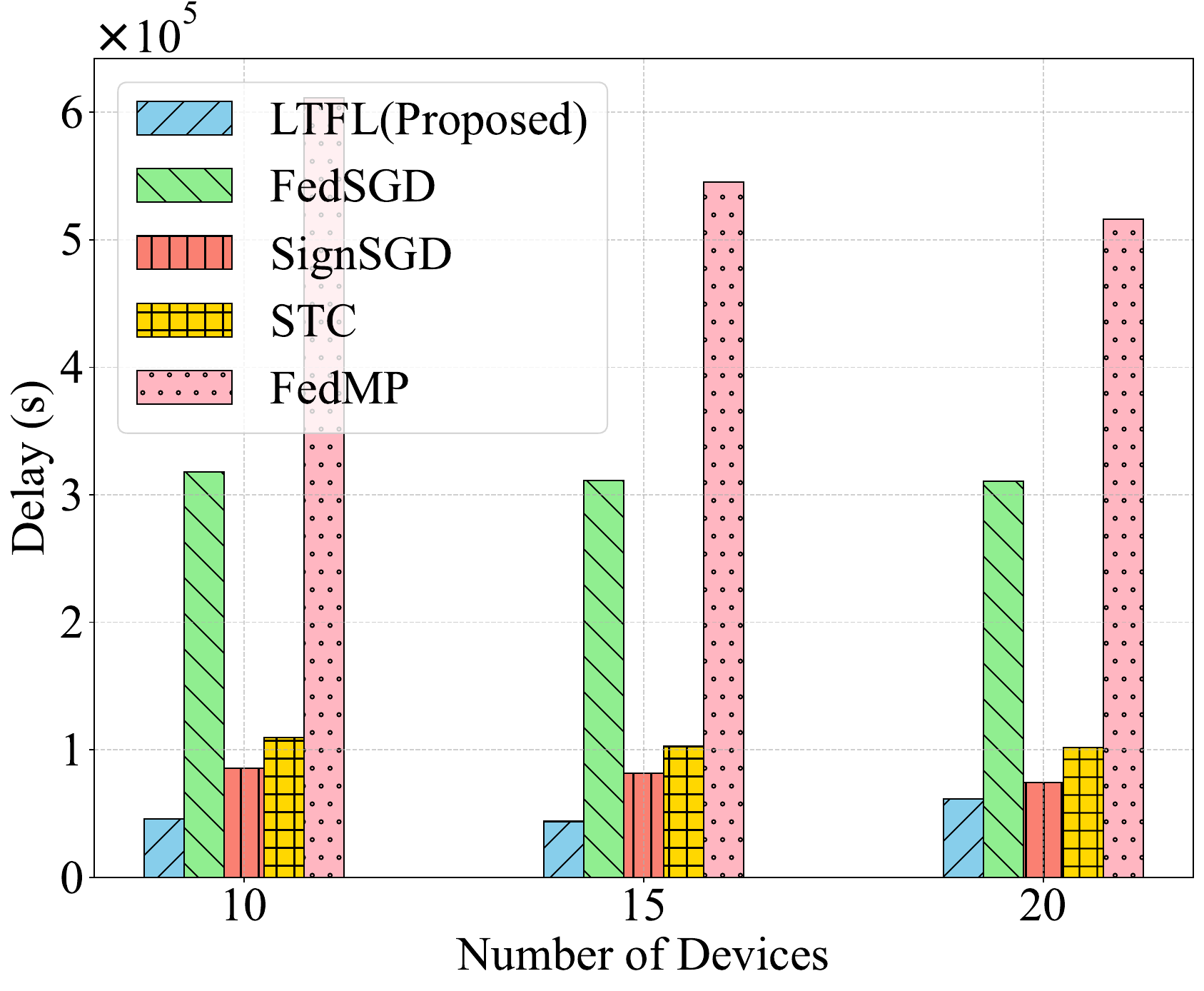}
%\caption{fig2}
\end{minipage}
}

\subfigure[Energy consumption comparison.]{
\begin{minipage}{0.8\linewidth}\label{ComparisonDeviceNumber_EnergyConsumption}
\centering
\includegraphics[width=\textwidth]{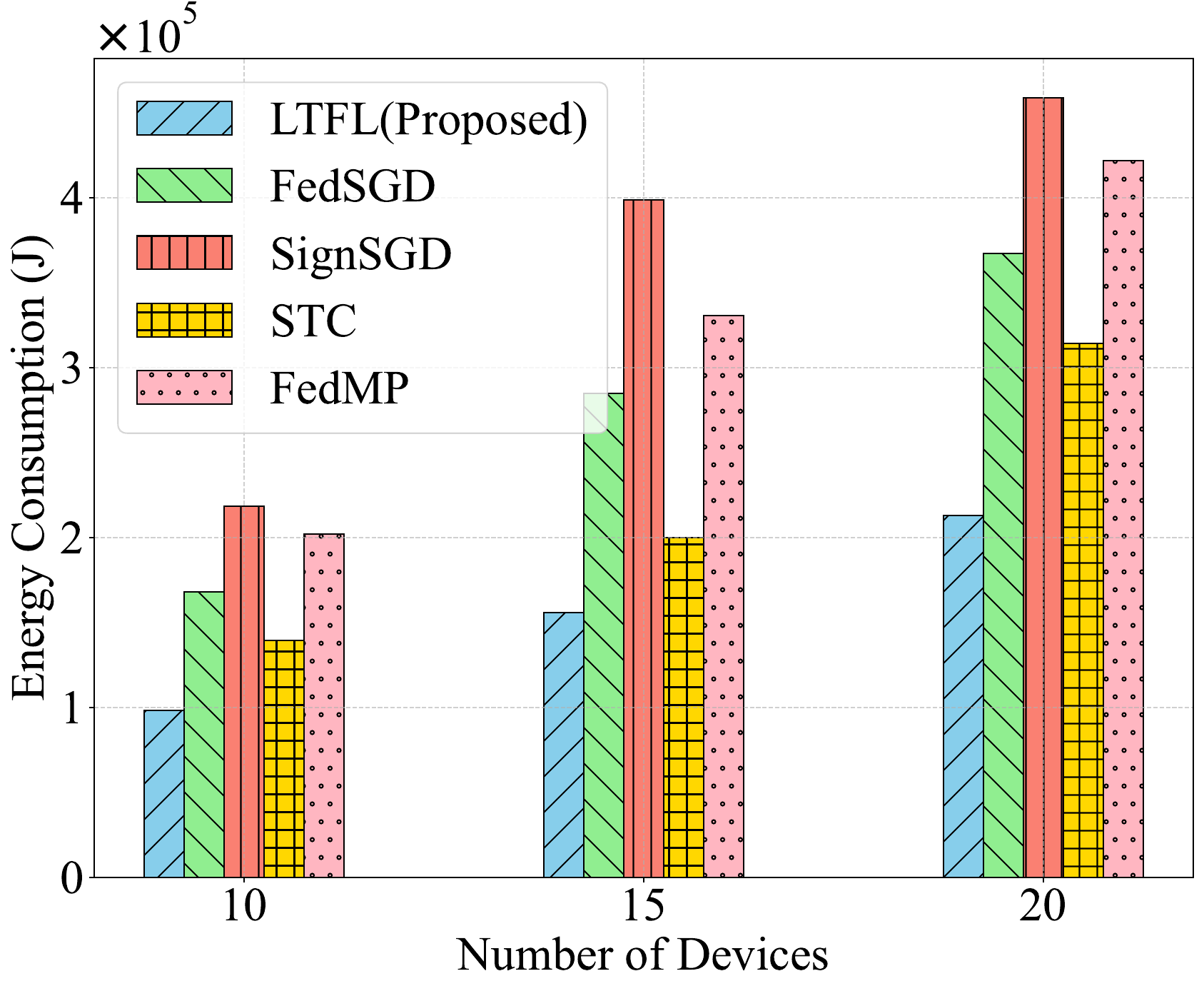}
%\caption{}
\end{minipage}
}
\centering
\caption {Training cost comparison of different schemes versus the number of devices.}\label{ComparisonDeviceNumber}
\end{figure}

{Fig. \ref{ComparisonDeviceNumber} illustrates the impact of the number of devices on training cost. In our experiments, we set the number of devices to 10, 15, and 20, respectively. It becomes evident that an increase in the number of devices corresponds to lower training delay. The rationale behind this trend observed in Fig. \ref{ComparisonDeviceNumber_Delay} mirrors that in Fig. \ref{ComparisonChannel_Convergence} and Fig. \ref{ComparisonChannel_Delay}. Essentially, more devices mean a larger dataset for model training, which improves convergence performance, leading to fewer training epochs. As a result, the overall training delay is reduced. However, the energy consumption is not necessarily lower, as the impact of the reduced number of training epochs on energy consumption is offset by the increased number of devices, with each device contributing to the total energy consumption. Importantly, the LTFL framework consistently outperforms other schemes in terms of training cost efficiency, demonstrating its strong adaptability to variations in network scale.}

\subsubsection{Performance on Non-I.I.D Data}
% \begin{figure}[t]
% \centering
% \setlength{\subfigcapskip}{5pt}
% \setlength{\belowcaptionskip}{5pt}
% \subfigure[Convergence comparison on non-i.i.d data.]{
% \begin{minipage}{0.8\linewidth}\label{Nonidd_Convergence}
% \centering
% \includegraphics[width=1\textwidth]{ComparisonNonidd_Convergence.pdf}
% %\caption{fig2}
% \end{minipage}
% }

% \subfigure[Delay comparison on non-i.i.d data.]{
% \begin{minipage}{0.8\linewidth}\label{Nonidd_Delay}
% \centering
% \includegraphics[width=1\textwidth]{ComparisonNonidd_Delay.pdf}
% %\caption{fig2}
% \end{minipage}
% }

% \subfigure[Energy consumption comparison on non-i.i.d data.]{
% \begin{minipage}{0.8\linewidth}\label{ComparisonNonidd_EnergyConsumption}
% \centering
% \includegraphics[width=\textwidth]{Nonidd_EnergyConsumption.pdf}
% %\caption{}
% \end{minipage}
% }
% \centering
% \caption {Comparison of different schemes on non-i.i.d data.}\label{ComparisonNonidd}
% \end{figure}

{\color{black}To evaluate the performance of various baselines on non-i.i.d data, we adopt the Dirichlet distribution with a concentration parameter of $\alpha=0.1$, $\alpha=0.5$ and $\alpha=0.9$. This distribution is commonly used to simulate real-world data heterogeneity, where the data collected by each device is inherently imbalanced.
As shown in Fig. \ref{ComparisonNoniid_Convergence}, Fig. \ref{ComparisonNoniid_Delay} and \ref{ComparisonNoniid_EnergyConsumption}, all baselines experience a drop in accuracy due to the inherent data imbalance. Nevertheless, LTFL consistently achieves the lowest overhead while maintaining an accuracy level similar to FedSGD.
In conclusion, these results highlight LTFL's robustness in significantly reducing training overhead without compromising accuracy, even under challenging non-i.i.d conditions.}

\begin{figure*}[t]
  \centering
  \setlength{\subfigcapskip}{5pt}
  \setlength{\belowcaptionskip}{5pt}
  \subfigure[{\color{black}Convergence comparison (concentration parameter $\alpha=0.1$).}]{
  \begin{minipage}{0.31\linewidth}\label{ComparisonNoniid0.1_Convergence}
  \centering
  \includegraphics[height=4.3cm, width=5.8cm]{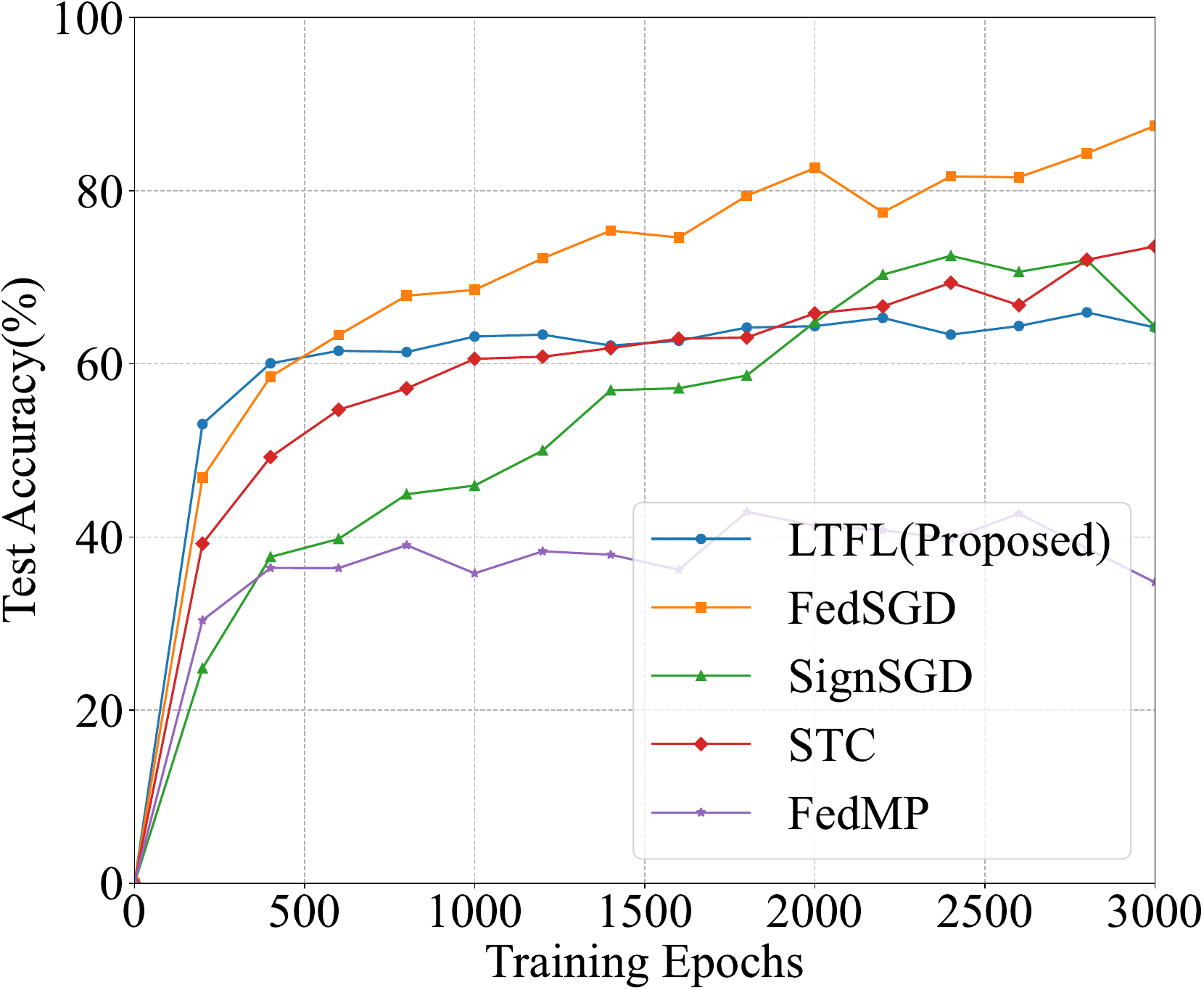}
  %\caption{fig2}
  \end{minipage}
  }
  \subfigure[{\color{black}Convergence comparison (concentration parameter $\alpha=0.5$).}]{
  \begin{minipage}{0.31\linewidth}\label{ComparisonNoniid0.5_Convergence}
  \centering
  \includegraphics[height=4.3cm, width=5.8cm]{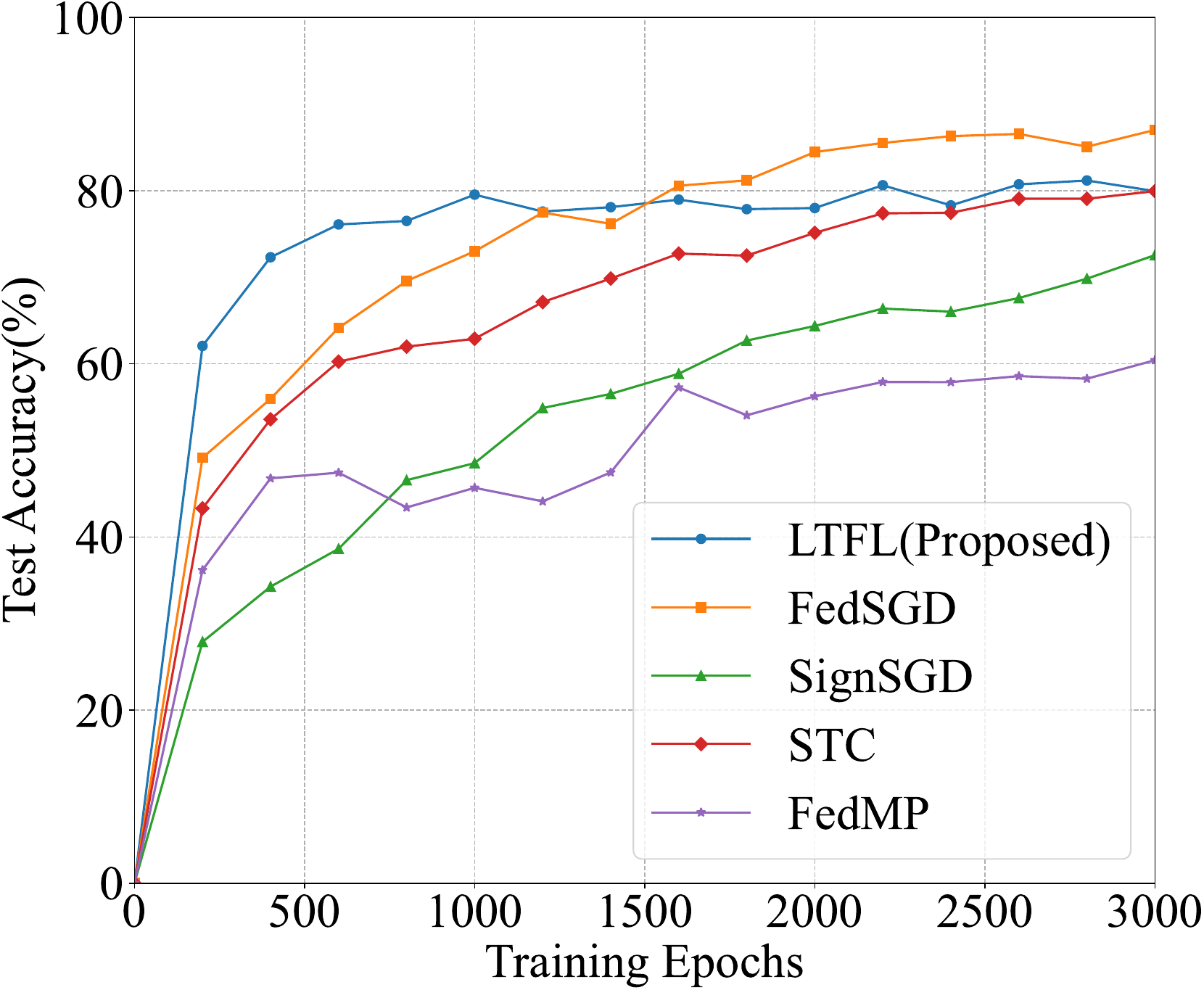}
  %\caption{fig2}
  \end{minipage}
  }
  \subfigure[{\color{black}Convergence comparison (concentration parameter $\alpha=0.9$).}]{
  \begin{minipage}{0.31\linewidth}\label{ComparisonNoniid0.9_Convergence}
  \centering
  \includegraphics[height=4.3cm, width=5.8cm]{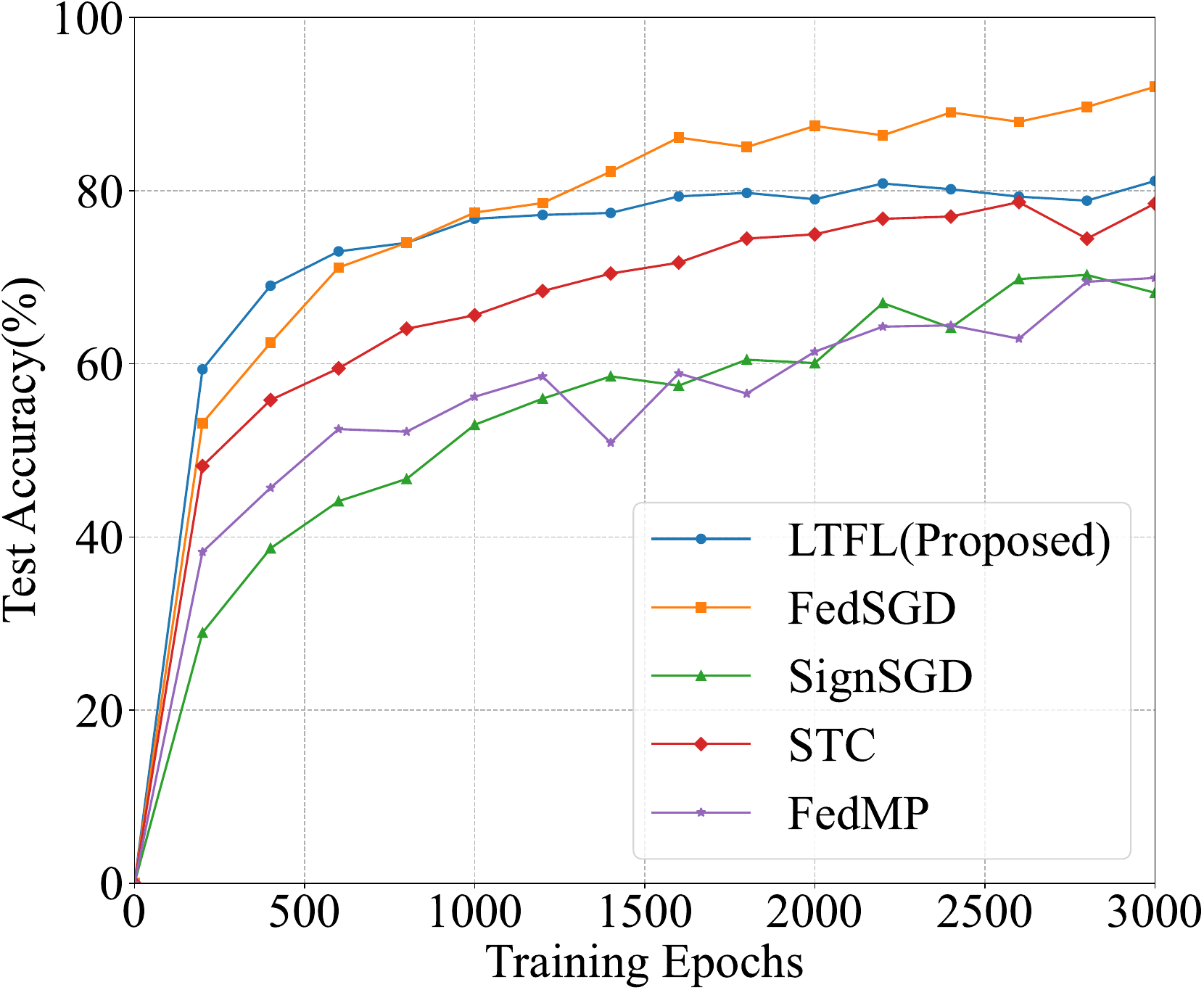}
  %\caption{}
  \end{minipage}
  }
  \centering
  \caption {{\color{black}Convergence comparison of different schemes under different non-i.i.d. scenarios.}}\label{ComparisonNoniid_Convergence}
\end{figure*}

\begin{figure*}[t]
  \centering
  \setlength{\subfigcapskip}{5pt}
  \setlength{\belowcaptionskip}{5pt}
  \subfigure[{\color{black}Delay comparison (concentration parameter $\alpha=0.1$).}]{
  \begin{minipage}{0.32\linewidth}\label{ComparisonNoniid0.1_Delay}
  \centering
  \includegraphics[height=4.3cm, width=5.8cm]{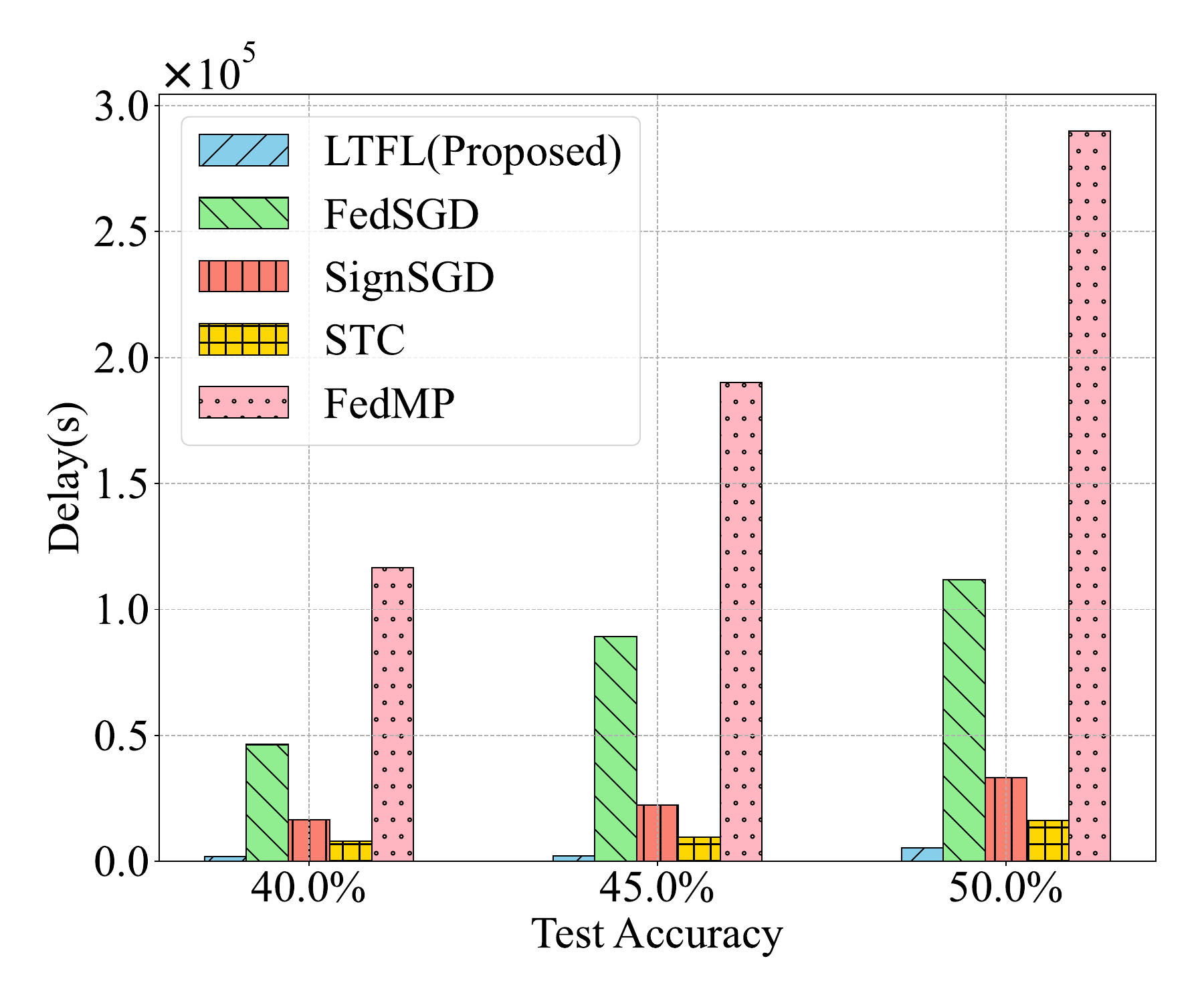}
  %\caption{fig2}
  \end{minipage}
  }
  \subfigure[{\color{black}Delay comparison (concentration parameter $\alpha=0.5$).}]{
  \begin{minipage}{0.31\linewidth}\label{ComparisonNoniid0.5_Delay}
  \centering
  \includegraphics[height=4.3cm, width=5.8cm]{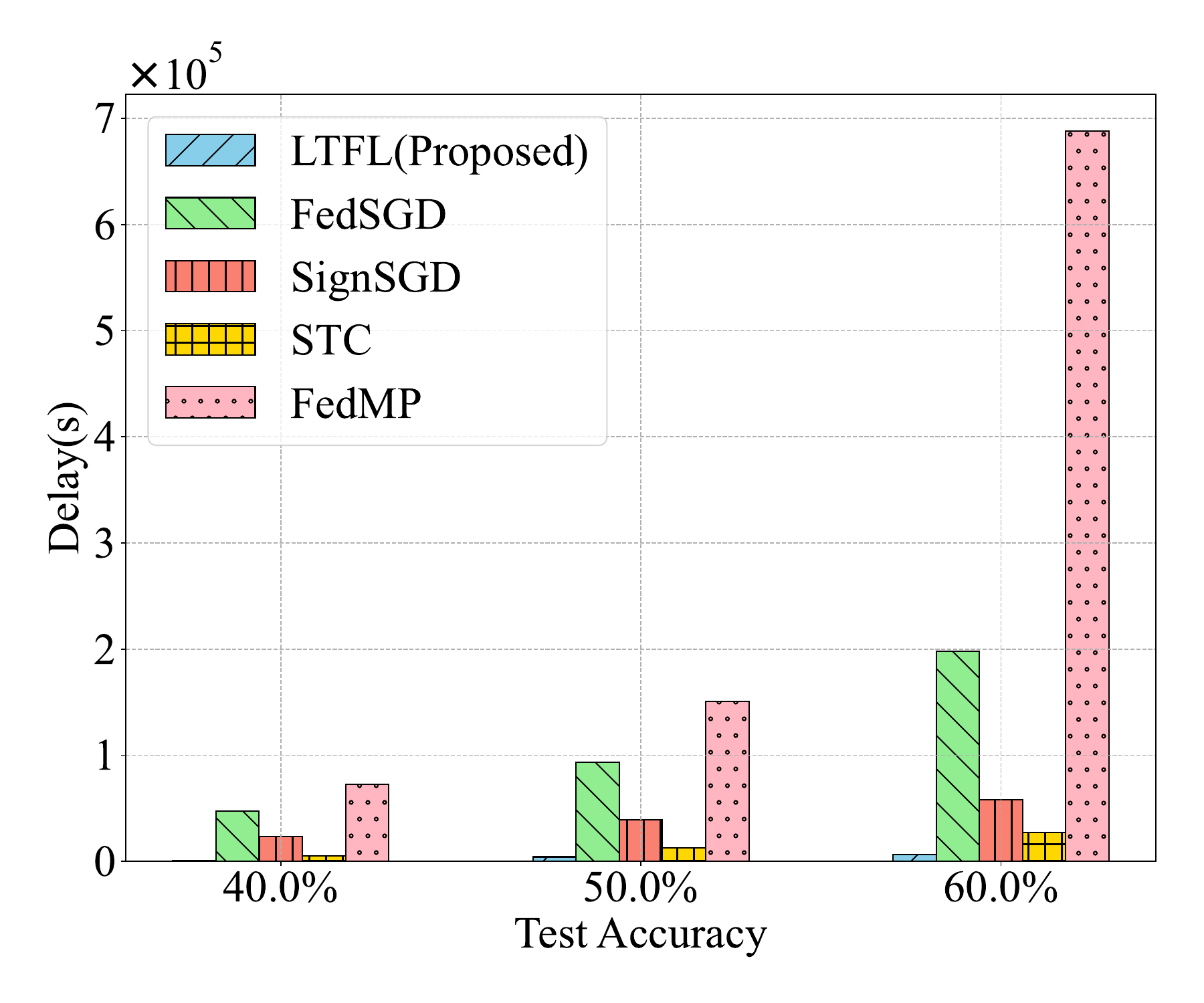}
  %\caption{fig2}
  \end{minipage}
  }
  \subfigure[{\color{black}Delay comparison (concentration parameter $\alpha=0.9$).}]{
  \begin{minipage}{0.31\linewidth}\label{ComparisonNoniid0.5_Delay}
  \centering
  \includegraphics[height=4.3cm, width=5.8cm]{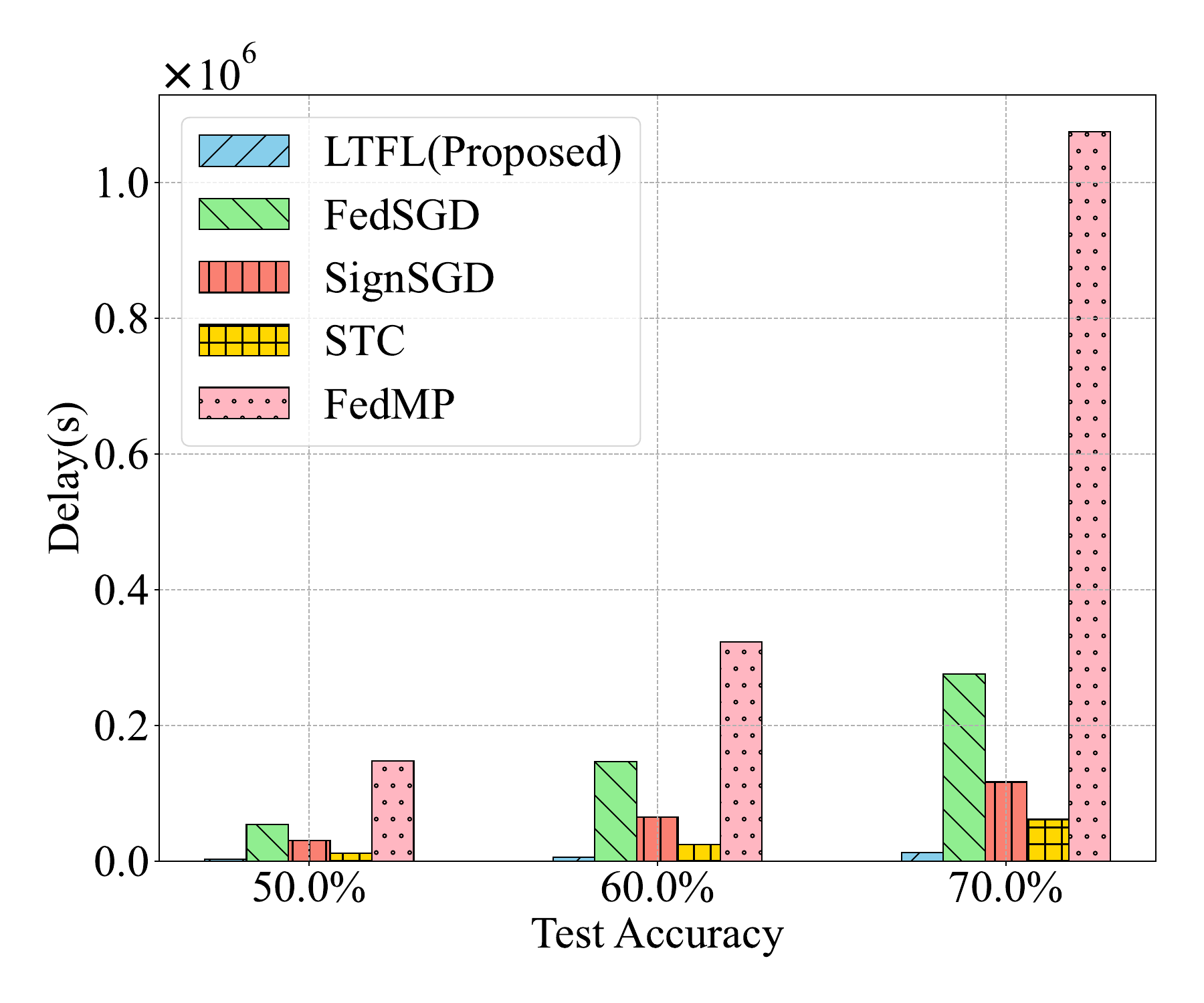}
  %\caption{}
  \end{minipage}
  }
  \centering
  \caption {{\color{black}Delay comparison of different schemes under different non-i.i.d. scenarios.}}\label{ComparisonNoniid_Delay}
\end{figure*}

\begin{figure*}[t]
  \centering
  \setlength{\subfigcapskip}{5pt}
  \setlength{\belowcaptionskip}{5pt}
  \subfigure[{\color{black}Energy consumption comparison (concentration parameter $\alpha=0.1$).}]{
  \begin{minipage}{0.32\linewidth}\label{ComparisonNoniid0.1_EnergyConsumption}
  \centering
  \includegraphics[height=4.3cm, width=5.8cm]{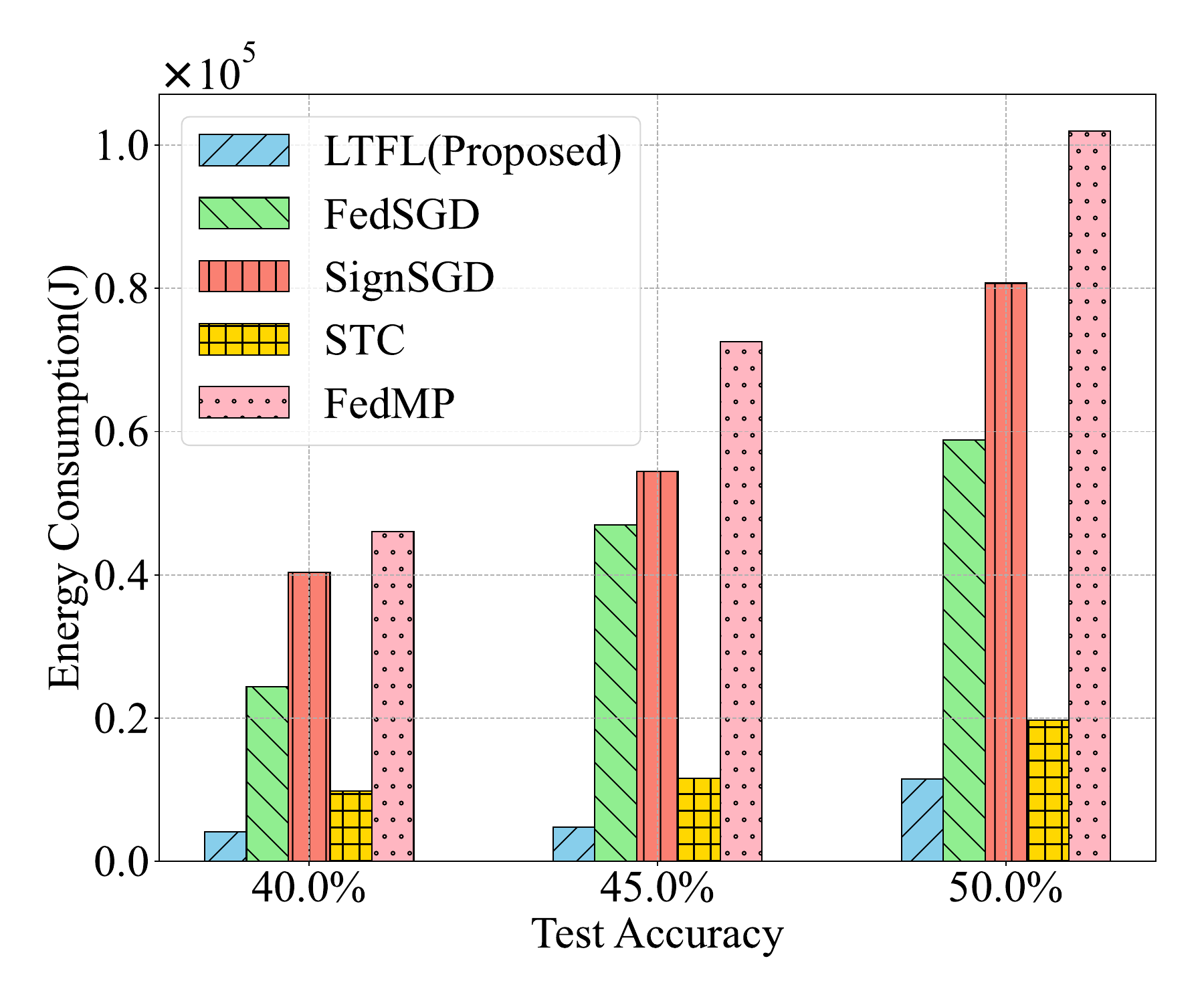}
  %\caption{fig2}
  \end{minipage}
  }
  \subfigure[{\color{black}Energy consumption comparison (concentration parameter $\alpha=0.5$).}]{
  \begin{minipage}{0.31\linewidth}\label{ComparisonNoniid0.5_EnergyConsumption}
  \centering
  \includegraphics[height=4.3cm, width=5.8cm]{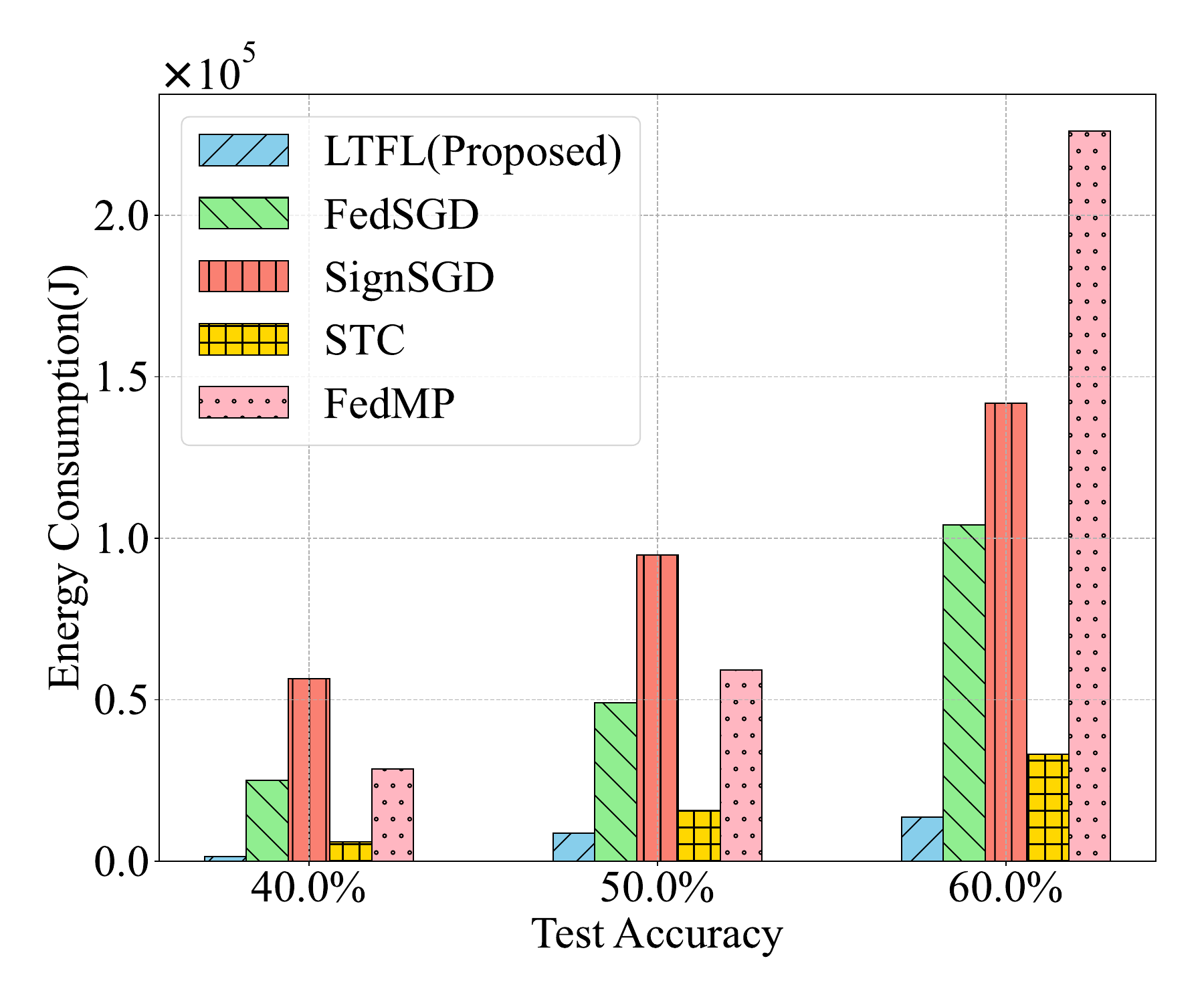}
  %\caption{fig2}
  \end{minipage}
  }
  \subfigure[{\color{black}Energy consumption comparison (concentration parameter $\alpha=0.9$).}]{
  \begin{minipage}{0.31\linewidth}\label{ComparisonNoniid0.9_EnergyConsumption}
  \centering
  \includegraphics[height=4.3cm, width=5.8cm]{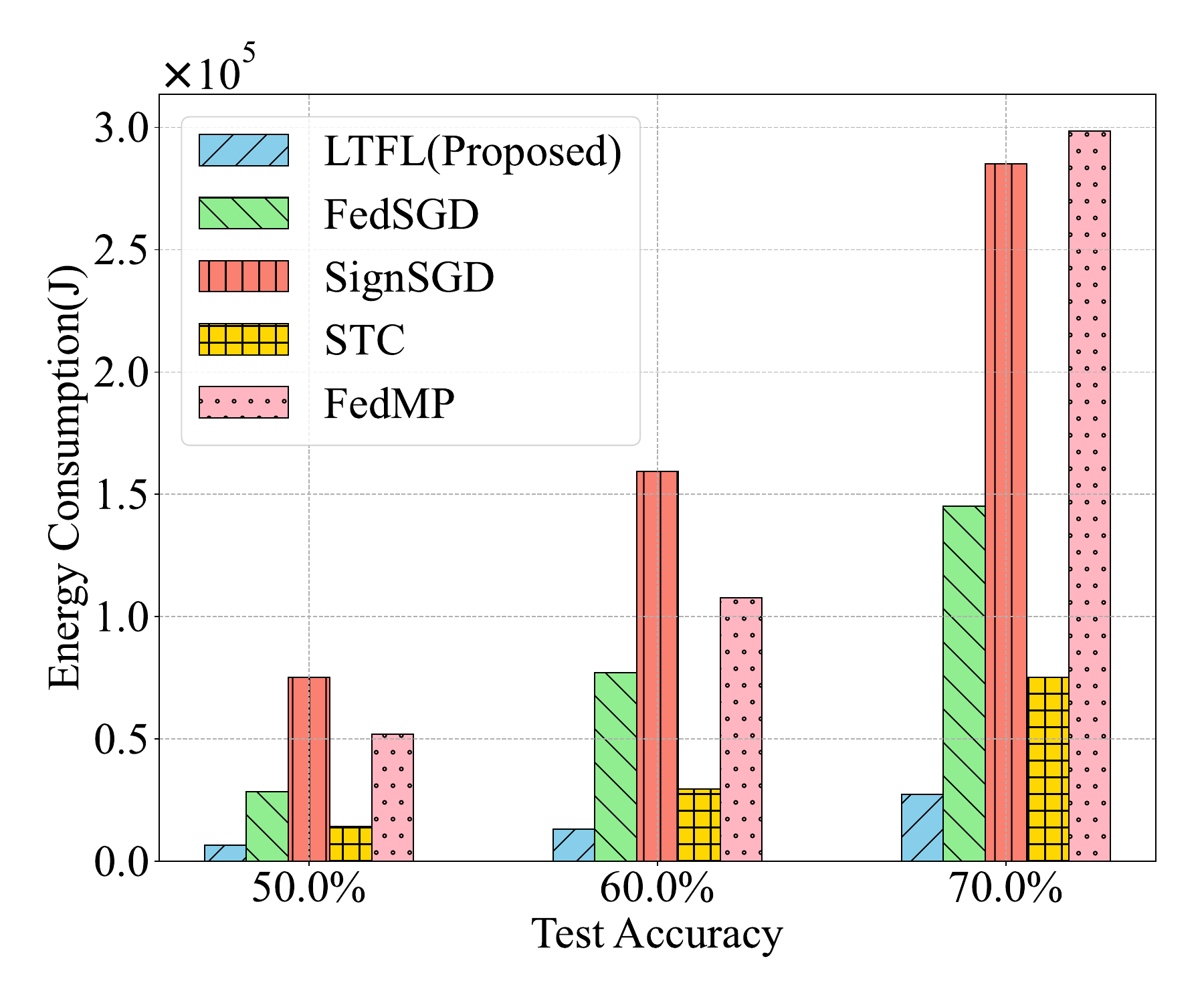}
  %\caption{}
  \end{minipage}
  }
  \centering
  \caption {{\color{black}Energy consumption comparison of different schemes under different non-i.i.d. scenarios.}}\label{ComparisonNoniid_EnergyConsumption}
\end{figure*}

\section{Conclusion}
{This paper presents a lightweight FL framework over wireless networks, incorporating model pruning, gradient quantization, and transmission power control. To minimize the training loss while satisfying delay and energy consumption constraints, we firstly established a closed-form expression for evaluating the impact of unreliable wireless transmission, model pruning error, and gradient quantization error on the FL's convergence. Furthermore, guided by the insightful theoretical results, we formulated a systematic optimization problem on jointly optimizing model pruning ratio, quantization level, and transmission power control. Then to solve the problem, a two-stage algorithm was proposed, with the first stage providing closed-form optimal solutions for model pruning ratio and gradient quantization level, and the second stage utilizing Bayesian optimization to determine the optimal transmission power. Experimental results on real-world data demonstrated its superiority over existing schemes.

Due to the high cost of experimentation, this study lacks validation in real-world environments. In future work, we aim to evaluate the proposed method in practical wireless network settings. Furthermore, our approach is particularly suitable for training large and complex models. As large models are expected to become indispensable tools in future human society, retraining them from scratch is often impractical, and task-specific fine-tuning has emerged as a promising alternative. Future research could explore how to adapt our method for efficient fine-tuning of large models, particularly in resource-constrained edge environments where communication and computation efficiency are critical.}
\bibliography{MPGDFLref}

\clearpage
\begin{appendices}
\setcounter{equation}{0}
\renewcommand\theequation{A.\arabic{equation}}
\section{Proof of Theorem \ref{Theorem1}} \label{ProofTheorem}
\begin{proof}
For simplicity, we use ${\boldsymbol{g}}^n({\boldsymbol{w}}^{n})$ to represent ${\boldsymbol{g}}^n({\boldsymbol{w}}^{n};\{p_u^n\})$; ${\boldsymbol{g}}^n(\{\hat{\boldsymbol{w}}_u^{n}\})$ to represent ${\boldsymbol{g}}^n\left(\{\hat{\boldsymbol{w}}_u^{n}\} ; \{\rho_u^n\}, \{p_u^n\} \right)$, and $\overline{\boldsymbol{g}}^n\left(\{\hat{\boldsymbol{w}}_u^{n}\} \right)$ represents $\overline{\boldsymbol{g}}^n\left(\{\hat{\boldsymbol{w}}_u^{n}\} ; \{\rho_u^n\}, \{\delta_u^n\}, \{p_u^n\}\right)$.

To facilitate the following analysis, we introduce an auxiliary variables as
 \begin{equation}\label{AuxiliraryVar1}
\boldsymbol{\lambda}_1^n=\nabla F\left(\boldsymbol{w}^n\right)-\overline{\boldsymbol{g}}^n(\{\hat{\boldsymbol{w}}_u^n\}),
\end{equation}
% and
% \begin{equation}\label{AuxiliraryVar2}
% \boldsymbol{\lambda}_2^n=\boldsymbol{\overline{g}}^n(\hat{\boldsymbol{w}})-\hat{\boldsymbol{g}}^n(\hat{\boldsymbol{w}}),
% \end{equation}
respectively. Hence, Eq. (\ref{ModelUpdateReality}) can be rewritten as
\begin{equation}
\label{UpdateWithAuxiliaryVariable}
\boldsymbol{w}^{n+1}=\boldsymbol{w}^n-\eta\left(\nabla F\left(\boldsymbol{w}^n\right)-\boldsymbol{\lambda}_1^n\right).
\end{equation}

Furthermore, we rewrite $F\left(\boldsymbol{w}^{n+1}\right)$ as the expression of its second-order Taylor expansion, which is given by
\begin{equation}\label{TaylorExpansion}
\begin{aligned}
F\left(\boldsymbol{w}^{n+1}\right)  \leq & F\left(\boldsymbol{w}^n\right)+\left(\nabla F\left(\boldsymbol{w}^n\right)\right)^{\top}\left(\boldsymbol{w}^{n+1}-\boldsymbol{w}^n\right)\\&+\frac{\nabla^2 F\left(\boldsymbol{w}^n\right)}{2}\left\|\boldsymbol{w}^{n+1}-\boldsymbol{w}^n\right\|^2 \\
  \stackrel{(a)}{\leq} & F\left(\boldsymbol{w}^n\right)+ \left(\nabla F\left(\boldsymbol{w}^n\right)\right)^{\top}\left(\boldsymbol{w}^{n+1}-\boldsymbol{w}^n\right)\\&+\frac{L}{2}\left\|\boldsymbol{w}^{n+1}-\boldsymbol{w}^n\right\|^2 \\
  \stackrel{(b)}{\leq}& F\left(\boldsymbol{w}^n\right)-\eta \left(\nabla F\left(\boldsymbol{w}^n\right)\right)^{\top} \left(\nabla F\left(\boldsymbol{w}^n\right)-\boldsymbol{\lambda}_1^n\right) \\& + \frac{L \eta^2}{2}\left\|\nabla F\left(\boldsymbol{w}^n\right)-\boldsymbol{\lambda}_1^n\right\|^2 ,\\
 % = & F\left(\boldsymbol{w}^n\right)-\frac {1} {2L} \left\|\nabla F\left(\boldsymbol{w}^n\right)\right\|^2+\frac{1}{2L}\|\boldsymbol{\lambda}_1^n\|,
\end{aligned}
\end{equation}
where inequality (a) stems from  Eq. (\ref{Assumtion3}), and inequality (b) is due to Eq. (\ref{UpdateWithAuxiliaryVariable}). Given learning rate $\eta=\frac{1}{L}$,  we have
\begin{equation}\label{ReFWithGivenLearningRate}
\begin{aligned}
\mathbb{E}& \left\{F\left(\boldsymbol{w}^{n+1}\right)\right\}   \\ \leq &  \mathbb{E} \left\{ F\left(\boldsymbol{w}^n\right)-\frac{1}{L}\left\|\nabla F\left(\boldsymbol{w}^n\right)\right\|^2+ \frac{1}{2 L}\left\|\nabla F\left(\boldsymbol{w}^n\right)\right\|^2  \right. \\
&\left. +\frac{1}{2 L}\left\|\boldsymbol{\lambda}_1^n\right\|^2 +\frac{1}{L}\left(\boldsymbol{\lambda}_1^n\right)^{\top} \nabla F\left(\boldsymbol{w}^n\right)-\frac{1}{L}\left(\boldsymbol{\lambda}_1^n\right)^{\top} \nabla F\left(\boldsymbol{w}^n\right) \right\} \\
 \leq & \mathbb{E} \left\{ F\left(\boldsymbol{w}^n\right)-\frac{1}{2 L}\left\|\nabla F\left(\boldsymbol{w}^n\right)\right\|^2+\frac{1}{2 L}\left\|\boldsymbol{\lambda}_1^n\right\|^2 \right\}. \\
 % \stackrel{(c)}{\leq}&\mathbb{E} \left\{  F\left(\boldsymbol{w}^n\right)\right\}+\frac{1}{2 L}\mathbb{E} \left\{\left\|\boldsymbol{\lambda}_1^n\right\|^2\right\},
\end{aligned}
\end{equation}
% where (c) is stemming from the fact that $\left\|\nabla F\left(\boldsymbol{w}^n\right)\right\|^2 \geq 0$. 
Due to Eq. (\ref{AuxiliraryVar1}) , we have
\begin{equation}\label{ReAuxiliaryVar1}
\begin{aligned}
\mathbb{E} \left\{\left\|\boldsymbol{\lambda}_1^n\right\|^2\right\}=&\mathbb{E} \left\{\left\|\nabla F\left(\boldsymbol{w}^n\right)-\bar{\boldsymbol{g}}(\{\hat{\boldsymbol{w}}_u^{n}\})\right\|^2\right\} \\
 =& \mathbb{E} \left\{\left\|\nabla F\left(\boldsymbol{w}^n\right)-\boldsymbol{g}\left(\boldsymbol{w}^n\right)+\boldsymbol{g}\left(\boldsymbol{w}^n\right)-{\boldsymbol{g}}(\{\hat{\boldsymbol{w}}_u^{n}\}) \right.\right.\\&+\left. \left. {\boldsymbol{g}}(\{\hat{\boldsymbol{w}}_u^{n}\})-\bar{\boldsymbol{g}}(\{\hat{\boldsymbol{w}}_u^{n}\})\right\|^2 \right\}\\
 \stackrel{(\text {c})}{\leq} 
 & 3 \left(\mathbb{E} \left\{ \left\|\nabla F\left(\boldsymbol{w}^n\right)-\boldsymbol{g}\left(\boldsymbol{w}^n\right)\right\|^2 \right\} \right .\\&+ \mathbb{E} \left\{\left\|\boldsymbol{g}\left(\boldsymbol{w}^n\right)-{\boldsymbol{g}}(\{\hat{\boldsymbol{w}}_u^{n}\})\right\|^2 \right\} \\& + \left. \mathbb{E} \left\{\left\|{\boldsymbol{g}}(\{\hat{\boldsymbol{w}}_u^{n}\})-\bar{\boldsymbol{g}}(\{\hat{\boldsymbol{w}}_u^{n}\})\right\|^2 \right\}\right).
\end{aligned}
\end{equation}
where inequality (c) arises from Cauchy-Buniakowsky-Schwarz inequality (i.e., \\$\left\|\sum_{i=1}^n a_i b_i \right\|^2 {\leq}\sum^n_{i=1}\left\|a_i\right\|^2 \sum_{i=1}^n\left\|b_i\right\|^2  $).
% Because of $\mathbb{E}\left\{\hat{\boldsymbol{g}}^n(\hat{\boldsymbol{w}})\right\}=\nabla F\left(\hat{\boldsymbol{w}}^n\right)$ referring to  Assumption \ref{CompleteAssumption5} and $\mathbb{E}\left\{\left(\nabla F\left(\boldsymbol{w}^n\right)\right)^{\top} \hat{\boldsymbol{g}}^n(\hat{\boldsymbol{w}})\right\}=E\left\{\nabla F\left(\boldsymbol{w}^n\right)\right\}^{\top} \mathbb{E}\left\{\hat{\boldsymbol{g}}^n(\hat{\boldsymbol{w}})\right\}$, we can obtain
% \begin{equation}\label{ReAuxiliaryVar1}\small
% \begin{aligned}
% \mathbb{E} \left\{\left\|\boldsymbol{\lambda}_1^n\right\|^2\right\}& \leq 2\mathbb{E} \left\{ \left\|\nabla F\left(\boldsymbol{w}^n\right)\right\|^2\right\}+2\mathbb{E} \left\{\left\|\hat{\boldsymbol{g}}^n(\hat{\boldsymbol{w}})\right\|^2\right\}\\&+2\mathbb{E} \left\{\left\|\boldsymbol{\lambda}_2^n\right\|^2 \right\} - 4\mathbb{E} \left\{\left(\nabla F\left(\boldsymbol{w}^n\right)\right)^{\top} \nabla F\left(\hat{\boldsymbol{w}}^n\right)\right\}\\
%  &\leq2\mathbb{E} \left\{\left\|\nabla F\left(\boldsymbol{w}^n\right)-\nabla F\left(\hat{\boldsymbol{w}}^n\right)\right\|^2\right\}+2\mathbb{E} \left\{\left\|\boldsymbol{\lambda}_2^n\right\|^2\right\}\\& +2\mathbb{E} \left\{\left\|\hat{\boldsymbol{g}^n}(\hat{\boldsymbol{w}})\right\|^2\right\} -2\mathbb{E} \left\{\left\|\nabla F\left(\hat{\boldsymbol{w}}^n\right)\right\|^2\right\} .
% \end{aligned}
% \end{equation}

In the following, we investigate the upper bounds of $\mathbb{E} \left\{ \left\|\nabla F\left(\boldsymbol{w}^n\right)-\boldsymbol{g}\left(\boldsymbol{w}^n\right)\right\|^2 \right\}$, $\mathbb{E} \left\{\left\|\boldsymbol{g}\left(\boldsymbol{w}^n\right)-{\boldsymbol{g}}(\{\hat{\boldsymbol{w}}^n_u\})\right\|^2 \right\}$, and$\mathbb{E} \left\{\left\|{\boldsymbol{g}}(\{\hat{\boldsymbol{w}}^n_u\})-\bar{\boldsymbol{g}}(\{\hat{\boldsymbol{w}}^n_u\})\right\|^2 \right\}$, respectively. Firstly,
\begin{equation}\label{UpperBoundofPower1}
\begin{aligned}
& \mathbb{E}\left\{\|\nabla F({\boldsymbol{w}}^n) - {\boldsymbol{g}}({\boldsymbol{w}}^n)\|^2\right\} 
\\& = \mathbb{E}\left\{\left\|\frac{\sum_{u=1}^{U} N_u  \nabla F_u({\boldsymbol{w}}^n)}{N} - \frac{\sum_{u=1}^{U} N_u \alpha_u^n  \nabla F_u({\boldsymbol{w}}^n)}{\sum_{u=1}^{U} N_u \alpha_u^n}\right\|^2\right\} 
\\& = \mathbb{E}\left\{\left\|\frac{\sum_{u \in \mathcal{U}_1} N_u  \nabla F_u({\boldsymbol{w}}^n) + \sum_{u \in \mathcal{U}_2} N_u  \nabla F_u({\boldsymbol{w}}^n)}{N} \right. \right. - 
\\ & \left. \left. \frac{\sum_{u=1}^{U} N_u \alpha_u^n  \nabla F_u({\boldsymbol{w}}^n)}{\sum_{u=1}^{U} N_u \alpha_u^n}\right\|^2\right\} 
\\& = \mathbb{E}\left\{\left\|\frac{(\sum_{u=1}^{U} N_u \alpha_u^n - N)(\sum_{u \in \mathcal{U}_1} N_u \nabla F_u({\boldsymbol{w}}^n))}{N \sum_{u=1}^{U} N_u \alpha_u^n} \right. \right. + 
\\ & \left. \left. \frac{\sum_{u \in \mathcal{U}_2} N_u \nabla F_u({\boldsymbol{w}}^n)}{N}\right\|^2\right\} 
\\& \stackrel{(\text {d})}{\leq}\mathbb{E}\left\{\frac{(N - \sum_{u=1}^{U} N_u \alpha_u^n )(\sum_{u \in \mathcal{U}_1} N_u \left\|\nabla F_u({\boldsymbol{w}}^n)\right\|)}{N \sum_{u=1}^{U} N_u \alpha_u^n} \right.  + 
\\ &  \left. \frac{\sum_{u \in \mathcal{U}_2} N_u \left\|\nabla F_u({\boldsymbol{w}}^n)\right\|}{N}\right\}^2
\end{aligned} 
\end{equation}
where $\mathcal{U}_1 = \{\alpha_u^n = 1 | u \in \mathcal{U}\}$ is the set of users that correctly transmit their local FL models to the BS and $\mathcal{U}_2 = \{u \in \mathcal{U} | u \notin \mathcal{U}_1\}$. The inequality equation
in (d) is achieved by the triangle-inequality. Since $\left\|\nabla f\left(\boldsymbol{w}^{n}; \boldsymbol{x}_{u,i}, \boldsymbol{y}_{u,i}\right)\right\| \leq \sqrt{\upsilon_{1}+\upsilon_{2}\left\|\nabla F\left(\boldsymbol{w}^{n}\right)\right\|^{2}}$, 
we have
\begin{equation}
    \sum_{u \in \mathcal{U}_1} N_u \left\|\nabla F_u({\boldsymbol{w}}^n)\right\| \leq \sqrt{\upsilon_{1}+\upsilon_{2}\|\nabla F(\boldsymbol{w}^{n})\|^{2}}\sum _{u=1}^U N_u\alpha_u^n
\end{equation}
and 
\begin{equation}
    \sum_{u \in \mathcal{U}_2} N_u \left\|\nabla F_u({\boldsymbol{w}}^n)\right\| \leq \sqrt{\upsilon_{1}+\upsilon_{2}\|\nabla F(\boldsymbol{w}^{n})\|^{2}}\left(N - \sum _{u=1}^U N_u\alpha_u^n\right).
\end{equation}
Thus, Eq. \ref{UpperBoundofPower1} can be simplified as follows 
\begin{equation}\label{UpperBoundOfPower2}
\begin{aligned}
& \mathbb{E}\left\{\|\nabla F({\boldsymbol{w}}^n) - {\boldsymbol{g}}({\boldsymbol{w}}^n)\|^2\right\} \\&  
\stackrel{\text {  }}{\leq} \frac{4}{N^{2}} \mathbb{E}\left\{N-\sum_{u=1}^{U} N_u \alpha_{u}^n\right\}^{2}\left(\upsilon_{1}+\upsilon_{2}\|\nabla F(\boldsymbol{w}^{n})\|^{2}\right) \\& \stackrel{\text { (e) }}\leq \frac{4}{N} \mathbb{E}\left\{N-\sum_{u=1}^{U} N_u \alpha_{u}^n\right\}\left(\upsilon_{1}+\upsilon_{2}\|\nabla F(\boldsymbol{w}^{n})\|^{2}\right) \\& = \frac{4}{N}\left(\upsilon_{1}+\upsilon_{2}\|\nabla F(\boldsymbol{w}^{n})\|^{2}\right) \sum_{u=1}^U N_u q_u^n,
\end{aligned}
\end{equation}
where inequality (e) is due to the fact that $N \geq N-\sum_{u=1}^{U} N_u \alpha_{u} \geq 0$.

Secondly, the upper bound of $\mathbb{E}\left(\|\boldsymbol{g}({\boldsymbol{w}}^n) - {\boldsymbol{g}}(\{\hat{\boldsymbol{w}}^n_u\})\|^2\right)$ can be represented as
\begin{equation}\label{UpperBoundOfPruning}
\begin{aligned}
& \mathbb{E} \left\{ \left\| {\boldsymbol{g}}(\boldsymbol{w}^n) - {\boldsymbol{g}}(\{\hat{\boldsymbol{w}}^n_u\}) \right\|^2 \right\}
\\&= \mathbb{E} \left\{ \left\| \frac{\sum_{u=1}^{U} N_u \alpha_u^n (\nabla F_u(\boldsymbol{w}^n) - \nabla F_u(\hat{\boldsymbol{w}}_u^n))}{\sum_{u=1}^{U}N_u\alpha_u^n} \right\|^2 \right\}
\\& \stackrel{(\text{f})}{\leq} \mathbb{E} \left\{ \frac{\left(\sum_{u=1}^{U} \|N_u\alpha_u^n\|^2\right) \sum_{u=1}^{U} \left\| \nabla F_u(\boldsymbol{w}^n) - \nabla F_u(\hat{\boldsymbol{w}}_u^n) \right\|^2}{\left\|\sum_{u=1}^{U}N_u\alpha_u^n\right\|^2} \right\} 
\\& \stackrel{(\text {g})}{\leq} \mathbb{E} \left\{ \frac{\sum_{u=1}^{U}\| N_u\alpha_u^n\|^2}{\left\|\sum_{u=1}^{U}N_u\alpha_u^n\right\|^2} \cdot \sum_{u=1}^{U} L^2 \left\| \boldsymbol{w}^n - \hat{\boldsymbol{w}}_u^n \right\|^2 \right\} 
\\& \stackrel{(\text {h})}{\leq} \mathbb{E} \left\{ \frac{\sum_{u=1}^{U}\| N_u\alpha_u^n\|^2}{\sum_{u=1}^{U}\left\|N_u\alpha_u^n\right\|^2} \cdot \sum_{u=1}^{U} L^2 \left\| \boldsymbol{w}^n - \hat{\boldsymbol{w}}_u^n \right\|^2 \right\} 
\\& \stackrel{(\text {i})}{\leq}  L^2 D^2   \cdot \sum_{u=1}^{U} \rho_u \triangleq \Gamma_1^n 
\end{aligned}
\end{equation}
where inequality (f) is because of Cauchy-Buniakowsky-Schwarz inequality, while inequality (g) is from the Assumption \ref{CompleteAssumption1}. Inequality (h) is because of the fact that $\sum_{i=1}^n a_i^2 \leq\left(\sum_{i=1}^n a_i\right)^2$ and inequality (i) stems from Lemma \ref{ModelPruningLemma}.
Thirdly, the upper bound of  $ \mathbb{E}\left\{\left\|{\boldsymbol{g}}(\{\hat{\boldsymbol{w}}^n_u\})-\overline{\boldsymbol{g}}(\{\hat{\boldsymbol{w}}^n_u\})\right\|^2\right\}$ can be represented as 
\begin{equation}\label{UpperBoundOfQuantization}\small
\begin{aligned}
& \mathbb{E}\left\{\left\|{\boldsymbol{g}}(\{\hat{\boldsymbol{w}}^n_u\})-\overline{\boldsymbol{g}}(\{\hat{\boldsymbol{w}}^n_u\})\right\|^2\right\}\\
& =\mathbb{E}\left\{\left\|\frac{\sum_{u=1}^U N_u \alpha_u^n \left({\boldsymbol{g}}_u\left(\hat{\boldsymbol{w}}_u^n\right) - \mathcal{Q}\left({\boldsymbol{g}}_u\left(\hat{\boldsymbol{w}}_u^n\right)\right)\right)}{\sum_{u=1}^U N_u \alpha_u^n}\right\|^2\right\} \\
& \stackrel{\text { (j) }}{\leq} \mathbb{E}\!\left\{\!\frac{\left(\sum\limits_{\!u=1}^{\! U}\!\left\|N_u \alpha_u^n \right\|^2 \right)\!\left(\sum\limits_{u=1}^U \! \left\|{\boldsymbol{g}}_u\left(\hat{\boldsymbol{w}}_u^n\right) - \mathcal{Q}\left({\boldsymbol{g}}_u\left(\hat{\boldsymbol{w}}_u^n\right)\right)\right\|^2\!\right)}{\left\|\sum_{u=1}^U N_u \alpha_u^n \right\|^2}\!\right\}\\
& \stackrel{\text {  }}{\leq}  \sum _ { u = 1}^{U} \mathbb{E}\left\{\left\|{\boldsymbol{g}}_u\left(\hat{\boldsymbol{w}}_u^n\right) - \mathcal{Q}\left({\boldsymbol{g}}_u\left(\hat{\boldsymbol{w}}_u^n\right)\right)\right\|^2\right\}\\
& \stackrel{\text { (k) }}{\leq}   \sum _ { u = 1 }^{U}\! \frac{\sum_{v=1}^V\left(\bar{g}_{u, v}^n-\underline{g}_{u, v}^n\right)^2}{4\left(2^{\delta_{u}^n}-1\right)^2} \!\triangleq \Gamma_2^n,
\end{aligned}
\end{equation}
where inequality (j) is due to Cauchy-Buniakowsky-Schwarz inequality, while inequality (k) stems from Lemma \ref{QuantizationLemma}. 

Therefore, substituting Eq. (\ref{UpperBoundOfPruning}), Eq. (\ref{UpperBoundOfPower2}), and Eq. (\ref{UpperBoundOfQuantization}) into Eq. (\ref{ReAuxiliaryVar1}), we can obtain
\begin{equation}\label{UpperBoundOfAuxiliaryVar1}
\begin{aligned}
\mathbb{E}&\left\{ \left\|\boldsymbol{\lambda}_1^n\right\|^2 \right\} \leq3 \Gamma_2^n  +3\Gamma_1^n \\ &+ \frac{12}{N}\left(\upsilon_{1}+\upsilon_{2}\|\nabla F(\boldsymbol{w}^{n})\|^{2}\right) \sum_{u=1}^U N_u q_u^n.
\end{aligned}
\end{equation}
Furthermore, let we substitute Eq. (\ref{UpperBoundOfAuxiliaryVar1}) into Eq. (\ref{ReFWithGivenLearningRate}), we have
\begin{equation}\label{ReReFWithGivenLearningRate}
\begin{aligned}
& \mathbb{E}\left\{F(\boldsymbol{w}^{n+1})\right\} \leq \mathbb{E}\left\{F(\boldsymbol{w}^n)\right\} + \frac{6\upsilon_1}{LN} \sum_{u=1}^U N_u q_u^n - 
\\& \frac{1}{2L}(1-\frac{12\upsilon_2}{N} \cdot \sum_{u=1}^U N_u q_u^n) \|\nabla F(\boldsymbol{w}^n)\|^2 + \frac{3}{2L}(\Gamma_1^n+\Gamma_2^n)\\\end{aligned}
\end{equation}
Rearranging Eq. (\ref{ReReFWithGivenLearningRate}), we can obtain
\begin{equation}
\begin{aligned}
C^n \mathbb{E}\left\{\left\|\nabla F\left({\boldsymbol{w}}^n\right)\right\|^2 \right\} \leq  & 2 L\mathbb{E}\left\{F\left(\boldsymbol{w}^n\right)-F\left(\boldsymbol{w}^{n+1}\right)\right\} \\& + \frac{12\upsilon_1}{N} \sum_{u=1}^U N_u q_u^n+3\Gamma_1^n+3\Gamma_2^n,
\end{aligned}
\end{equation}
where $C^n = 1 - \frac{12\upsilon_2}{N}\sum_{u=1}^UN_uq_u^n \in (1-12\upsilon_2,1)$. 
Summing up the above terms from $n = 0$ to $\Omega$ and dividing both sides by the total number of iterations, we can obtain
\begin{equation}\label{ConvergenceRate}
\begin{aligned}
&\frac{1-12\upsilon_2}{\Omega+1}\sum_{n=0}^{\Omega}\mathbb{E}\left\{\left\|\nabla F\left({\boldsymbol{w}}^n\right)\right\|^2\right\} \\&\leq \frac{2L}{\Omega+1}\mathbb{E}\left\{F\left(\boldsymbol{w}^0\right)-F\left(\boldsymbol{w}^{*}\right)\right\}\\&+\sum_{n=0}^{\Omega}\left(\frac{12\upsilon_1}{N} \sum_{u=1}^U N_u q_u^n\right)  +\frac{3}{\Omega+1}\sum_{n=0}^{\Omega}\Gamma_1^n+\frac{3}{\Omega+1} \sum_{n=0}^{\Omega}\Gamma_2^n.
\end{aligned}
\end{equation}
% Thus, we obtain the average $\ell_2$-norm of the gradients as
% \begin{equation}\label{ConvergenceRate}
% \begin{aligned}
%  \frac{1}{\Omega+1}\sum_{n=0}^{\Omega}\mathbb{E}\left\{\left\|\nabla F\left(\hat{\boldsymbol{w}}^n\right)\right\|^2\right\} &\leq \frac{1}{\Omega+1}\sum_{n=0}^{\Omega}\mathbb{E}\left\{\left\|\nabla F\left(\hat{\boldsymbol{w}}^n\right)\right\|^2\right\}\\& \leq
% \frac{L}{\Omega+1}\mathbb{E}\left\{F\left(\boldsymbol{w}^0\right)-F\left(\boldsymbol{w}^{*}\right)\right\}\\&+U A^2  +\frac{L^2 D^2}{\Omega+1}\sum_{n=0}^{\Omega}\Gamma_2^n\\&+\frac{1}{\Omega+1} \sum_{n=0}^{\Omega}\Gamma_1^n,
% \end{aligned}
% \end{equation}
where $\boldsymbol{w}^{*}$ is the optimal model.
Let $\Gamma^n= \frac {1}{1-12\upsilon_2}\left(3 \Gamma_2^n+3\Gamma_1^n+\frac{12\upsilon_1}{N} \sum_{u=1}^U N_u q_u^n\right)$, and then Eq. (\ref{ConvergenceRate}) can be rewritten as
\begin{equation}\label{ReConvergenceRate}
\begin{aligned}
 &\frac{1}{\Omega+1}\sum_{n=0}^{\Omega}\mathbb{E}\left\{\left\|\nabla F\left({\boldsymbol{w}}^n\right)\right\|^2\right\} \\& \leq \frac{2L}{(1-12\upsilon_2)(\Omega+1)}\mathbb{E}\left\{F\left(\boldsymbol{w}^0\right)-F\left(\boldsymbol{w}^{*}\right)\right\}\\& + \frac{1}{(\Omega+1)} \sum_{n=0}^{\Omega} \Gamma^n.
\end{aligned}
\end{equation}
This completes the proof.
\end{proof}

\setcounter{equation}{0}
\renewcommand\theequation{B.\arabic{equation}}
\section{Proof of Theorem \ref{theorem2}} \label{ProofTheorem2}
Evidently, the objective function shown in Eq. (\ref{OptimizationProblem}) is monotonically increasing, which indicates that $\rho_u^n,\forall u, \forall n,$ should be the minimum to minimize the convergence error gap. The minimum of  $\rho_u^n, \forall u, \forall n,$  is mainly determined by constraints Eq. (\ref{Tmax}) and Eq. (\ref{Emax}). We expand the constraint in Eq. (\ref{Tmax}) as
\begin{equation}
\label{TmaxSubP1_Exp}
\frac{{N}_{u}c_0\left(1-\rho_u^n\right)}{f_u^n}+\frac{\tilde{\delta}_u^n\left(1-\rho_u^n\right)}{R_u^{\mathrm{n}}\left( p^{n}_u\right) }  \leq T^{\rm max}-s.
\end{equation}
Since ${N}_{u}$, $ c_0 $, $f_u^n$, $\tilde{\delta}_u^n$, and $R_u^{\mathrm{n}}\left( p^{n}_u\right) $ are positive, Eq. (\ref{TmaxSubP1_Exp}) can be rewritten as
\begin{equation}
\label{TmaxSubP1_Exp2}
\rho_u^n \geq 1- \frac{T^{\rm max}-s }{\frac{{N}_{u}c_0}{f_u^n}+\frac{\tilde{\delta}_u^n}{R_u^{\mathrm{n}}\left( p^{n}_u\right) }}.
\end{equation}
Furthermore, we expand constraint in Eq. (\ref{Emax}) as
\begin{equation}
\label{EmaxSubP1_Exp}
k(f_u^n)^{\sigma-1}{N}_{u}c_0\left(1-\rho_u^n\right)+\frac{p_u^n\tilde{\delta}_u^n\left(1-\rho_u^n\right)}{R_u^{\mathrm{n}}\left( p^{n}_u\right)}\leq E^{\rm max}.
\end{equation}
Similarly, Eq. (\ref{EmaxSubP1_Exp}) can be rewritten as
\begin{equation}
\label{EmaxSubP1_Exp2}
\rho_u^n \geq 1- \frac{E^{\rm max}} {k(f_u^n)^{\sigma-1}{N}_{u}c_0+p_u^n\frac{\tilde{\delta}_u^n}{R_u^{\mathrm{n}}\left( p^{n}_u\right)}}.
\end{equation}
Therefore, according to Eq. (\ref{TmaxSubP1_Exp2}),  Eq. (\ref{EmaxSubP1_Exp2}), and Eq. (\ref{ModelPruningRatioCons}),  $(\rho_u^{n})^*, \forall u, \forall n$ should satisfy
\begin{equation}
(\rho_u^{n})^*= \min\left\{\rho^{\rm max}, \left(1- \min\left\{ \Phi_1,\Phi_2     \right\} \right)^{+}\right\},
\end{equation}
where $(x)^{+}=\max\left\{0,x\right\}$, and  $\Phi_1$ and $\Phi_2$ can be given by
\begin{equation}
\Phi_1= \frac{T^{\rm max}-s}{\frac{{N}_{u}c_0}{f_u^n}+\frac{\tilde{\delta}^n_u}{R_u^{\mathrm{n}}\left( p^{n}_u\right)}} ,
\end{equation}
and
\begin{equation}
\Phi_2= \frac{E^{\rm max}}{k(f_u^n)^{\sigma-1}{N}_{u}c_0+\frac{p_u^n\tilde{\delta}_u^n}{R_u^{\mathrm{n}}\left( p^{n}_u\right)}},
\end{equation}
respectively.

\setcounter{equation}{0}
\renewcommand\theequation{C.\arabic{equation}}
\section{Proof of Lemma \ref{Lemma3}} \label{Lemma3Proof}
\begin{proof}
Define
\begin{equation}
\begin{aligned}
f_1(\tilde{\delta}_u^n) =   \frac{\sum_{v=1}^V\left(\overline{g}_{u, v}^n-\underline{g}_{u, v}^n\right)^2}{4\left(2^{\tilde{\delta}_u^n}-1\right)^2}.
\end{aligned}
\end{equation}
The first term of the objective function of problem $\mathcal{P}3$ is  the sum of  $f_1(\tilde{\delta}_u^n)$  from $u=1$ to $u=U$, and   $f_1(\tilde{\delta}_{u}^n)$ and $f_1(\tilde{\delta}_{\dot{u}}^n)$ (${\dot{u}}\neq{{u}}$) are independent. Therefore, we can obtain the monotonicity of the main term of the objective function by discussing $f_1(\tilde{\delta}_u^n)$.
The first derivative of $f_1(\tilde{\delta}_u^n) $ with respective to $\tilde{\delta}_u^n$ is given by
\begin{equation}
\begin{aligned}
\dot{f}_1(\tilde{\delta}_u^n) &= -\frac{2\ln2\sum_{v=1}^V(\overline{g}_{u, v}^n-\underline{g}_{u, v}^n)^2}{(2^{\delta_u^n}-1)^2}.
\end{aligned}
\end{equation}
Since $\sum_{v=1}^V(\overline{g}_{u, v}^n-\underline{g}_{u, v}^n)^2$ and ${(2^{\delta_u^n}-1)^2}$ are always positive, we have
\begin{equation}
\begin{aligned}
\dot{f}_1(\tilde{\delta}_u^n) & \leq 0,
\end{aligned}
\end{equation}
which indicates that  $f_1(\tilde{\delta}_u^n)$ is  monotonically decreasing. Since the first term of the objective function of  problem $\mathcal{P}3$ is the sum of  $f_1(\tilde{\delta}_u^n)$  from $u=1$ to $u=U$, and other terms are constants, problem $\mathcal{P}3$ is monotonically decreasing with respect to $\boldsymbol{\delta}^{n}$.
\end{proof}

\setcounter{equation}{0}
\renewcommand\theequation{D.\arabic{equation}}
\section{Proof of Theorem \ref{theorem3}} \label{Theorem3Proof}
\begin{proof}
Due to the objective function of problem $\mathcal{P}3$ is monotonically decreasing,  $\delta_u^n,\forall u, \forall n,$ should be the maximum to minimize the convergence error gap. The maximum of  $\delta_u^n, \forall u, \forall n,$  is determined by the constraints in Eq. (\ref{TmaxSub2}) and Eq. (\ref{EmaxSub2}). Since $\left(1-  (\rho_u^n)^*\right)$ and ${B_u^{\rm UL}\log_2\left(1+\frac{p_u^nh_u^n}{I_u^n+B_u^{\rm UL}N_0}\right) } $ are positive, Eq. (\ref{TmaxSub2}) can be rewritten as
\begin{equation}
\begin{aligned}
\label{TmaxSub2ExpanisionRe}
\tilde{\delta}_u^n   \leq  \frac{\left(T^{\rm max}-s-\frac{{N}_{u}c_0\left(1-  (\rho_u^n)^*\right)}{f_u^n}\right)R_u^{\mathrm{n}}\left( p^{n}_u\right) }{1-  (\rho_u^n)^*}.
\end{aligned}
\end{equation}
Similarly, due to $p_u^n \geq 0$,  Eq. (\ref{EmaxSub2}) can be rewritten as
\begin{equation}
\label{EmaxSub2ExpanisionRe}
\tilde{\delta}_u^n\leq\frac{ \left(E_u^{\rm max}-k(f_u^n)^{\sigma-1}{N}_{u}c_0\left(1-(\rho_u^n)^*\right)\right)R_u^{\mathrm{n}}\left( p^{n}_u\right)}{p_u^n\left(1-(\rho_u^n)^*\right)} .
\end{equation}
We define
\begin{equation}
\begin{aligned}
\label{TmaxSub2ExpanisionRePhi}
\Phi_3 = \frac{\left(T^{\rm max}-s-\frac{{N}_{u}c_0\left(1-  (\rho_u^n)^*\right)}{f_u^n}\right)R_u^{\mathrm{n}}\left( p^{n}_u\right) }{1-  (\rho_u^n)^*}
\end{aligned}
\end{equation}
and
\begin{equation}
\label{EmaxSub2ExpanisionRePhi}
\Phi_4 = \frac{ \left(E_u^{\rm max}-k(f_u^n)^{\sigma-1}{N}_{u}c_0\left(1-(\rho_u^n)^*\right)\right)R_u^{\mathrm{n}}\left( p^{n}_u\right)}{p_u^n\left(1-(\rho_u^n)^*\right)}.
\end{equation}
Considering the relationship between $\tilde{\delta}_u^n$ and $\delta_u^n$ referring to Eq. (\ref{QuantizationRelationship}),  and the constraint in Eq. (\ref{QuantizationLevel_upperbound}), the optimal quantization strategy $(\delta_u^n)^*, \forall u, \forall n$ is given by
\begin{equation}
\label{OptimalQuantizationStrategyProof}
(\delta_u^n)^* =\left\lceil \min\left\{ \frac{\Phi_3-\xi}{V},\frac{\Phi_4-\xi}{V} , \delta^{\rm max} \right\}\right\rceil,
\end{equation}
where  $\lceil x \rceil$  represents the minimum positive integer that is less than or equal to $x$.
\end{proof}

\end{appendices}

\end{sloppypar}
\end{document}